\setlist[enumerate]{leftmargin=*}
\tikzstyle{startstop} = [rectangle, rounded corners, minimum width=3cm, minimum height=1cm, text centered, draw=black, fill=red!30]
\tikzstyle{process}   = [rectangle, minimum width=3cm, minimum height=1cm, text centered, draw=black, fill=orange!30]
\tikzstyle{decision}  = [diamond, aspect=2, text centered, draw=black, fill=green!30]
\tikzstyle{arrow}     = [thick,->,>=stealth]
\newtheorem{lemma}{Lemma}
\newtheorem{proposition}{Proposition}
\newtheorem{definition}{Definition}
\newtheorem{assumption}{Assumption}
\newtheorem{remark}{Remark}
\newcommand{\anon}{1}
\newif\ifblind
\newcommand{\blindval}[3]{%
  \ifblind
    \expandafter\newcommand\csname #1\endcsname{#3}%
  \else
    \expandafter\newcommand\csname #1\endcsname{#2}%
  \fi
}
\date{}
\begin{document}

\if1\anon
{\title{ Randomization inference for stepped-wedge designs with noncompliance with application to a palliative care pragmatic trial}
  \author{Jeffrey Zhang \\ Data Science Institute, University of Chicago \\
    Zhe Chen \\ Center for Clinical Trials Innovation, Department of Biostatistics, \\ Epidemiology and Informatics, University of Pennsylvania \\ 
    Xinyuan Chen \\ Department of Mathematics and Statistics, Mississippi State University \\ 
    Katherine R. Courtright \\ Department of Medicine, University of Pennsylvania \\ 
    Scott Halpern \\ Department of Medicine, University of Pennsylvania \\ 
    Michael O. Harhay \\ Center for Clinical Trials Innovation, Department of Biostatistics, \\ Epidemiology and Informatics, University of Pennsylvania \\
    Dylan S. Small \\ Department of Statistics and Data Science, University of Pennsylvania \\ 
    Fan Li\\ Department of Biostatistics, Yale School of Public Health\\}
  \maketitle
} \fi

\if0\anon
{
  \bigskip
  \bigskip
  \bigskip
  \begin{center}
    {\LARGE Randomization inference for stepped-wedge designs with noncompliance with application to a palliative care pragmatic trial}
\end{center}
  \medskip
} \fi

\begin{abstract}
    While palliative care is increasingly commonly delivered to hospitalized patients with serious illnesses, few studies have estimated its causal effects. \citet{redaps_design} adopted a stepped-wedge cluster-randomized design to assess the effect of palliative care on a patient-centered outcome. The randomized intervention was a nudge to administer palliative care but did not guarantee receipt of palliative care, resulting in noncompliance. A subsequent analysis using methods suited for standard trial designs produced statistically anomalous results, as an intention-to-treat analysis found no effect while an instrumental variable analysis did \citep{Courtright2024}. This highlights the need for a more principled approach to address noncompliance in stepped-wedge designs. We provide a formal causal inference framework for the stepped-wedge design with noncompliance by introducing a relevant causal estimand and corresponding estimators and inferential procedures. Through numerical studies, we compare an array of estimators and provide practical guidance in choosing an analysis method. Finally, we apply our recommended methods to reanalyze the palliative care pragmatic trial, producing point estimates suggesting a larger effect than the original analysis, but intervals that did not reach statistical significance.
\end{abstract}

\noindent
{\it Keywords:} causal inference, cluster-randomized trial, estimands, instrumental variable, noncompliance, stepped-wedge design
\vfill

\section{Introduction}\label{sec: X trial}
\subsection{The Randomized Evaluation of Default Access to Palliative Services \\ (\redaps) trial}
Palliative care, a medical approach that prioritizes quality of life, is widely advocated and increasingly available for patients suffering from serious illness. Until recently, there have been few large-scale randomized trials that evaluated the effectiveness of palliative care. In an attempt to fill this gap, \citet{redaps_design} conducted a pragmatic, cluster-randomized stepped-wedged trial to determine whether ordering palliative care by default could improve outcomes among seriously ill patients. The stepped-wedge cluster-randomized design is a one-directional crossover design where clusters are randomized to transition to the intervention condition at different time periods, and remain there until the end of the study. The \redaps trial was conducted from March 2016 to November 2018 and consisted of around 24,000 patients aged 65 or older with advanced diseases from 11 hospitals. The primary outcome of interest was hospital length of stay, measured in log hours. When a hospital transitioned to intervention, all of its patients were ordered palliative care by default, though the patient's clinician had the ability to override the default order, in which case palliative care was not administered. Several baseline covariates were also measured. Based on communication with domain experts, some particularly important baseline covariates included age, Elixhauser score, eligible diagnosis, ICU vs. ward, days between repeated enrollments, and admission source. For details regarding the protocol, institutional review board approval, and informed consent, see \citet{Courtright2024}.

One of the primary analyses conducted by \citet{Courtright2024} was on the intention-to-treat (ITT) basis, which compared the outcomes between the randomized and the default palliative care group and the control group. A mixed-effects linear regression did not find a significant change in hospital length of stay. The point estimate was $-0.0053$, with a 95\% confidence interval of $[-0.0351, 0.0253]$. Moreover, receipt of the randomized intervention to default palliative care was encouraged but did not guarantee receipt of palliative care; in other words, there was noncompliance at the individual level. As is common in randomized trials subject to noncompliance, \citet{Courtright2024} also conducted an instrumental variable analysis, treating randomization to default order as an instrument and whether palliative care was administered as the primary treatment of interest. Since there were no established analysis methods specifically designed for noncompliance in stepped-wedge designs, \citet{Courtright2024} used two-stage least squares methods with time period as a fixed effect and cluster-robust variance estimation. Surprisingly, in contrast to the intention-to-treat analysis, the two-stage least squares analyses respectively found statistically significant effects $-0.096, 95\% \text{ CI } [-0.175,-0.016]$ of the default order intervention.

To understand why this result is peculiar, we briefly review the standard instrumental variable setting. In the canonical instrumental variable context without covariates, where $Y$ denotes outcome, $Z$ is the instrument, and $D$ the treatment, the Wald estimand corresponds to 
\begin{align*}
    \frac{\tau_Y}{\tau_D} \equiv \frac{ \text{average effect of $Z$ on $Y$}}{\text{average effect of $Z$ on $D$}}.
\end{align*}
It would be undesirable for a statistical test for $H_{0,\text{ITT}}: \tau_Y = 0$ to fail to reject while a test for $H_{0,\text{Wald}}: \tau_Y/\tau_D = 0$ rejects, since whenever $\tau_D \neq 0$ the latter null holds if and only if the former does---the two nulls share the numerator estimand $\tau_Y$ and are thus logically equivalent \citep{greevy2004randomization}. In \cite{Courtright2024}, the statistical null hypotheses are not explicitly stated, and the ITT and noncompliance analyses rely on different test statistics and procedures with no enforced coherence between them. While a disagreement between two valid tests can in principle arise from sampling variability or differing efficiency, the more fundamental concern here is the absence of a procedure that respects this logical equivalence between the two nulls. We therefore aim to ground the ITT and Wald null hypotheses in a common causal inference framework, and to develop a principled testing procedure under which failure to reject $H_{0,\text{ITT}}$ precludes rejection of $H_{0,\text{Wald}}$, thereby guaranteeing logical coherence between the two tests by construction.

This peculiar result from \citet{Courtright2024} suggests a need for more in-depth inquiry and new methods for stepped-wedge trial designs when there is noncompliance. It is clear that using standard instrumental variable methods and applying them to stepped-wedge designs with noncompliance can lead to potentially misleading results. Besides noncompliance, the \redaps trial presented additional statistical challenges. First, the \redaps trial employed a one-cluster-per-sequence staggered adoption design, where exactly one cluster crossed over to the intervention at each time period. With only a single cluster crossing over at a time and a limited number of clusters, constructing valid variance estimators is statistically challenging. Second, a cluster-period size is said to be informative when the cluster-period-specific potential outcomes or treatment effects are associated with the number of individuals in that cluster-period. Informative cluster sizes arise naturally in pragmatic trials for several reasons. Larger hospitals may differ systematically from smaller ones in their experience, staffing, or quality of care, and patients enrolled at high- versus low-volume sites may differ in baseline risk and case mix; either mechanism can induce a dependence between cluster-period size and the magnitude of the treatment effect. In the \redaps trial, the cluster-period sizes are highly heterogeneous (Figure \ref{fig: cluster period counts}), ranging from a few dozen to several hundred patients, making informative cluster size a genuine concern. This matters because, as discussed in \citet{Kahan2024}, conventional mixed-model approaches may then target an ambiguous estimand that depends on unknown correlation parameters---even in a parallel-arm cluster-randomized trial---so that the quantity being estimated no longer corresponds to a clearly interpretable average treatment effect. This motivates our estimand-aligned approach, in which the target effect ratio accounting for noncompliance is defined explicitly and a priori, independent of any working model.
\begin{figure}
    \centering
    \includegraphics[width=0.48\linewidth]{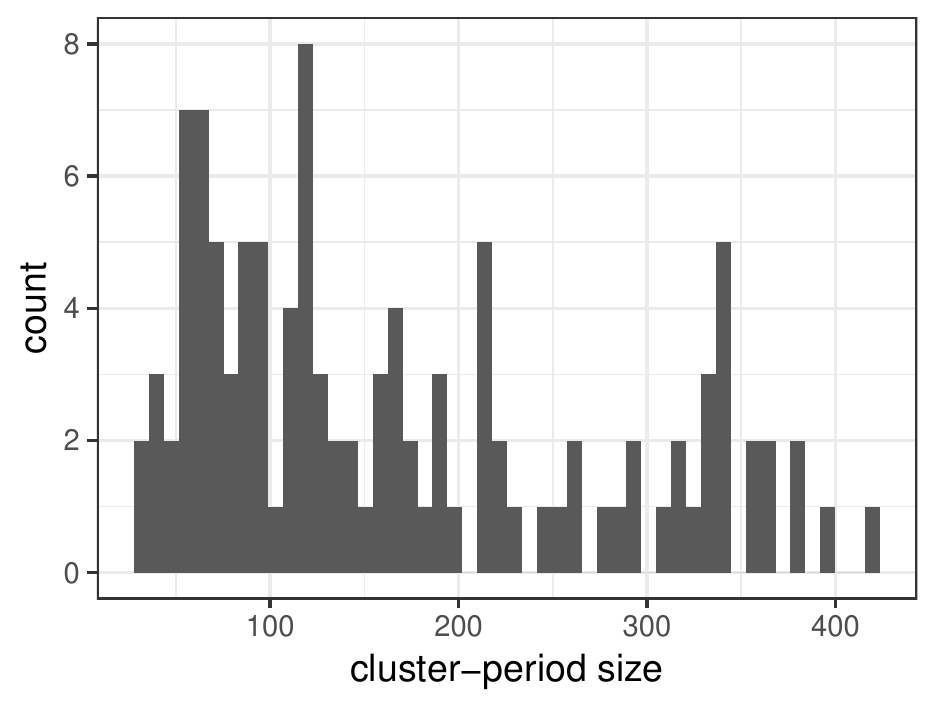}
    \caption{A histogram of the number of individuals in each of the cluster-period cells within the \redaps stepped-wedge cluster-randomized trial.}
    \label{fig: cluster period counts}
\end{figure}

\subsection{Related work and objectives}
Since the introduction of the stepped-wedge design, there have been numerous methodological proposals for statistical analysis. The overwhelming majority of the methods are based on mixed-models \citep{li2021mixed}, where the parameters of interest typically become the model-based coefficients. This is an attractive approach when there is no informative cluster size \citep{fang2025model}, as the treatment effect coefficients (at least in the linear mixed model case) can be interpreted as the average treatment effect estimand \citep{Wang2024}. 
However, a recognized limitation of the mixed model-based approaches is that they may not be well-suited for informative cluster-size situations. Specifically, \citet{Kahan2024} and \citet{lee2025estimated} have pointed out that some of the mixed model-based approaches may perform poorly when the cluster size is informative of the treatment effect. In extensive simulation studies, \citet{fang2025model} have also shown that the mixed-model treatment effect coefficients are often biased for the potential outcomes estimands when the cluster-period sizes are informative in stepped-wedge designs, and have recommended a model-robust standardization procedure to correctly target the causal estimands.

On the other hand, it is increasingly popular, particularly in the recent causal inference literature for cluster-randomized trials \citep{Kahan2024,li2025model}, to take an estimand-aligned approach. In an estimand-aligned approach, the causal estimand of interest is specified a priori, and estimation and inference are developed for that specific estimand. No modeling assumptions are required to define the causal estimands, but models may be used to assist in the estimation. There is a growing body of work taking an estimand-aligned, design-based approach to analyzing cluster-randomized experiments. In parallel cluster-randomized trials, \cite{Su2021Model-AssistedExperiments} study model-assisted estimators that use working regression models to improve efficiency without compromising design-based validity, and \cite{lu2023design} develop design-based theory under cluster rerandomization. A common thread in this literature is that models are leveraged solely for efficiency, with consistency and inference guaranteed by the randomization itself. Our work adopts this same model-assisted philosophy and extends it to the stepped-wedge cluster-randomized design under individual-level noncompliance, a setting arising from the \redaps trial that, to our knowledge, has not previously been addressed within the design-based framework. Several other causal inference methods have emerged under the stepped wedge designs but do not tackle noncompliance. For example, \citet{Ji2017RandomizationInsurance} primarily consider the null of no effect or constant effects at the individual level, and develop a permutation test that remains valid under correlation specification. \citet{zhang2025multiple} propose methods for testing lagged treatment effects. \citet{Chen2025} and \citet{Wang2024} define time-dependent causal estimands of interest and develop model-robust estimation procedures. \citet{fang2025model} further propose model-robust standardization methods to allow for more general working regression models. 

A prevailing theme in these works is the utilization of statistical models purely for improving statistical efficiency, and due to randomization, the estimation and inference for the target estimands are often not compromised under (partial) model misspecification.  Despite these recent developments, there has not yet been sufficient attention to estimand-aligned methods for addressing individual-level noncompliance in stepped-wedge trials.

Motivated by the analytical complexities of the \redaps trial, our objective is to supply the missing methodological foundation for stepped-wedge designs with noncompliance by developing a cohesive suite of assumptions, target estimands, and design-based inference procedures. We highlight four primary contributions. First, we develop a unified design-based framework that grounds intention-to-treat and instrumental-variable (Wald) analyses in a common finite-population causal model, with the effect ratio as the central estimand. Within this framework, we construct a testing procedure under which failure to reject the ITT null logically precludes rejection of the Wald null, resolving by construction the contradictory ITT and IV conclusions reported in the original analysis of \cite{Courtright2024}. Second, we develop model-assisted ANCOVA estimators for the effect ratio through a residualization of the outcome, and further provide cluster-period average and scaled cluster-period total ANCOVA variants. Third, we study Horvitz–Thompson estimators of the effect ratio, for which we establish a design-based central limit theorem that, to our knowledge, is new for stepped-wedge designs and applies equally in the absence of noncompliance, together with a provably conservative variance estimator that remains well defined even in the one-cluster-per-sequence design, where only a single cluster crosses over per period. Finally, we provide practical guidance for implementing these methods. Through extensive simulations, we evaluate the finite-sample performance of our proposed methods along with variance estimators utilizing small-sample adjustments. We summarize our recommendations in a decision flowchart and conclude with a comprehensive reanalysis of the \redaps trial.

\section{Preliminaries} \label{section: prelims}
Before developing our methodology, we briefly recap the motivating application and crystallize the scientific questions that drive it. Recall from Section \ref{sec: X trial} that the \redaps trial is a pragmatic, stepped-wedge cluster-randomized trial in which hospitals transitioned at randomized times to ordering palliative care by default, with the default order encouraging but not guaranteeing receipt of palliative care. Among patients for whom a default order fired, 43.9\% received a consultation compared to 16.6\% among patients without a default order \citep{Courtright2024}. This application raises three questions of central interest. First, what is the effect of the randomized default order itself on patient-centered outcomes---hospital length of stay and 30-day readmission---that is, the intention-to-treat effect? Second, what is the causal effect of receiving palliative care among patients whose receipt is determined by the default order, after properly accounting for noncompliance? Third, can these two questions be addressed within a single, coherent inferential framework that avoids the contradictory conclusions reported in the original analysis \citep{Courtright2024}? These questions are clinically consequential given the expanding role of palliative care for seriously ill patients. The remainder of this section introduces the notation, potential-outcomes framework, and assumptions needed to formalize and answer them.

\subsection{Notation and assumptions}

We pursue potential outcomes estimands under a finite-population framework, where all potential outcomes are fixed, and the randomness in the observed data comes from the treatment assignment alone. We consider a trial with $I$ clusters (indexed by $i$) and $J + 2$ periods (indexed by $j$), where period $0$ is the pre-rollout and period $J + 1$ is the post-rollout. In the \redaps trial, $I = 11$ and $J = 10$. We use $Z_{ij} \in \{0, 1\}$ to denote randomized intervention status (ordering palliative care by default) for cluster $i \in \{1, \ldots, I\}$ in rollout period $j \in \{0, 1, \ldots, J, J + 1\}$. The number of clusters in each rollout period $j$ randomized to intervention, $I_j$, is known and fixed a priori. The rollout starts with $I_1$ clusters randomized in period 1, and in period 2, the previously selected $I_1$ clusters remain under intervention, while $I_2 - I_1$ out of the $I - I_1$ remaining clusters are randomized to intervention, and so forth. This process continues until all clusters receive the intervention in period $J + 1$. Thus, for all $i$, $Z_{i0} = 0$ and $Z_{i,J+1} = 1$ by definition, and $0 = I_0 < I_1 \leq I_2 \leq \ldots \leq I_J < I_{J+1} = I$. In the \redaps trial, at each time point, only one hospital crossed over, i.e., $I_j = j$ for all $j$. There are $N_{ij}$ individuals (indexed by $k$) in cluster-period $(i, j)$ and $N_j \equiv \sum_{i=1}^I N_{ij} $ is the total number of individuals in period $j$. 

We proceed to define potential outcomes. We let $A_i = a \in \mathcal{A} = \{1,\ldots, J, J +1\}$ denote the period index for which cluster $i$ is first randomized to intervention (the so-called adoption time), and therefore $Z_{ij} = \mathbbm 1(A_i \leq j)$. Without any further assumptions, each individual has corresponding potential outcomes $Y_{ijk}^{a_i,\bm{a}_{-i}}$ that depend not only on the adoption time for cluster $i$ but also those for other clusters ($\bm{a}_{-i}$ is the vector of adoption times for clusters other than cluster $i$). Note that the randomized intervention $Z$ (making a default palliative care order) may not correspond to the true treatment of interest, due to noncompliance. Thus, we introduce potential outcomes $D_{ijk}^{a_i,\bm{a}_{-i}}$ for $D$ representing the treatment of interest, i.e., actual receipt of palliative care in the \redaps trial. For each individual $k$ included from cluster $i$ during period $j$ (under a cross-sectional sampling design), we define $X_{ijk}$ as the vector of baseline covariates that are measured prior to exposure to the treatment assignment. Appendix Figure \ref{fig:stepped} presents an example diagram for a stepped-wedge design with the observed data structure. Next, we introduce three assumptions, which were initially introduced in \citet{Chen2025}. These assumptions are standard for stepped-wedge designs in the absence of noncompliance, and new assumptions necessary for studying noncompliance will be introduced in Section \ref{ER_estimand}.

\begin{assumption}[Cluster-level SUTVA] \label{assumption: cluster sutva}
    Let $Y_{i j k}^{a_i, \boldsymbol{a}_{-i}}$ and $D_{i j k}^{a_i, \boldsymbol{a}_{-i}}$ denote the potential outcomes for an individual given the adoption time of all clusters, then $Y_{i j k}^{a_i, \boldsymbol{a}_{-i}}=Y_{i j k}^{a_i, \boldsymbol{a}_{-i}^*}, \text{ and } D_{i j k}^{a_i, \boldsymbol{a}_{-i}}=D_{i j k}^{a_i, \boldsymbol{a}_{-i}^*}, \forall \boldsymbol{a}_{-i} \neq \boldsymbol{a}_{-i}^*$.
\end{assumption}

Assumption \ref{assumption: cluster sutva} requires that the potential outcomes of individuals in one cluster are not dependent on the assignment of a differing cluster, i.e., there are no spillover effects across clusters. This is a standard assumption made for cluster-randomized trials and is plausible in the \redaps trial because the 11 hospitals in the trial did not interact. Under Assumption \ref{assumption: cluster sutva}, we can write potential outcomes as $Y_{i j k}^{a_i}$ and $D_{i j k}^{a_i}$ without loss of generality, and in ensuing sections, we also drop the subscript $i$, writing instead $Y_{i j k}^{a}$ and $D_{i j k}^{a}$. We next introduce an assumption that requires that the potential outcomes of an individual in a cluster do not depend on how long the cluster has been in the intervention phase. 

\begin{assumption}[Intervention duration irrelevance] \label{assumption: treat dur irrel}
    There is only a single version of randomized intervention across different periods, such that variations in intervention duration are irrelevant to the potential outcomes. That is, for each $k \in\left\{1, \ldots, N_{i j}\right\}$, (i) $Y_{i j k}^a=Y_{i j k}^{a^{\prime}}=Y_{i j k}(1) \text{ and } D_{i j k}^a=D_{i j k}^{a^{\prime}}=D_{i j k}(1)$, if $\max \left\{a, a^{\prime}\right\} \leq j$; (ii) $Y_{i j k}^a=Y_{i j k}^{a^{\prime}}=$ $Y_{i j k}(0) \text{ and } D_{i j k}^a=D_{i j k}^{a^{\prime}}=$ $D_{i j k}(0)$, if $\min \left\{a, a^{\prime}\right\}>j$; (iii) $Y_{i j k}^a=Y_{i j k}(1), Y_{i j k}^{a^{\prime}}=Y_{i j k}(0), D_{i j k}^a=D_{i j k}(1), D_{i j k}^{a^{\prime}}=D_{i j k}(0)$, if $a \leq j<a^{\prime}$, for $a, a^{\prime} \in \mathcal{A}$ and $j \in\{0,1, \ldots, J, J+1\}$.
\end{assumption}

Assumption \ref{assumption: treat dur irrel} requires that an individual's potential outcomes (for both $D$ and $Y$) do not depend on how long their cluster has been randomized (or not randomized) to intervention. Intervention duration irrelevance is plausible in settings where there is no learning effect or a weakening effect over time \citep{Kenny2022, Wang2024}, and it greatly reduces the number of distinct potential outcomes. This is a reasonable starting point for exploring noncompliance in stepped-wedge designs and is the assumption adopted for the primary analysis of \redaps. In the context of the \redaps trial, this would require that there are no significant changes in how effective the default order and the administration of palliative care are over time. In the context of staggered adoption settings, this assumption has been referred to as \emph{invariance to history} \citep{Athey2022}. As \citet{Athey2022} argued, this type of assumption is more plausible when units of study are different across the time periods. In the \redaps trial, if the palliative care teams got better over time at managing the increased demand resulting from the default order, such that they would not have gotten to some patient on an earlier date but would have on a later date, this would constitute a learning effect (on the $D$ potential outcomes). On the other hand, it is also possible that palliative care teams get burnt out over time following initiation of the default order, such that the odds of seeing the patient go down over time, which would constitute a weakening effect (on the $D$ potential outcomes). In Section \ref{section: data analysis}, as a heuristic check, we test for an association between the amount of time a patient's hospital has (or has not) initiated the default order and a patient's receipt of palliative care and length of stay. We perform the tests separately for patients who received the default order and those who did not. Ultimately, we did not find statistically significant evidence of an association, giving some reassurance on the plausibility of treatment duration irrelevance in the \redaps trial.

As a result of Assumptions \ref{assumption: cluster sutva} and \ref{assumption: treat dur irrel}, for rollout periods $j$, we may write $Y_{ijk} = Z_{ij}Y_{ijk}(1) + (1 - Z_{ij} )Y_{ijk}(0)$ and $D_{ijk} = Z_{ij}D_{ijk}(1) + (1 - Z_{ij} )D_{ijk}(0)$. To complete the data specification, we assume the cluster-period size $N_{ij}$ to be unaffected by the intervention, thus ruling out post-randomization selection bias \citep{Li2022a}. We further define the number of individuals in cluster $i$ as $N_i =\sum_{j=1}^J N_{ij}$ and the total number of individuals across all clusters and periods as $ N = \sum_{i=1}^I \sum_{j=1}^J N_{ij}$.

\begin{assumption}[Stepped wedge randomization] \label{assumption: sw randomization}
    Write $\mathcal{Y}$, $\mathcal{D}$, and $\mathcal{X}$ as the collection of all potential outcomes and covariates across individuals and cluster-periods, then
    \begin{align*}
        \mathbb P\left(\bm{Z}_i=\bm{z} | \mathcal{Y}, \mathcal{D}, \mathcal{X}\right)=\left(\begin{array}{c}
        I \\
        I_1, I_2-I_1, \ldots, I_{J+1}-I_J
        \end{array}\right)^{-1},
    \end{align*}
    where $Z_{i 0}=0$ and $Z_{i, J+1}=1$ deterministically for all clusters $i$. Moreover, for $j = 1,\ldots, J$, the cluster-level propensity scores $I_j/I$ are strictly bounded away from 0 and 1.
\end{assumption}

This assumption entails that the randomization conducted does, in fact, follow the stepped-wedge design. Importantly, we write $e_j = I_j/I$ as the cluster-level propensity score fixed by design, and have $0 = e_0 < e_1 \leq e_2 \leq \ldots \leq e_J < e_{J+1} = 1$. The cluster-level propensity scores for period $0$ and $J+1$ are $e_0 = 0$ and $e_{J+1} = 1$, respectively. Thus, positivity for the pre- and post-rollout period is violated. Without further assumptions, it is therefore not possible to draw causal inferences about the pre and post-rollout individuals. This leads us to focus on estimands defined only on the rollout period individuals, i.e., individuals in periods $j \in \{1,\ldots J\}$. 

\subsection{Causal estimands}\label{ER_estimand}

We introduce two estimands of interest, which are the sample average treatment effect of the randomized intervention and the effect ratio, aggregated over all clusters, rollout time periods, and individuals. We call the first $\tau_Y$ and the second $\lambda$.

\begin{definition}
    The \emph{sample average intention-to-treat effect} of the randomized intervention is $$\tau_Y \equiv \frac{1}{N}\sum_{i=1}^I \sum_{j=1}^J \sum_{k = 1}^{N_{ij}} \{Y_{ijk}(1)-Y_{ijk}(0)\}.$$
\end{definition}

The $\tau$ estimand has been previously considered by \citet{Chen2025}, and they propose several analysis of covariance (ANCOVA) estimators with a working independence assumption for targeting such an estimand. The interpretation of the $\tau_Y$ estimand is relatively straightforward, as it is a sample average of treatment effects, where the average gives equal weight to each \emph{individual} in the study. Since we are in a noncompliance setting, it is also important to introduce an estimand more closely aligned with the actual treatment received, not just the randomized intervention. For this, we introduce a second estimand, which is closely related to the Wald estimand in the standard instrumental variable setting. In the stepped-wedge design, the randomized intervention $Z$ can be viewed as a valid instrumental variable.

\begin{definition}
\label{def: effect ratio}
    The \emph{effect ratio} is defined as 
    \begin{align*}
         \lambda \equiv \frac{\sum_{i = 1}^I \sum_{j = 1}^J \sum_{k = 1}^{N_{ij}} \{Y_{ijk}(1)-Y_{ijk}(0)\}} {\sum_{i = 1}^I \sum_{j = 1}^J \sum_{k = 1}^{N_{ij}} \{D_{ijk}(1)-D_{ijk}(0)\}}.
    \end{align*}
\end{definition} 

The effect ratio, $\lambda$, (and related variants) has been considered in many works for various study designs \citep{Baiocchi2010, Kang2018, Fogarty2021Biased, Zhang2022BridgingRatio, Park2023, pashley2024iv, chen2025manipulating}, though not yet in the stepped-wedge design. It is exactly the ratio of two sample average treatment effects, namely, the effect of randomization on the outcome and on the treatment. Under certain assumptions (introduced below) on the randomized intervention $Z$, the effect ratio has an intuitive interpretation as the sample average causal effect among the complier subpopulation \citep{Kang2018}, also referred to as the local average treatment effect \citep{schochet2024design, aronow2024randomization}. 

\begin{assumption}[Instrument variable assumptions] \label{assumption: iv} 1. Exclusion restriction: $D_{ijk}(1) = D_{ijk}(0)$ implies $Y_{ijk}(1) = Y_{ijk}(0)$. 2. Monotonicity: $D_{ijk}(1) \geq D_{ijk}(0) \ \forall i,j,k$. 3. Relevance: $\tau_D \equiv N^{-1}\sum_{i = 1}^I \sum_{j = 1}^J \sum_{k = 1}^{N_{ij}} \{D_{ijk}(1)-D_{ijk}(0)\} \neq 0$.
\end{assumption}

In the \redaps trial, the exclusion restriction is plausible because the default order was designed to influence patient outcomes primarily by prompting palliative care consultations, and alternative pathways from the assignment to outcomes, such as changes in clinician behavior absent a consultation, were expected to be minimal. We also expect monotonicity to hold, as there should have been no patients who would not receive palliative care if assigned to receive it by default but would receive it if assigned to usual care \citep{Courtright2024}. Finally, the quantity $\tau_D = N^{-1}\sum_{i = 1}^I \sum_{j = 1}^J \sum_{k = 1}^{N_{ij}} \{D_{ijk}(1)-D_{ijk}(0)\}$ is the effect of default order on palliative care receipt. Given the strong association between default order and palliative care receipt, relevance almost certainly holds. Under Assumption \ref{assumption: iv}, $\lambda$ can be interpreted as the sample average treatment effect of palliative care $D$ among the complier population (individuals for which receipt of palliative care or not matches the default order, i.e., $D_{ijk}(1) =1, D_{ijk}(0) = 0$). Intuitively, due to monotonicity and relevance, the denominator of $\lambda$ (scaled by $N$) measures the proportion of compliers and is nonzero. Meanwhile, the numerator (scaled by $N$) measures the average effect of randomization $Z$, which, due to the exclusion restriction and monotonicity, is the effect of the treatment $D$ in the compliers, but scaled by $N$ rather than the number of compliers. Note that the exclusion restriction and monotonicity are \emph{not} required for the inference procedures for $\lambda$ presented in the ensuing sections to be valid. However, they ensure that $\lambda$ has a natural, intuitive interpretation. In the next section, we introduce methods for estimation and inference for the effect ratio $\lambda$.

\section{Estimation and inference under noncompliance} \label{section: estimation and inference}

The statistical procedures we propose are built upon the following simple observation. Consider the null hypothesis $H_0: \lambda = \lambda_0$. Under $H_0$, since $\lambda$ is a ratio of sample average treatment effects, a simple rearrangement from Definition \ref{def: effect ratio} yields 
\begin{equation*}
\begin{aligned}
    \tau_{\lambda_0} &\equiv \frac{1}{N} \sum_{i=1}^I \sum_{j=1}^J \sum_{k=1}^{N_{ij}} \{Y_{ijk}(1) - \lambda_0D_{ijk}(1)\} - \{Y_{ijk}(0) - \lambda_0D_{ijk}(0)\}  \\ &= \frac{1}{N} \sum_{i=1}^I \sum_{j=1}^J \sum_{k=1}^{N_{ij}} \{Y_{ijk}(1)  - Y_{ijk}(0)\} - \lambda_0\{D_{ijk}(1) - D_{ijk}(0)\} = 0.
\end{aligned}
\end{equation*}
Essentially, under the null, the sample average treatment effect of the randomized intervention $Z$ on the residualized outcome $Y - \lambda_0 D$ ($\tau_{\lambda_0}$) is zero. Equipped with this observation, testing the null $H_0$ is equivalent to testing $\tau_{\lambda_0} = 0$, which is no more difficult than testing $\tau = 0$. As a result, we can leverage already developed tools for estimation and inference about $\tau$ to estimate and make inferences for $\lambda$. More specifically, we leverage the unadjusted, ANCOVA-I, and ANCOVA-III estimators proposed in \citet{Chen2025}, essentially utilizing them as estimators of $\tau_{\lambda_0}$. We also provide alternative Horvitz-Thompson estimators for $\tau_{\lambda_0}$. We emphasize that although we leverage ANCOVA (and other) working models to construct estimation and inference procedures for the finite-population effect ratio $\lambda_0$, for our results, we never assume that the working regression models are correctly specified. Always working under the design-based perspective, we only leverage models to perform regression adjustment and potentially improve efficiency. In this sense, our procedures can be viewed as model-assisted rather than model-based \citep{Su2021Model-AssistedExperiments}.

Furthermore, our theoretical results are derived in an asymptotic regime where the number of periods $J$ is fixed and the number of clusters $I \to \infty$. While a one-cluster-per-sequence design formally couples the two via $J = I - 1$, this coupling is an artifact of fixing the number of clusters per sequence at one, rather than an intrinsic feature of how stepped-wedge trials are conducted in practice. The total duration of a trial, and hence the number of periods, is typically dictated by external constraints such as the funding period, and is fixed at the planning stage; when additional resources permit enlarging the trial, investigators generally enroll additional clusters within the existing sequence structure rather than adding new periods. Our asymptotic regime is meant to capture exactly this scenario: the number of sequences and periods is held fixed by design, while the number of clusters per sequence grows, allowing us to characterize the limiting behavior of the estimators under realistic trial expansion. The one-cluster-per-sequence design is then best understood as a small-sample instance of this regime, with a single cluster in each of the $J$ sequences. Consistent with this view, reviews of published stepped-wedge trials have found that the number of periods rarely exceeds a handful, regardless of the number of clusters \citep{nevins2024adherence}. On the other hand, a regime in which $I$ and $J$ are both fixed may also be of interest, but non-asymptotic inference for sample average treatment effects has only recently been explored even in the simplest settings, such as individual complete or Bernoulli randomization with unadjusted Horvitz–Thompson estimators \citep{sandoval2026nonasymptotic}. We therefore adopt the $I \to \infty$, $J$ fixed regime, and provide empirical evidence that our methods perform well even in the challenging (small sample) one-cluster-per-sequence setting.

\subsection{Model-assisted ANCOVA estimators}

Following \citet{Chen2025}, we first introduce ANCOVA estimators for $\tau_{\lambda_0}$. The unadjusted, ANCOVA-I, and ANCOVA-III estimators perform the following regressions, respectively:
\begin{equation}
\label{eq}
\begin{aligned}
    &Y_{ijk} - \lambda_0 D_{ijk} \sim \beta_j + \theta_j Z_{ij}, \\
    &Y_{ijk} - \lambda_0 D_{ijk} \sim \beta_j + \theta_j Z_{ij} + X_{ijk}^c \eta, \\
    &Y_{ijk} - \lambda_0 D_{ijk} \sim \beta_j + \theta_j Z_{ij} + X_{ijk}^c\gamma + Z_{ij}X_{ijk}^c \eta,
\end{aligned}
\end{equation}
where $X_{ijk}^c = X_{ijk}-N_j^{-1}\sum_{i=1}^I\sum_{k=1}^{N_{ij}}X_{ijk}$. The unadjusted regression includes period-specific intercepts and period-specific randomized intervention coefficients. The ANCOVA-I regression includes period-specific intercepts, period-specific randomized intervention coefficients, and covariate fixed effects. The ANCOVA-III regression includes period-specific intercepts,  period-specific randomized intervention coefficients, and period-specific randomized intervention by covariate interactions. The models are increasing in complexity (number of parameters), and Table \ref{tab: ancova comparison} gives a summary. We focus on the latter two estimators as they performed well empirically in \citet{Chen2025}, by balancing finite-sample stability with model complexity.

Each of the resulting fitted regressions will produce period-specific coefficient estimates $\widehat{\theta}_j(\lambda_0)$. Then, given these estimates, one can compute, for each regression specification, the estimator $\widehat{\tau}_{\lambda_0} \equiv N^{-1}\sum_{j=1}^J N_j \widehat{\theta}_j(\lambda_0)$. In addition, conservative standard error estimators $S(\lambda_0)$ for $\widehat{\tau}_{\lambda_0}$ can be constructed in a similar manner as in Section 3.3 of \citet{Chen2025}. It is then straightforward to conduct a test of $H_0: \lambda = \lambda_0$ by comparing the deviate $\widehat{\tau}_{\lambda_0} / S({\lambda_0})$ to a standard normal distribution, which results in an asymptotically valid test. Confidence intervals can be readily computed by test inversion.

\begin{proposition} \label{prop: ancova test validity}
    Consider an asymptotic regime where $I \to \infty$, and $J$ is a fixed constant. Under Assumptions \ref{assumption: cluster sutva}-\ref{assumption: sw randomization} and regularity conditions to be introduced in the Appendix, for a fixed $\lambda_0$, $\widehat{\tau}_{\lambda_0} \stackrel{p}{\to} \tau_{\lambda_0}$. In addition, under the null $H_0: \lambda = \lambda_0$,
    \begin{equation}
        \begin{aligned}
            &\limsup_{I \to \infty} \mathbb P\{\widehat{\tau}_{\lambda_0} / S({\lambda_0}) \geq k | \mathcal{\mathcal{Y}, \mathcal{D}, \mathcal{X}} \} \leq 1 - \Phi(k), \\ 
            &\limsup_{I \to \infty} \mathbb P\{\widehat{\tau}_{\lambda_0} / S({\lambda_0}) \leq -k | \mathcal{\mathcal{Y}, \mathcal{D}, \mathcal{X}} \} \leq \Phi(-k),
        \end{aligned}
    \end{equation}
    for any $k > 0$, where $\Phi(\cdot)$ denotes the standard Gaussian cumulative distribution function.
\end{proposition}

The regularity conditions (see Appendix \ref{appendix: ancova} for more discussion) are adapted from \cite{Chen2025}, and essentially restrict the tail behavior of potential outcomes and require the existence of certain moments and limits. Proposition \ref{prop: ancova test validity} adapts the model-assisted ANCOVA theory of \cite{Chen2025} to the noncompliance setting: the key step is the observation that, for fixed $\lambda_0$, testing $H_0: \lambda = \lambda_0$ reduces to a sample average treatment effect null for the residualized outcome $Y_{ijk} - \lambda_0 D_{ijk}$, which lets us repurpose their estimators and design-based variance estimators (originally developed for a single treatment effect estimand) for inference on the ratio estimand $\lambda$. Consistency and conservative inference then follow by applying their Theorem 1 to the residualized outcomes. What is specific to our setting is that inference for $\lambda$ proceeds by inverting a family of such tests indexed by $\lambda_0$, which in turn yields the coherence property formalized in Remark \ref{rmk:coherence}. As a result of Proposition \ref{prop: ancova test validity}, an asymptotically valid level-$\alpha$ test for $H_0: \lambda = \lambda_0$ rejects when $|\widehat{\tau}_{\lambda_0}/S(\lambda_0)| \geq z_{1-\alpha/2}$. Moreover, a natural point estimate $\widehat{\lambda}$ for $\lambda$ is simply the solution to the equation $\widehat{\tau}_{\widehat{\lambda}} \equiv N^{-1}\sum_{j=1}^{J} N_j \widehat{\theta}_j(\lambda) = 0$.

\begin{remark}\label{rmk:coherence}
    Observe that by design, the test of $H_0: \lambda = 0$ is exactly equivalent to testing $H_0: \tau_0 = \tau = 0$. Thus, as long as the same point/variance estimators for $\tau_0 = \tau$ are used, the test of $H_0: \lambda = 0$ rejects if and only if the test of $H_0: \tau_0 = \tau = 0$ rejects. As a result, peculiar cases where an intention-to-treat analysis fails to reject but the IV-based noncompliance analysis rejects are not possible using our procedures.
\end{remark}

Proposition \ref{prop: ancova test validity} is derived under an asymptotic regime where the number of clusters $I$ tends to infinity and the number of periods $J$ remains fixed. We view this as a reasonable starting point for theoretical analysis, though we point out that the standard error estimators $S(\lambda_0)$ introduced by \citet{Chen2025} are not well-defined when only one cluster crosses over at each time point $(I_j = 1)$, as they involve division by $I_j - 1$. (A simple modification is to replace $I_j-1$ by $I_j$, and the asymptotic results remain valid). Nevertheless, to avoid the division-by-zero issue and account for the small number of clusters, we propose two additional modifications. First, we utilize an approximate leave-one-cluster-out jackknife variance estimator \citep{Bell2002}, as implemented using the `CR3' option in the \texttt{clubSandwich} package. We denote the standard deviation estimator using CR3 by $S(\lambda)^\mathrm{CR3}$. In addition, instead of comparing the standardized deviate to a Gaussian quantile, we compare to a $t$-distribution with $I - 2$ degrees of freedom to improve finite-sample performance. Though we do not provide formal theoretical justification for these modifications, they appear to perform well in our simulations and are easily implemented using existing software. Finally, we construct confidence intervals for $\lambda$ by inverting hypothesis tests. Specifically, we find the values for $\lambda$ that solve  $\widehat{\tau}_{\lambda} / S({\lambda})^\mathrm{CR3} = \pm t_{1-\alpha/2, I-2}$.

\begin{remark}
In parallel cluster randomized experiments, \cite{Su2021Model-AssistedExperiments} introduce cluster average and scaled cluster total regression adjustment, demonstrating their advantages over individual-level adjustment. Analogs of such regression adjustments are possible for the ANCOVA estimators in the stepped-wedge design. Specifically, the ANCOVA-I and ANCOVA-III cluster-period average estimators perform the following regressions with weights $N_{ij}$, respectively: $\overline Y_{ij} - \lambda_0 \overline D_{ij} \sim \beta_j + \theta_j Z_{ij} + \overline X_{ij}^c \eta$ and $\overline Y_{ij} - \lambda_0 \overline D_{ij} \sim \beta_j + \theta_j Z_{ij} + \overline X_{ij}^c\gamma + Z_{ij}\overline X_{ij}^c \eta$, 
where $\overline Y_{ij} = N_{ij}^{-1}\sum_{k=1}^{N_{ij}} Y_{ijk}$ and $\overline D_{ij} = N_{ij}^{-1}\sum_{k=1}^{N_{ij}} D_{ijk}$ are cluster-period average outcomes and treatments, and $\overline X_{ij}^c=N_{ij}^{-1}\sum_{k=1}^{N_{ij}}X_{ijk}-N_j^{-1}\sum_{i=1}^I\sum_{k=1}^{N_{ij}}X_{ijk}=\overline X_{ij} - \overline X_j$ is the centered cluster-period average of covariates. Meanwhile, the ANCOVA-I and ANCOVA-III scaled cluster-period total estimators perform the following regressions, respectively: $\widetilde Y_{ij} - \lambda_0 \widetilde D_{ij} \sim \beta_j + \theta_j Z_{ij} + \widetilde X_{ij}^c \eta$ and $\widetilde Y_{ij} - \lambda_0 \widetilde D_{ij} \sim \beta_j + \theta_j Z_{ij} + \widetilde X_{ij}^c\gamma + Z_{ij}\widetilde X_{ij}^c \eta$, 
where $\widetilde Y_{ij} = IN_{ij}\overline Y_{ij}/N_j$, $\widetilde D_{ij} = IN_{ij}\overline D_{ij}/N_j$, and $\widetilde X_{ij}^c=IN_{ij}\overline X_{ij}^c/N_j$ are rescaled cluster-period average outcomes, treatments, and covariates.
We more formally introduce these model variants and prove their asymptotic properties in Appendix \ref{appendix: aggregate estimators}, showing that they remain model-assisted estimators in stepped-wedge designs. While Proposition \ref{prop: ancova test validity} can be viewed as a direct consequence of results of \cite{Chen2025} applied to the noncompliance setting, the theoretical results in Appendix \ref{appendix: aggregate estimators} require significant, non-trivial extensions.
\end{remark}

\subsection{Horvitz-Thompson estimators}

As an alternative to the ANCOVA-style estimators from the previous section, we also outline Horvitz-Thompson (HT) style estimators for $\tau_{\lambda_0}$, with technical details presented in the Appendix. Consider the following estimator for $\tau_{\lambda_0}$:
\begin{equation}
    \widehat{\tau}_{\lambda_0, \text{HT}} \equiv \frac{1}{N} \sum_{i=1}^I \sum_{j=1}^J \sum_{k=1}^{N_{ij}} \frac{Z_{ij}}{e_{ij}}(Y_{ijk} - \lambda_0D_{ijk}) - \frac{1-Z_{ij}}{1-e_{ij}}(Y_{ijk} - \lambda_0D_{ijk}) = \widehat{\tau}_{1, \lambda_0, \text{HT}} - \widehat{\tau}_{0, \lambda_0, \text{HT}}.
\end{equation}
Here the $e_{ij}\equiv \mathbb P(Z_{ij} = 1 | \mathcal{Y},\mathcal{D},\mathcal{X})$ are the known propensity scores. This estimator simply contrasts intervened and non-intervened cluster-period residualized outcomes, weighting by the inverse of the treatment propensity score at that time period. It is straightforward to verify that the Horvitz-Thompson estimator is unbiased under randomization, as formalized below.

\begin{lemma} \label{lemma: ht unbiased}
    $\mathbb E(\widehat{\tau}_{\lambda_0,\text{HT}} | \mathcal{\mathcal{Y}, \mathcal{D}, \mathcal{X}} ) = \tau_{\lambda_0}$.
\end{lemma}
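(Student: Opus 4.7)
The plan is to exploit two structural features of the Horvitz-Thompson estimator: (i) under Assumptions \ref{assumption: cluster sutva} and \ref{assumption: treat dur irrel}, the observed residualized outcome $Y_{ijk}-\lambda_0 D_{ijk}$ collapses to a clean function of potential outcomes once multiplied by $Z_{ij}$ or $1-Z_{ij}$; and (ii) under Assumption \ref{assumption: sw randomization}, the marginal probability $P(Z_{ij}=1\mid \mathcal{Y},\mathcal{D},\mathcal{X})$ is exactly the design propensity $e_{j}=I_j/I$. Combining these two facts, the inverse-probability weights will cancel term by term, leaving an average of individual-level contrasts that matches $\tau_{\lambda_0}$.

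Concretely, I would first write $Y_{ijk}-\lambda_0 D_{ijk}=Z_{ij}\{Y_{ijk}(1)-\lambda_0 D_{ijk}(1)\}+(1-Z_{ij})\{Y_{ijk}(0)-\lambda_0 D_{ijk}(0)\}$ using Assumptions \ref{assumption: cluster sutva}--\ref{assumption: treat dur irrel}, then observe that $Z_{ij}^2=Z_{ij}$, $(1-Z_{ij})^2=1-Z_{ij}$, and $Z_{ij}(1-Z_{ij})=0$. Substituting this into the definition of $\widehat{\tau}_{\lambda_0,\text{HT}}$ yields
\begin{equation*}
\widehat{\tau}_{\lambda_0,\text{HT}} = \frac{1}{N}\sum_{i=1}^{I}\sum_{j=1}^{J}\sum_{k=1}^{N_{ij}}\left[\frac{Z_{ij}}{e_j}\{Y_{ijk}(1)-\lambda_0 D_{ijk}(1)\} - \frac{1-Z_{ij}}{1-e_j}\{Y_{ijk}(0)-\lambda_0 D_{ijk}(0)\}\right].
\end{equation*}
Conditional on $(\mathcal{Y},\mathcal{D},\mathcal{X})$, only the $Z_{ij}$ are random, so by linearity of expectation and Assumption \ref{assumption: sw randomization}, I substitute $E[Z_{ij}\mid \mathcal{Y},\mathcal{D},\mathcal{X}]=e_j$ and $E[1-Z_{ij}\mid \mathcal{Y},\mathcal{D},\mathcal{X}]=1-e_j$. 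The inverse weights cancel exactly, and what remains is the definition of $\tau_{\lambda_0}$.

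The only mildly subtle point is verifying $E[Z_{ij}\mid \mathcal{Y},\mathcal{D},\mathcal{X}]=e_j$; this follows because Assumption \ref{assumption: sw randomization} puts a uniform distribution over permutations of adoption times, so $P(Z_{ij}=1)=P(A_i\le j)=I_j/I$ by symmetry over clusters. Positivity of the weights is guaranteed for the rollout periods $j\in\{1,\ldots,J\}$ (where $0<e_j<1$), which is exactly the index range over which $\widehat{\tau}_{\lambda_0,\text{HT}}$ and $\tau_{\lambda_0}$ are defined. There is no genuine obstacle here; the lemma is essentially an algebraic identity combined with the known randomization distribution, so the proof is short and requires no asymptotic or moment arguments.
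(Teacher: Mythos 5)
Your proof is correct and takes essentially the same route as the paper, which simply asserts unbiasedness by appeal to the standard Horvitz--Thompson argument (cf.\ its Lemma on first and second moments, whose proof is deferred to Proposition 4.2 of \cite{Aronow2017}). The key ingredients you identify---the consistency identity $Z_{ij}(Y_{ijk}-\lambda_0 D_{ijk})=Z_{ij}\{Y_{ijk}(1)-\lambda_0 D_{ijk}(1)\}$ from Assumptions \ref{assumption: cluster sutva}--\ref{assumption: treat dur irrel}, the marginal propensity $E[Z_{ij}\mid\mathcal{Y},\mathcal{D},\mathcal{X}]=I_j/I$ from Assumption \ref{assumption: sw randomization}, and positivity $0<e_j<1$ on the rollout periods---are exactly what is needed, so no gap remains.
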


To conduct inference, it is necessary to compute a variance estimate. Though the true variance of the estimator depends on inherently inestimable quantities (i.e., products of treated and control potential outcomes, which are never observed simultaneously), it is possible to construct a conservative variance estimate, following \citet{Aronow2017}, for example. The variance estimate has a complicated-looking form, so we defer the details to the Appendix. We call this variance estimator $\widehat{\text{Var}}(\widehat{\tau}_{\lambda_0,\text{HT}})$. We point out that this variance estimator (unlike the ANCOVA variance estimators in \citet{Chen2025}) is well-defined even in the one-cluster-per-sequence stepped wedge design, like in the \redaps trial. Also, the variance estimator is conservative in the following sense:
\begin{lemma}
    \label{lemma: conservative variance}
    $\mathbb E\{\widehat{\mathrm{Var}}(\widehat{\tau}_{\lambda_0,\text{HT}}) | \mathcal{\mathcal{Y}, \mathcal{D}, \mathcal{X}} \} \geq {\mathrm{Var}}(\widehat{\tau}_{\lambda_0,\text{HT}} | \mathcal{\mathcal{Y}, \mathcal{D}, \mathcal{X}})$.
\end{lemma}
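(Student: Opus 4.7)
The plan is to carry out a standard design-based variance derivation for an inverse-probability-weighted estimator, then apply an Aronow--Samii style conservative bound. Using Assumptions \ref{assumption: cluster sutva}--\ref{assumption: treat dur irrel}, write $\tilde{Y}_{ijk}(z) = Y_{ijk}(z) - \lambda_0 D_{ijk}(z)$ so that
\begin{equation*}
\widehat{\tau}_{\lambda_0,\text{HT}} = \frac{1}{N}\sum_{i,j,k}\left\{\frac{Z_{ij}}{e_{ij}}\tilde{Y}_{ijk}(1) - \frac{1-Z_{ij}}{1-e_{ij}}\tilde{Y}_{ijk}(0)\right\},
\end{equation*}
which is linear in the assignment vector with fixed coefficients. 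The true conditional variance is therefore a quadratic form in the residualized potential outcomes whose weights are the design-determined covariances $\mathrm{Cov}(Z_{ij},Z_{i'j'})$. Under Assumption \ref{assumption: sw randomization} these covariances are known constants, determined by the step-function relation $Z_{ij} = \mathbb{I}\{A_i \leq j\}$ and the fixed crossover counts $I_1,\ldots,I_J$.

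The key structural observation is that the variance decomposes into three classes of terms: (i) squared contributions of the form $\tilde{Y}_{ijk}(z)^2$ from a single cluster-period; (ii) same-arm covariance products $\tilde{Y}_{ijk}(z)\tilde{Y}_{i'j'k'}(z)$; and (iii) cross-arm products $\tilde{Y}_{ijk}(1)\tilde{Y}_{i'j'k'}(0)$. Types (i) and (ii) are identifiable and admit unbiased Horvitz--Thompson plug-in estimators by the same logic as Lemma \ref{lemma: ht unbiased}. Type (iii) is the fundamental obstacle: for any realized assignment at least one factor in such a product is counterfactual, so these terms cannot be estimated without further assumptions. The standard resolution, which I would adopt, is to apply Young's inequality $2ab \leq a^2+b^2$ pointwise to every type-(iii) product, producing an upper bound expressible entirely in terms of squared residualized potential outcomes, each of which is identifiable from whichever arm is observed. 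The estimator $\widehat{\mathrm{Var}}(\widehat{\tau}_{\lambda_0,\text{HT}})$ is then (by construction in the Appendix) the Horvitz--Thompson plug-in for this upper bound, and linearity of conditional expectation under Assumption \ref{assumption: sw randomization} makes it exactly unbiased for the bound.

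The proof then proceeds in five steps: (1) compute the exact conditional variance of $\widehat{\tau}_{\lambda_0,\text{HT}}$ using the pairwise probabilities $P(Z_{ij}=1, Z_{i'j'}=1)$ implied by Assumption \ref{assumption: sw randomization}; (2) split the resulting quadratic form into identifiable pieces and cross-arm inestimable pieces; (3) bound each inestimable piece above via Young's inequality; (4) match this upper bound term-for-term to $E[\widehat{\mathrm{Var}}(\widehat{\tau}_{\lambda_0,\text{HT}}) \mid \mathcal{Y}, \mathcal{D}, \mathcal{X}]$ via Horvitz--Thompson unbiasedness; (5) chain the identity in (4) with the inequality in (3) to conclude. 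The principal obstacle is bookkeeping rather than conceptual insight: the monotone within-cluster structure of $Z_{ij}$ and the negative across-cluster dependence induced by the fixed marginal crossover counts produce an intricate covariance pattern, and care is needed to enumerate precisely which cross products are inestimable and to verify that the proposed estimator mirrors the conservative bound exactly. Once this bookkeeping is in place, the conservative inequality is an immediate consequence of $2ab \leq a^2+b^2$.
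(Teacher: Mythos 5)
Your proposal follows essentially the same route as the paper, which itself defers to Proposition 5.6 of \cite{Aronow2017}: unbiased Horvitz--Thompson plug-ins for the identified components of the variance, plus the inequality $2ab \le a^2+b^2$ applied to the jointly unobservable products to obtain an identifiable upper bound that the correction terms estimate unbiasedly. The only refinement worth noting is that the non-identified pairs are not just the cross-arm products: any same-arm pair $(i,j),(i',j')$ with zero joint assignment probability (e.g.\ two distinct clusters both treated in a period where $I_j=1$, so $e^{11}_{ij,i'j}=0$) is also inestimable, and the paper's estimator handles these with the same inequality-based correction terms.
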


The conservativeness of the variance estimator is not a flaw, as it is well-known in the finite population causal inference literature that variances are often not consistently estimable due to being a function of products of distinct potential outcomes, which are never observed simultaneously. In the Appendix, we also present a variance estimator that ignores such terms and thus is no longer guaranteed to be conservative, though it can perform reasonably well in practice, as will be seen in the simulation studies in Section \ref{section: sims}. Finally, let $ S_{HT}({\lambda_0}) \equiv \sqrt{\widehat{\text{Var}}(\widehat{\tau}_{\lambda_0,\text{HT}})}$. Appealing to a normal approximation, we can show that the deviate can be used as a basis for testing the null $H_0: \lambda = \lambda_0$.

\begin{proposition}
    \label{prop: ht test validity}
    Consider an asymptotic regime where $I \to \infty$, and $J$ is a fixed constant. Under Assumptions \ref{assumption: cluster sutva}-\ref{assumption: sw randomization} and regularity conditions to be introduced in the Appendix, for a fixed $\lambda_0$, $\widehat{\tau}_{\lambda_0, HT} \overset{p}{\to} \tau_{\lambda_0}$. In addition, under the null $H_0: \lambda = \lambda_0$,
    \begin{equation}
        \begin{aligned}
            &\limsup_{I \to \infty} \mathbb P\{\widehat{\tau}_{\lambda_0, HT} / S_{HT}({\lambda_0}) \geq k | \mathcal{\mathcal{Y}, \mathcal{D}, \mathcal{X}} \} \leq 1-\Phi(k), \\ 
            &\limsup_{I \to \infty} \mathbb P\{\widehat{\tau}_{\lambda_0, HT} / S_{HT}({\lambda_0}) \leq -k | \mathcal{\mathcal{Y}, \mathcal{D}, \mathcal{X}} \} \leq \Phi(-k),
        \end{aligned}
    \end{equation}
    where $\Phi(\cdot)$ denotes the standard Gaussian cumulative distribution function.
\end{proposition}

The regularity conditions essentially restrict the tail behavior of potential outcomes and impose boundedness restrictions (Appendix \ref{appendix: horvitz-thompson}). One may notice that, as in Proposition \ref{prop: ancova test validity}, Proposition \ref{prop: ht test validity} applies to an asymptotic regime where the number of clusters $I$ tends to infinity and the number of periods $J$ remains fixed. We view this as a reasonable starting point for theoretical analysis, and simulations confirm that the procedure can perform well even in small cluster regimes where the number of clusters is of the same order as the number of periods.

It is well-known that Horvitz-Thompson estimators can suffer from large variance, especially when propensity scores $e_j, 1-e_j$ are small. This is exacerbated by potentially heterogeneous cluster-period sizes $N_{ij}$. Both are concerns in the \redaps trial, as the smallest propensities are 1/11, and the cluster-period sizes vary greatly. Thus, it is of interest to utilize regression adjustment and suitably augment the Horvitz-Thompson estimator. We now outline how such estimators can be constructed, and in so doing propose a novel way for the pre- and post-rollout period data to be utilized. This is in contrast to the ANCOVA estimators, which discard the pre- and post-rollout period data due to non-positivity.

We first introduce some additional notation. Specifically, we let $\tau_{a,\lambda_0} = N^{-1}\sum_{i} \sum_{j} \sum_k Y_{ijk}(a)-\lambda_0 D_{ijk}(a)$ for $a = 0,1$. Also, let $\overline{R}_{ij}(1) = \sum_{k = 1}^{N_{ij}} Y_{ijk}(1) - \lambda_0 D_{ijk}(1)$, $\overline{R}_{ij}(0) = \sum_{k = 1}^{N_{ij}} Y_{ijk}(0) - \lambda_0 D_{ijk}(0)$, and $\overline{R}_{ij} = \sum_{k = 1}^{N_{ij}} Y_{ijk} - \lambda_0 D_{ijk}$. Note that $\overline{R}_{ij}(1)$, $\overline{R}_{ij}(0)$, and $\overline{R}_{ij}$ all depend on $\lambda_0$, though we omit this dependence from the notation for readability. The $R$ refers to the residualized outcome. The data from the pre- and post-rollout periods can be naturally utilized for covariance adjustment. Following Section 7.1 of \citet{Aronow2017}, consider estimating functions $g_1(x)$ and $g_0(x)$ of covariates that approximate $Y(1)-\lambda_0 D(1)$ and $Y(0)-\lambda_0 D(0)$ for individuals with covariates $x$ using the pre- and post-rollout data. Viewing these fitted functions $g_1$ and $g_0$ as fixed, define $\widehat{\tau}_{1,\lambda_0,\text{HT}}^{\text{aug}} = N^{-1}\sum_{i=1}^I \sum_{j=1}^J Z_{ij}/e_{ij}\{\overline{R}_{ij} - \overline{g}_1(x_{ij})\} + \overline{g}_1(x_{ij})$, $\widehat{\tau}_{0,\lambda_0,\text{HT}}^{\text{aug}} = N^{-1}\sum_{i=1}^I \sum_{j=1}^J (1 -Z_{ij})/(1-e_{ij})\{\overline{R}_{ij} - \overline{g}_0(x_{ij})\} + \overline{g}_0(x_{ij})$, and $\widehat{\tau}_{\lambda_0,\text{HT}}^{\text{aug}} = \widehat{\tau}_{1,\lambda_0,\text{HT}}^{\text{aug}} - \widehat{\tau}_{0,\lambda_0,\text{HT}}^{\text{aug}}$. It is straightforward to show that such estimators remain unbiased under randomization. The variance estimators (introduced in Appendix \ref{appendix: horvitz-thompson}) can be adjusted by replacing $\overline{R}_{ij}$ with $\overline{R}_{ij} - \overline{g}_{Z_{ij}}(x_{ij})$. The arguments and assumptions required for the central limit theorem can likewise be adapted to account for the regression adjustment. To the best of our knowledge, Proposition \ref{prop: ht test validity} presents a novel CLT for Horvitz-Thompson estimators in stepped-wedge designs and is also readily applicable to settings without noncompliance.

\subsection{Testing the sharp null}
The results in the previous two subsections were derived in the fixed $J$, $I \to \infty$ regime, which serves as a good approximation in many practical settings \citep{nevins2024adherence}. A complementary strategy, which requires no asymptotic approximation whatsoever, is to test Fisher's sharp null of no effect of the randomized intervention $Z$ before considering inference on the effect ratio; such a test is finite-sample valid and can therefore offer additional reassurance when the number of clusters $I$ is limited, as in one-cluster-per-sequence designs. Note that under the exclusion restriction, a test of no effect of $Z$ also serves as a test of no effect of the actual treatment $D$. \cite{Ji2017RandomizationInsurance} were the first to propose such a procedure, which can be implemented with any test statistic. We explore an extension of \cite{Ji2017RandomizationInsurance} under a proportional effect model suited for the noncompliance setting in Appendix \ref{appendix: sharp null} and also test the sharp null in the \redaps trial using ANCOVA-based test statistics.

\section{Simulation studies} \label{section: sims}
To evaluate the finite-sample performance of several estimators and testing procedures, we conduct an extensive simulation study. The exact details of the data-generating process are deferred to Appendix \ref{section: detailed simulation setup}. We compare the unadjusted, ANCOVA-I, and ANCOVA-III estimators; three Horvitz–Thompson estimators (one unadjusted, one regression-adjusted using only the pre- and post-rollout periods, and one regression-adjusted using the full data); and the model-based \texttt{ivmodel} estimator used in the original analysis of \citet{Courtright2024}. For the regression-adjusted HT estimators, two separate linear models (for intervened and control) of the outcome on the cluster-period covariate $X_{ij1}$ and individual covariate $X_{ijk2}$ supply the augmentation functions $g_1$ and $g_0$; see the Appendix for details. We also consider several variance estimators: for ANCOVA, the design-based $S(\lambda)$ of \citet{Chen2025} (replacing $I_j - 1$ with $1$ when $I_j = 1$) and the CR0 and CR3 estimators from \texttt{clubSandwich}, compared to Gaussian and $t$ ($I-2$ degrees of freedom) quantiles; for HT, we utilize the conservative variance estimator $\widehat{\text{Var}}(\widehat{\tau}_{\lambda_0,\text{HT}})$ as well as a simplified estimator that is not guaranteed to be conservative, $\widehat{\text{Var}}_{\text{simp}}(\widehat{\tau}_{\lambda_0,\text{HT}})$. The exact forms of both are presented in Appendix \ref{appendix: horvitz-thompson}. For the \texttt{ivmodel} estimators, we compare both the conditional likelihood ratio (CLR) method \citep{Moreira2003}, which was used in the original \redaps analysis and is widely used, and the Fuller method \citep{Fuller1977}.

We report mean squared error (MSE), the type I error rate of testing $H_0: \lambda = \lambda_0$ at the true $\lambda_0$ (level $\alpha = 0.05$), and the power against $H_0: \lambda = 0$. Appendix \ref{section: detailed sim results} reports the complete set of figures and tables for both informative and uninformative cluster-size scenarios, including the full numerical results. Our findings (see Figures \ref{fig: summary results}-\ref{tab: iv model results}, Tables \ref{tab: ancova results}-\ref{tab: iv model results} in Appendix \ref{section: detailed sim results}) (1) The unadjusted estimators had larger MSE and substantially less power than their adjusted counterparts across all three estimator types. (2) The only methods that controlled (or nearly controlled) type I error here were the ANCOVA methods with CR3 variance estimation compared to a $t$-distribution, and the regression-adjusted (pre- and post-rollout) HT estimator with either HT variance estimator. (3) Almost all methods achieve validity as the number of clusters grows. The exceptions were the \texttt{ivmodel} CLR method and the full-data adjusted HT estimator with the simplified variance estimator, which struggled most in the one-cluster-per-sequence setting. The primary source of their anti-conservativeness is anti-conservative variance estimation when the number of clusters is limited. (4) Among methods that eventually control type I error, the highest power was attained by the adjusted ANCOVA-I and ANCOVA-III methods, the adjusted \texttt{ivmodel} Fuller method, and the full-data regression-adjusted HT estimator with the conservative variance estimator. The latter two, however, suffer type I error inflation in smaller samples (\texttt{ivmodel} Fuller for $I \leq 60$, adjusted HT for $I \leq 30$). (5) Cluster-period average ANCOVA estimators outperform the scaled cluster-period total estimators. Figure \ref{fig: all ancova summary results} in Appendix \ref{appendix: aggregate sim results} shows a non-negligible MSE and power gap in favor of the average estimators, which perform comparably to (but slightly worse than) the individual-level ANCOVA estimators.

\subsection{Practical recommendations} \label{section: recs}
Based on the findings from the simulations, we outline practical recommendations for the analysis of stepped-wedge designs with individual-level noncompliance. These are illustrated in the flowchart in Appendix Figure \ref{fig: recommendation flowchart}. For all practical purposes, using an adjusted ANCOVA method with CR3 variance estimation and comparing to a $t$-distribution is our near-universal recommendation. This method worked well with both small and large numbers of clusters, as well as in the one-cluster-per-sequence setting. Perhaps its only limitation is that the variance estimation is not theoretically justified from a design-based perspective (as compared to the theoretical justification of the design-based variance estimation for the HT estimator). For the one-cluster-per-sequence setting, if a theoretically justified variance estimate is desired, we recommend the regression-adjusted HT estimator, using only pre- and post-rollout periods for adjustment, and the provably conservative variance estimate. Finally, when there are many clusters and the design is not one-cluster-per-sequence, the \texttt{ivmodel} Fuller method appears reliable and has good power. We also note that the conservatism in the HT estimators appears to stem from overly conservative variance estimates. It may be interesting to see if the ideas of \citet{Harshaw2021} can be utilized to close this gap, and the gap in power between the (adjusted) HT and other estimators. We do not recommend utilizing unadjusted estimators, as all performed worse than their regression-adjusted counterparts, due to a combination of increased finite-sample bias and variance. We also recommend exclusively testing the sharp null if finite-sample validity is necessary, which is reflected in the flowchart in Figure \ref{fig: recommendation flowchart}.

\section{Case Study: Reanalysis of the \redaps trial} \label{section: data analysis}
We apply our methods to re-analyze the \redaps trial. We restrict our attention to tests based on the ANCOVA-I and ANCOVA-III estimators utilizing the CR3 variance estimator compared to a $t$-distribution (our recommended methods), and tests based on the \texttt{ivmodel} package based on the Fuller and CLR methods (the CLR method was used in the original analysis of \citet{Courtright2024}). Table \ref{tab: REDAPS table 1} reports baseline covariates by intervention arm in the rollout periods. All standardized mean differences fall below 0.2, with `days between enrollments' and the categorical covariate `source of admission' having the highest differences. These baseline covariates were chosen based on domain knowledge due to their expected correlation with the outcomes of interest. The residual imbalances in baseline covariates further motivate us to utilize the regression-adjusted ANCOVA estimators. Moreover, there were no issues with missing data in the \redaps study, as the primary endpoint for the outcomes was specifically chosen so that outcomes could be ascertained for all patients. We analyze both the primary length of stay outcome as well as a secondary readmission count outcome considered by \cite{Courtright2024}. The length of stay outcome, measured in log hours, ranged from 3.67 to 7.85. The readmission count outcome was an integer ranging from 0 to 4.
\subsection{Testing association between time on intervention  and receipt of palliative care and median length of stay}

In this subsection, as a heuristic for testing intervention duration irrelevance (Assumption \ref{assumption: treat dur irrel}), we explore the potential association between time on intervention/control and 1) receipt of palliative care and 2) length of stay, conditional on baseline covariates. 
This diagnostic is relevant because, when the treatment effect instead varies with the duration of exposure, a summary estimand that assumes duration irrelevance recovers only a weighted average of the exposure-time-specific effects, with weights that need not be uniform across exposure periods, complicating its interpretation \citep[see, e.g.,][]{Kenny2022, Chen2026time}.
The comparisons are performed in the intervention and control groups separately, since Assumption \ref{assumption: treat dur irrel} pertains to the duration in which a hospital has been exposed to intervention/control. Time on intervention/control can be encoded as either categorical or continuous (measured in the number of time periods on intervention/control), so there are $2^3 = 8$ total tests. For the receipt of palliative care, we fit the following logistic regressions: (1) $D \sim 1 + \text{time on intervention} + \text{calendar period} + \text{baseline covariates} \text{ among $Z$ = 1}$, (2) $D \sim 1 + \text{time on control} + \text{baseline covariates} \text{ among $Z = 0$}$. 
For the length of stay, we fit the following linear regressions: (1) $Y \sim 1 + \text{time on intervention} + \text{calendar period} + \text{baseline covariates} \text{ among $Z$ = 1}$, (2) $Y \sim 1 + \text{time on control} + \text{baseline covariates} \text{ among $Z = 0$}$. 
Note that for the control regressions, calendar period is omitted since it is redundant with time on control in the stepped-wedge design. For all regressions, we conduct tests for the coefficient(s) corresponding to time on intervention/control being equal to 0, and use cluster-robust `CR3' standard errors as implemented in the \texttt{clubSandwich} package when coding time on intervention/control as continuous, and `CR2' when coding them as categorical, due to numerical instability of `CR3'. The results coding time on intervention/control as continuous are presented in Appendix Table \ref{tab: test duration irrelevance}. We see that none of the tests were rejected. Meanwhile, in Table \ref{tab: test duration irrelevance categorical}, we see that all but one of the 36 $p$-values corresponding to a time on intervention/control category fall above 0.05. In sum, these analyses provide some reassurance (but not definitive proof) that the intervention/control duration is irrelevant to potential outcomes, if one is willing to believe that the patients are exchangeable/comparable across clusters, conditional on baseline covariates.

\begin{table}[ht!]
    \centering
    \caption{Baseline characteristics by treatment group during rollout periods, reported as mean (SD) and absolute standardized mean differences (SMD).}
    \label{tab: REDAPS table 1}
    \small
    \begin{tabular}{lllll}
        \toprule
        & Usual care & Default order & abs. \\ 
        Characteristic &   ($n$ = 10533) & ($n$ = 7982) & SMD \\ 
        \midrule
        Age, mean (SD)  &   77.87 (8.25) & 78.00 (8.28) & 0.016 \\ 
        Source of admission & & \\
        \quad Transfer from another hospital &   0.07  & 0.13 & 0.195 \\ 
        \quad Home  &   0.87  & 0.81 & 0.168 \\ 
        \quad Skilled nursing facility  &   0.06 & 0.06 & 0.000\\ 
        Elixhauser Comorbidity Index, mean (SD)   & 14.99 (11.35) & 15.01 (11.62) & 0.002\\ Eligible Diagnosis & & \\
        \quad Dementia  &  0.30  & 0.32 & 0.037\\ 
        \quad Kidney failure  &   0.14  & 0.11 & 0.079 \\ 
        \quad Chronic obstructive pulmonary heart disease &   0.69  & 0.70 & 0.020 \\ 
        Days between repeated enrollments, mean (SD)  &   72.77 (133.54) & 95.56 (177.23) & 0.145\\ Location at enrollment & & \\
        \quad Intensive care unit  &   0.24  & 0.27  & 0.071\\ \quad Hospital ward  &   0.76  & 0.83 & 0.071  \\ 
        \bottomrule
    \end{tabular}
\end{table}

\begin{table}[ht!]
\centering
\caption{Tests of association between time on intervention/control and $D$/$Y$ using covariate-adjusted logistic/linear regression with CR2 cluster-robust standard errors, treating time on intervention/control as a categorical variable. Each column reports $p$-values for a factor level.}
\label{tab: test duration irrelevance categorical}
\begin{tabular}{ccccccccccc}
  \hline
$D/Y$ & inter./cont. & 1 & 2 & 3 & 4 & 5 & 6 & 7 & 8 & 9 \\ 
  \hline
$D$ & inter. & 0.75 & 0.97 & 0.33 & 0.42 & 0.85 & 0.85 & 0.16 & 0.41 & 0.10 \\ 
  $Y$ & inter. & 0.70 & 0.07 & 0.94 & 0.33 & 0.56 & 0.33 & 0.87 & 0.23 & 0.47 \\ 
  $D$ & cont. & 0.61 & 0.80 & 0.45 & 0.73 & 0.57 & 0.23 & 0.64 & 0.51 & 0.04 \\ 
  $Y$ & cont. & 0.58 & 0.87 & 0.24 & 0.16 & 0.77 & 0.85 & 0.96 & 0.62 & 0.09 \\ 
   \hline
\end{tabular}
\end{table}

\subsection{Analysis results for Fisher's sharp null}
We first report results from testing Fisher's sharp null of no effect, namely $H_0^F: Y_{ijk}(1) = Y_{ijk}(0) \ \forall i,j,k$ for both outcomes. Letting $\Omega$ denote the possible stepped-wedge randomizations ($|\Omega| = 11!$ in the \redaps trial), a finite-sample valid $p$-value for an arbitrary test statistic $t$ with observed test statistic value $q$ can be computed as follows: $t(\bm{Z}, \bm{Y}, \bm{X})$ as follows:
\begin{equation*} 
    \mathbb P\{t(\bm{Z}, \bm{Y}, \bm{X}) \geq q | \mathcal{Y}, \mathcal{D}, \mathcal{X}\} \equiv \frac{|\bm{z} \in \Omega: t(\bm{z}, \bm{Y}, \bm{X}) \geq q|}{|\Omega|} \approx \frac{1 + \sum_{l=1}^M \mathbbm{1}\{t(\bm{\widetilde{z}}_l,\bm{Y}, \bm{X}) \geq q\}}{M + 1},
\end{equation*}
where we draw $M = 5000$ stepped-wedge randomizations $\bm{\widetilde{z}}_l$ at random (according to Assumption \ref{assumption: sw randomization}) as a Monte Carlo approximation. We delegate additional details of the procedure to Appendix \ref{appendix: sharp null} (also see \cite{Ji2017RandomizationInsurance}). For our analysis, we chose ANCOVA deviates as the test statistic. Specifically, we compute deviates $\widehat{\tau}_0 / S(0)$ where the numerators are computed using the unadjusted, ANCOVA-I, and ANCOVA-III estimators, and the denominators by the corresponding design-based standard deviation estimates introduced in \cite{Chen2025}, replacing instances of $I_j-1$ with 1. The results are reported in Table \ref{tab: redaps results sharp null}. We reiterate that the $p$-values in Table \ref{tab: redaps results sharp null} are finite-sample valid, up to Monte Carlo error. Notably, both adjusted ANCOVA-based deviate statistics reject the null of no effect for the length of stay outcome using a one-sided test at level $\alpha = 0.1$, though no tests reject the null of no effect for the secondary readmission outcome.

\begin{table}[htbp]
    \centering
        \caption{Testing the sharp null in \redaps using ANCOVA deviates with design-based variance estimators for primary ranked length of stay and secondary 30-day readmission outcomes.}
    \label{tab: redaps results sharp null}
    \small
    \begin{tabular}{lcc|cc}
        \toprule
        & \multicolumn{2}{c}{Ranked length of stay} & \multicolumn{2}{c}{Readmission within 30 days count} \\
        \cmidrule(lr){2-3} \cmidrule(lr){4-5}
        & 1-sided $p$-value & 2-sided $p$-value & 1-sided $p$-value & 2-sided $p$-value  \\
        \midrule
        unadj. ANCOVA dev. & 0.124 & 0.228 &  0.107 & 0.159 \\ 
        ANCOVA-I dev. & 0.096 & 0.173 & 0.327 &  0.570  \\ ANCOVA-III dev. & 0.090 & 0.170 & 0.318 & 0.564\\
        \bottomrule
    \end{tabular}
\end{table}

\subsection{Analysis results for the effect ratio}
We next conduct inference on the effect ratio $\lambda$. Table \ref{tab: redaps results combined} reports point estimates and confidence intervals for the effect ratio, which under Assumption \ref{assumption: iv}, equals the average effect of a palliative care consultation on the length of stay outcome among the compliers. A shorter length of stay is desirable, so a negative effect indicates that palliative care is beneficial. The adjusted CLR \texttt{ivmodel} method produced a statistically significant result (point estimate $-0.1062$, 95\% CI [-0.186,-0.0271]). Given the inflated type I error of the CLR method in simulation, however, this is likely an unreliable finding. The methods that performed well in simulation for the one-cluster-per-sequence setting, namely the ANCOVA methods, produced more extreme point estimates of roughly $-0.18$ and confidence intervals that contained 0, though the 90\% confidence intervals just barely covered 0. In Table \ref{tab: redaps results combined}, we also report results for the secondary outcome, readmission count within 30 days. The original analysis in \citet{Courtright2024} performed only ITT analyses for this secondary outcome and found a statistically significant negative effect of the randomized default order intervention on readmission count. Fewer re-admissions are desirable, so a negative effect indicates that palliative care is beneficial. Our results also suggest a negative effect of palliative care on readmission count among compliers, though the confidence intervals include 0, except for the unadjusted \texttt{ivmodel} CLR method, which, again, may not be robust in the one-cluster-per-sequence setting. In addition to the individual-level ANCOVA estimators, we perform analyses using the cluster-period average and scaled cluster-period total ANCOVA estimators (inferences are for $\tau_Y$ for the latter). Similar to the pattern observed in simulation, the cluster-period average ANCOVA estimators were less efficient (produced wider intervals) than the individual ANCOVA estimators, and the scaled cluster-period total ANCOVA estimators were the least efficient. The inefficiency of the scaled cluster-period total estimators was extremely pronounced for the primary length of stay outcome, which was due to extremely large (estimated) variances.

\begin{table}[ht!]
    \centering
    \caption{\redaps trial results: primary ranked length of stay and secondary readmission count within 30 days outcomes. For the (individual) ANCOVA, \texttt{ivmodel}, and average cluster-period ANCOVA, inference is performed on the effect ratio $\lambda$. For the scaled cluster-period total ANCOVA, inferences are for the intention-to-treat, $\tau_Y$.}
    \label{tab: redaps results combined}
    \small
    \begin{tabular}{lccc|ccc}
        \toprule
        & \multicolumn{3}{c}{Ranked length of stay} & \multicolumn{3}{c}{Readmission within 30 days count} \\
        \cmidrule(lr){2-4} \cmidrule(lr){5-7}
        Estimator & Point & 90\% CI & 95\% CI & Point & 90\% CI & 95\% CI \\
        \midrule
        ANCOVA-I & -0.18 & [-0.46, 0.01] & [-0.60, 0.07] & -0.02 & [-0.11, 0.03] & [-0.16, 0.04] \\
        ANCOVA-III & -0.18 & [-0.46, 0.00] & [-0.62, 0.05] & -0.02 & [-0.09, 0.03] & [-0.13, 0.04] \\
        unadj. ANCOVA & -0.15 & [-0.37, 0.04] & [-0.49, 0.10] & -0.07 & [-0.16, -0.01] & [-0.22, 0.01] \\
        \hline
        \texttt{ivmodel}, adj. CLR & -0.11 & [-0.17, -0.04] & [-0.19, -0.03] & -0.02 & [-0.06, 0.03] & [-0.07, 0.04] \\
        \texttt{ivmodel}, adj. Fuller & -0.11 & [-0.28, 0.07] & [-0.32, 0.11] & -0.02 & [-0.08, 0.05] & [-0.09, 0.06] \\
        \texttt{ivmodel}, unadj. CLR & -0.08 & [-0.14, -0.01] & [-0.16, 0.01] & -0.06 & [-0.10, -0.01] & [-0.11, 0.00] \\
        \texttt{ivmodel}, unadj. Fuller & -0.07 & [-0.23, 0.09] & [-0.26, 0.12] & -0.06 & [-0.12, 0.00] & [-0.13, 0.02] \\
        \hline
        avg ANCOVA-I & -0.21 & [-0.66, 0.03] & [-0.87, 0.09] & 0.04 & [-0.03, 0.11] & [-0.06, 0.13] \\
        avg ANCOVA-III & -0.16 & [-0.81, 0.23] & [-1.10, 0.32] & 0.05 & [-0.04, 0.17] & [-0.06, 0.22] \\
        avg unadj. ANCOVA & -0.14 & [-0.34, 0.04] & [-0.49, 0.12] & -0.06 & [-0.18, 0.00] & [-0.25, 0.01] \\
        \hline
        scaled ANCOVA-I & -1.33 & [-3.45, 0.79] & [-3.95, 1.29] & -0.05 & [-0.16, 0.06] & [-0.18, 0.08] \\
        scaled ANCOVA-III & -1.28 & [-3.41, 0.85] & [-3.90, 1.35] & -0.04 & [-0.14, 0.06] & [-0.17, 0.09] \\
        scaled unadj. ANCOVA & -1.39 & [-3.72, 0.94] & [-4.27, 1.49] & -0.08 & [-0.20, 0.04] & [-0.23, 0.07] \\
        \bottomrule
    \end{tabular}
\end{table}

\subsection{Sensitivity analysis for the sample complier average effect}
Under Assumption \ref{assumption: iv}, the effect ratio $\lambda$ can be interpreted as the sample complier average treatment effect. Moreover, by design, all components of Assumption \ref{assumption: iv} are highly likely to hold \citep{Courtright2024}. Nevertheless, we probe the sensitivity of our estimates and confidence intervals to potential violations of the monotonicity (no defiers) component of Assumption \ref{assumption: iv}. To do so, we introduce two sensitivity parameters. Let $\gamma$ denote the ratio of defiers to compliers in the finite population of individuals in the rollout periods of the trial. Next, let $\text{SATE}_D$ denote the sample average treatment effect among the defiers. A straightforward calculation (see Appendix \ref{appendix: sensitivity}) shows that $\text{SATE}_C$, defined as the sample average treatment effect among compliers, takes the form $(1-\gamma)\lambda - \gamma \text{SATE}_D$. Thus, for posited values of $\gamma$ and $\text{SATE}_D$, we can obtain estimates/confidence bounds for $\text{SATE}_C$, since our inferences are always valid for $\lambda$, the effect ratio. Note that $\gamma = 0$ recovers the monotonicity (no defiers) assumption, in which case $\text{SATE}_C = \lambda$. Figure \ref{fig: sensitivity} plots point estimates for $\text{SATE}_C$ over a grid of values for $\gamma$ and $\text{SATE}_D$. Since the confidence intervals for both outcomes include zero, we focus on sensitivity of the point estimates. The original estimates come from the ANCOVA-I/III estimators. Note that for the point estimate for the primary length of stay outcome to cross zero, the treatment effect among defiers would need to be substantially more extreme than -0.5 even if the ratio of defiers to compliers is 0.2. We believe that such a large ratio of defiers to compliers is extremely unlikely in the \redaps trial. For the secondary readmission outcome, less extreme values for $\gamma$ and $\text{SATE}_D$ can pull the point estimate to 0; for example, $\gamma = 0.1$ and $\text{SATE}_D = -0.2$ suffice.

\begin{figure}[ht]
  \centering
  \begin{subfigure}{0.48\textwidth}
    \includegraphics[width=\textwidth]{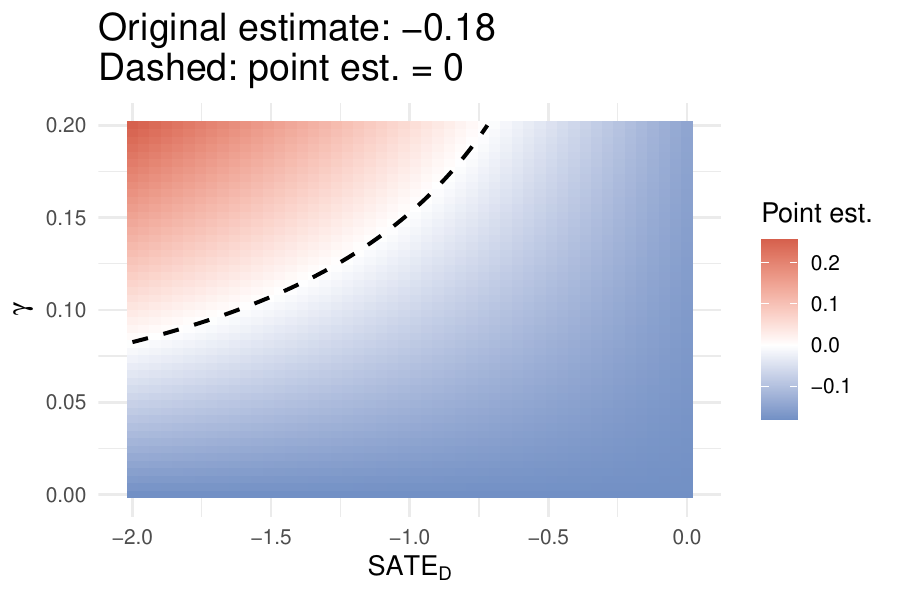}
    \caption{Sensitivity plot for the length of stay outcome.}
  \end{subfigure}
  \hfill
  \begin{subfigure}{0.48\textwidth}
    \includegraphics[width=\textwidth]{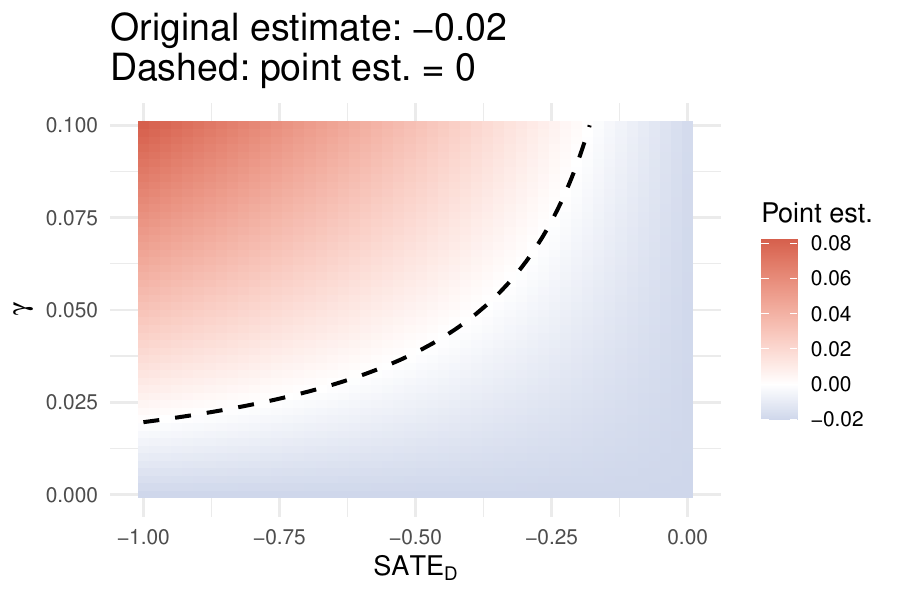}
    \caption{Sensitivity plot for the readmission outcome.}
  \end{subfigure}
  \caption{$\gamma$, the proportion of defiers to compliers, is on the $y$-axis and $\text{SATE}_D$, the sample average treatment effect of a palliative care consultation among the defiers, is on the $x$-axis. }
  \label{fig: sensitivity}
\end{figure}
\vspace{-2ex}
\section{Discussion}
Stepped-wedge designs present interesting new statistical challenges that necessitate new solutions. Noncompliance is one such challenge that has received surprisingly little attention in the stepped-wedge design literature, despite being a common occurrence in pragmatic trials, as exemplified by the \redaps palliative care trial. To the best of our knowledge, this is the first work that takes an estimand-aligned causal inference perspective to address noncompliance in stepped-wedge designs, targeting the average treatment effect among compliers under minimal assumptions. Unlike most existing approaches based on parametric models, our procedures remain valid regardless of model specification and preserve alignment between intention-to-treat and instrumental variable inferences. In particular, they avoid paradoxical outcomes such as rejecting the null hypothesis for the complier average causal effect while the intention-to-treat null is not rejected, a conflict observed in the original analysis of the \redaps trial \citep{Courtright2024}, due to a lack of appropriate methodology at the time of analysis. Moreover, given the connection between stepped-wedge designs and staggered adoption, our methods may be applicable to the staggered adoption setting in the economics literature \citep{Athey2022}. 

Our simulation studies demonstrate that ANCOVA estimators with CR3 variance estimation and $t$-reference distributions offer strong finite-sample performance, including in the challenging one-cluster-per-sequence setting. These methods consistently controlled Type I error and achieved good power, while also benefiting from covariate adjustment. In contrast, instrumental variable estimators using the \texttt{ivmodel} package, particularly the conditional likelihood ratio (CLR) method, exhibited notable Type I error inflation in finite samples. Although the Horvitz–Thompson estimators provide conservative variance estimation with strong theoretical guarantees, they tended to have less power, especially without regression adjustment.

Our reanalysis of the \redaps trial found that while the adjusted CLR-based IV method yielded a significant reduction in length of stay, our recommended ANCOVA procedures produced more negative point estimates but did not reach statistical significance at the conventional 0.05 level. This divergence echoes our simulation findings that suggest the CLR method may have inflated type I error in small-sample, one-cluster-per-sequence settings. Hence, conclusions drawn from model-based IV estimators should be interpreted cautiously in such designs.

Our work leaves several directions open for future research. One limitation is the assumption of intervention duration irrelevance, which may be violated in settings with learning, fatigue, or waning effects. It would be valuable to develop alternative estimands and estimators that explicitly account for such dynamics; see \citet{li2021mixed} and \citet{Wang2024} for an elaboration of possible dynamic treatment effect structures in stepped-wedge designs. Additionally, while the CR3 variance estimator performs well empirically, it lacks a formal design-based justification in stepped-wedge contexts. Conversely, the Horvitz–Thompson approach offers theoretical conservatism but at the cost of substantial power loss. Developing variance estimators that combine the empirical reliability of CR3 with the design-based justification of Horvitz–Thompson methods remains an important open challenge in stepped-wedge and other complicated designs. Relatedly, our methods adopt the $I \to \infty$, $J$ fixed regime, which matches the way larger stepped-wedge trials are typically envisioned (by adding clusters within a fixed sequence structure rather than by adding periods). Developing valid design-based inference when both $I$ and $J$ are fixed remains an important open problem, as non-asymptotic guarantees for sample average treatment effects are only beginning to be understood even in far simpler experimental designs \citep{sandoval2026nonasymptotic}. For such settings, the finite-sample-valid randomization tests we considered still provide a practical alternative in the interim. Though all outcomes were measured in the \redaps trial, missing or censored outcomes can occur in stepped-wedge designs \citep{ryan2025power}. For instance, in cross-sectional designs with survival endpoints, the outcome may be right-censored, a setting for which estimand-aligned methods have not been developed. One possible route is to construct pseudo-observations for the survival quantity of interest and then apply our ANCOVA or Horvitz-Thompson estimators to these pseudo-observations \citep{kjaersgaard2016instrumental}. For missing outcomes, the design-based framework of \cite{heng2025design} could in principle be adapted, but the specific extensions would merit additional research. Finally, to compute confidence intervals, our procedure relies on inverting hypothesis tests, which can be computationally burdensome. It can be of interest in future work to find a more computationally efficient procedure, potentially in the vein of \cite{ren2024model}, who construct computationally efficient confidence intervals for the effect ratio in a standard instrumental variable setting.

%\section{Data availability statement} \label{data-availability-statement}
%The participants of this study did not give written consent for their data to be shared publicly, so due to the sensitive nature of the research, supporting data is not publicly available.

\section*{Acknowledgments}
F.L and M.O.H are supported by a Patient-Centered Outcomes Research Institute Award (ME-2022C2-27676). The statements presented are solely the responsibility of the authors and do not necessarily represent the views of PCORI, its Board of Governors or Methodology Committee, or the National Institutes of Health.

{
    \singlespacing
    \setlength{\bibsep}{0pt}
    \bibliographystyle{agsm}
    \if0\anon
  \bibliography{references-blinded}     
\else
  \bibliography{references}     
\fi
    
}

\setcounter{figure}{0}
\setcounter{table}{0}
\renewcommand{\thefigure}{\Alph{section}.\arabic{figure}}
\renewcommand{\thetable}{\Alph{section}.\arabic{table}}
\appendix
In the supplementary material, all expectations, probabilities, and (co)variance expressions are implicitly conditional on $\mathcal{Y}, \mathcal{D}, \mathcal{X}$. The explicit conditioning is omitted for better readability. As in the main text, we use the notation $e_{ij} \equiv \mathbb{P}(Z_{ij} = 1)$. We also introduce the notation $e^{ab}_{ij,i'j'} \equiv \mathbb{P}(Z_{ij} = a, Z_{i'j'} = b)$ for $a,b \in \{0,1\}$.

\section{Results for the ANCOVA estimators} \label{appendix: ancova}
\subsection{Summary of ANCOVA estimators}
\begin{table}[h]
    \centering
    \renewcommand{\arraystretch}{1.3} % Adjust row spacing
    \begin{tabular}{l c c c c}
        \toprule
        \textbf{Estimator} & \textbf{Mean model for}  & \textbf{Covariate effects} & \textbf{Treatment-by-} \\
                           &  $(Y_{ijk} - \lambda_0 D_{ijk})$                  &  & \textbf{covariate interactions} \\
        \midrule
        Unadjusted & $\beta_j + \theta_j Z_{ij}$ &  $\times$ & $\times$ \\
        ANCOVA-I  & $\beta_j + \theta_j Z_{ij} + X_{ijk}^c \eta$ & $\checkmark$ & $\times$  \\
        ANCOVA-III & $\beta_j + \theta_j Z_{ij} + X_{ijk}^c\gamma + Z_{ij}X_{ijk}^c \eta$ & $\checkmark$ & $\checkmark$ \\
        
        \bottomrule
    \end{tabular}
    \caption{Comparison of ANCOVA estimators with different covariate adjustments.}
    \label{tab: ancova comparison}
\end{table}

\subsection{Proof sketch of Proposition \ref{prop: ancova test validity}}

To demonstrate consistency and a central limit theorem for the ANCOVA-type estimators, one can largely adapt the regularity conditions of \citet{Chen2025}. Proposition \ref{prop: ancova test validity} directly follows from Theorem 1 of \citet{Chen2025}, with their outcomes $(Y_{ijk})$ replaced by the residualized counterparts $(Y_{ijk} - \lambda_0 D_{ijk})$, and by invoking the analogous versions of regularity conditions (i)-(v) of their Theorem 1. We note that the notation is quite cumbersome, and a formal proof is not particularly interesting. Thus, we refer the reader to \citet{Chen2025} for details.

\section{Results for the Horvitz-Thompson estimators} \label{appendix: horvitz-thompson}

\subsection{Variance estimation}

To estimate the variance, we introduce Horvitz-Thompson style estimators with added correction terms that ensure conservativeness. Taking the convention that $0/0 = 0$, let 
\begin{align*}
    \widehat{\text{Var}}&(\widehat{\tau}_{1,\lambda_0,\text{HT}}) = \frac{1}{N^2}\sum_{i=1}^I \sum_{j=1}^J \mathbbm{1}(Z_{ij} = 1)\frac{1-e_{ij}}{e_{ij}^2} N_{ij}^2 \overline{R}_{ij}^{2} \displaybreak[0]\\ 
    &+ \frac{1}{N^2}\sum_{i=1}^I \sum_{j=1}^J\sum_{(i',j') \neq (i,j)} \mathbbm{1}(Z_{ij} = 1)\mathbbm{1}(Z_{i'j'} = 1)\frac{e^{11}_{ij,i'j'} - e_{ij}e_{i'j'}}{e^{11}_{ij,i'j'}e_{ij}e_{i'j'}} N_{ij}N_{i'j'} \overline{R}_{ij}\overline{R}_{i'j'} \displaybreak[0]\\ 
    &+ \frac{1}{N^2}\sum_{i=1}^I \sum_{j=1}^J\sum_{(i',j'): e^{11}_{ij,i'j'} = 0} \left\{ \frac{\mathbbm{1}(Z_{ij} = 1)N_{ij}^2\overline{R}_{ij}^2}{2e_{ij}} + \frac{\mathbbm{1}(Z_{i'j'} = 1)N_{i'j'}^2\overline{R}_{i'j'}^2}{2e_{i'j'}} \right\}, \displaybreak[0]\\ 
    \widehat{\text{Var}}&(\widehat{\tau}_{0,\lambda_0,\text{HT}}) =  \frac{1}{N^2}\sum_{i=1}^I \sum_{j=1}^J \mathbbm{1}(Z_{ij} = 0)\frac{e_{ij}}{(1-e_{ij})^2} N_{ij}^2 \overline{R}_{ij}^{2} \displaybreak[0]\\ 
    &+ \frac{1}{N^2}\sum_{i=1}^I \sum_{j=1}^J\sum_{(i',j') \neq (i,j)} \mathbbm{1}(Z_{ij} = 0)\mathbbm{1}(Z_{i'j'} = 0)\frac{e^{00}_{ij,i'j'} - (1-e_{ij})(1-e_{i'j'})}{e^{00}_{ij,i'j'}(1-e_{ij})(1-e_{i'j'})} N_{ij}N_{i'j'} \overline{R}_{ij}\overline{R}_{i'j'} \displaybreak[0] \\  
    &+ \frac{1}{N^2}\sum_{i=1}^I \sum_{j=1}^J\sum_{(i',j'): e^{00}_{ij,i'j'} = 0} \left\{ \frac{\mathbbm{1}(Z_{ij} = 0)N_{ij}^2\overline{R}_{ij}^2}{2(1-e_{ij})}+\frac{\mathbbm{1}(Z_{i'j'} = 0)N_{i'j'}^2\overline{R}_{i'j'}^2}{2(1-e_{i'j'})} \right\}, \displaybreak[0]\\
    \widehat{\text{Cov}}&(\widehat{\tau}_{1,\lambda_0,\text{HT}},\widehat{\tau}_{0,\lambda_0,\text{HT}}) = -\frac{1}{N^2}\sum_{i=1}^I \sum_{j=1}^J \sum_{(i',j'): e^{10}_{ij,i'j'} = 0} \left\{\frac{\mathbbm{1}(Z_{ij}=1)N_{ij}^2 \overline{R}_{ij}^2}{2e_{ij}} + \frac{\mathbbm{1}(Z_{i'j'}=0)N_{i'j'}^2 \overline{R}_{i'j'}^2}{2(1-e_{i'j'})}\right\} \displaybreak[0] \\ 
    &+ \frac{1}{N^2}\sum_{i=1}^I \sum_{j=1}^J\sum_{(i',j') \neq (i,j)}\mathbbm{1}(Z_{ij} = 1)\mathbbm{1}(Z_{i'j'} = 0)\frac{e^{10}_{ij,i'j'} - e_{ij}(1-e_{i'j'})}{e^{10}_{ij,i'j'}e_{ij}(1-e_{i'j'})} N_{ij}N_{i'j'} \overline{R}_{ij}\overline{R}_{i'j'}.
\end{align*}
It is relatively straightforward to show that the variance estimators have expectation larger than the true variance, and the covariance estimator has an expectation smaller than the true variance. Now, let 
\begin{equation} \label{eqn: conservative variance estimator}
    \widehat{\text{Var}}(\widehat{\tau}_{\lambda_0,\text{HT}}) = \widehat{\text{Var}}(\widehat{\tau}_{1,\lambda_0,\text{HT}}) + \widehat{\text{Var}}(\widehat{\tau}_{0,\lambda_0,\text{HT}}) - 2\widehat{\text{Cov}}(\widehat{\tau}_{1,\lambda_0,\text{HT}},\widehat{\tau}_{0,\lambda_0,\text{HT}}).
\end{equation}
Having defined the variance estimator, we now demonstrate its conservativeness.

\begin{proof}[Proof of Lemma \ref{lemma: conservative variance}]
    The argument follows in a manner similar to that of Proposition 5.6 in \citet{Aronow2017}. Thus, the formal proof is omitted. The general intuition is that some components of the estimators are unbiased for some components of the true variance, whereas other components are unbiased for an upper bound of the other (non-identified) components of the variance (by the inequality $ab \leq \frac{1}{2}(a^2 + b^2)$).
\end{proof}

In the main manuscript, we also allude to potentially anti-conservative variance estimators. These estimators omit the components that unbiasedly estimate the upper bounds on the non-identified components of the variance. They are explicitly written below:
\begin{align*}
    \widehat{\text{Var}}_{\text{simp}}&(\widehat{\tau}_{1,\lambda_0,\text{HT}}) = \frac{1}{N^2}\sum_{i=1}^I \sum_{j=1}^J \mathbbm{1}(Z_{ij} = 1)\frac{1-e_{ij}}{e_{ij}^2} N_{ij}^2 \overline{R}_{ij}^{2} \displaybreak[0]\\ 
    &+ \frac{1}{N^2}\sum_{i=1}^I \sum_{j=1}^J\sum_{(i',j') \neq (i,j)} \mathbbm{1}(Z_{ij} = 1)\mathbbm{1}(Z_{i'j'} = 1)\frac{e^{11}_{ij,i'j'} - e_{ij}e_{i'j'}}{e^{11}_{ij,i'j'}e_{ij}e_{i'j'}} N_{ij}N_{i'j'} \overline{R}_{ij}\overline{R}_{i'j'} , \displaybreak[0]\\ 
    \widehat{\text{Var}}_{\text{simp}}&(\widehat{\tau}_{0,\lambda_0,\text{HT}}) =  \frac{1}{N^2}\sum_{i=1}^I \sum_{j=1}^J \mathbbm{1}(Z_{ij} = 0)\frac{e_{ij}}{(1-e_{ij})^2} N_{ij}^2 \overline{R}_{ij}^{2} \displaybreak[0]\\ 
    &+ \frac{1}{N^2}\sum_{i=1}^I \sum_{j=1}^J\sum_{(i',j') \neq (i,j)} \mathbbm{1}(Z_{ij} = 0)\mathbbm{1}(Z_{i'j'} = 0)\frac{e^{00}_{ij,i'j'} - (1-e_{ij})(1-e_{i'j'})}{e^{00}_{ij,i'j'}(1-e_{ij})(1-e_{i'j'})} N_{ij}N_{i'j'} \overline{R}_{ij}\overline{R}_{i'j'}, \displaybreak[0]\\
    \widehat{\text{Cov}}_{\text{simp}}&(\widehat{\tau}_{1,\lambda_0,\text{HT}},\widehat{\tau}_{0,\lambda_0,\text{HT}}) \displaybreak[0]\\
    &= \frac{1}{N^2}\sum_{i=1}^I \sum_{j=1}^J\sum_{(i',j') \neq (i,j)}\mathbbm{1}(Z_{ij} = 1)\mathbbm{1}(Z_{i'j'} = 0)\frac{e^{10}_{ij,i'j'} - e_{ij}(1-e_{i'j'})}{e^{10}_{ij,i'j'}e_{ij}(1-e_{i'j'})} N_{ij}N_{i'j'} \overline{R}_{ij}\overline{R}_{i'j'}.
\end{align*}
Analogous to the conservative variance estimator, the simplified one takes the following form: 
\begin{equation} \label{eqn: simplified variance estimator}
    \widehat{\text{Var}}_{\text{simp}}(\widehat{\tau}_{\lambda_0,\text{HT}}) = \widehat{\text{Var}}_{\text{simp}}(\widehat{\tau}_{1,\lambda_0,\text{HT}}) + \widehat{\text{Var}}_{\text{simp}}(\widehat{\tau}_{0,\lambda_0,\text{HT}}) - 2\widehat{\text{Cov}}_{\text{simp}}(\widehat{\tau}_{1,\lambda_0,\text{HT}},\widehat{\tau}_{0,\lambda_0,\text{HT}}).
\end{equation}

\subsection{Consistency and central limit theorem}

We wish to outline conditions under which the estimator $\widehat{\tau}_{\lambda_0, \text{HT}} = \widehat{\tau}_{1,\lambda_0, \text{HT}}- \widehat{\tau}_{0,\lambda_0, \text{HT}}$ is consistent and satisfies a central limit theorem. To do so, first recall that the stepped wedge treatment assignment simply places $I$ clusters into $J$ period bins, determining when treatment is first initiated for the cluster. We can equivalently view the treatment assignment in the following manner: Let $(V_{1},\ldots, V_I)$ be a random vector which is a random draw from the set of permutations of $(1,\ldots,I)$, with each permutation having equal probability $(1/I!)$ of being chosen. Then the clusters $\{i: V_i \leq I_1\}$ are assigned to be given treatment at time 1, clusters $\{i: I_1 < V_i \leq I_2\}$ are assigned to treatment at time 2, and so on. Next, consider an $I \times I$ grid denoted by $c(\cdot,\cdot)$. Let $i = 1,\ldots,I$ index the clusters and $l = 1,\ldots,I$ index the position in line of cluster $i$. We wish to write the Horvitz-Thompson estimator in the form $\sum_{i=1}^I c(i,V_i)$. Essentially, we simply need to construct define $c(i,l)$ such that it corresponds to the contribution of cluster $i$ to the test statistic when cluster $i$ is placed $l$th in line; equivalently, when $V_i = l$. For each $l = 1,\ldots,I$, define $P_l$ to be the period of treatment initiation when the cluster is placed $l$th in line. In the case where exactly one cluster is treated at each time period, $P_l = l$. Now, define $c(i,l)$ as follows:
\begin{equation}
    c(i,l) = \frac{1}{N}\sum_{j = P_l}^J \frac{N_{ij}\overline{R}_{ij}(1)}{e_{ij}} - \frac{1}{N}\sum_{j = 1}^{P_l - 1} \frac{N_{ij}\overline{R}_{ij}(0)}{1-e_{ij}},
\end{equation}
with the convention that when $P_l = 1$, the second term is 0. We next compute row means, denoted by $\overline{c}_{i\cdot}$, and column means, denoted by $\overline{c}_{\cdot l}$. Specifically,
\begin{align*}
    \overline{c}_{i\cdot} &= \frac{1}{N}\frac{1}{I}\sum_{l = 1}^I\sum_{j = P_l}^J \frac{N_{ij}\overline{R}_{ij}(1)}{e_{ij}} - \frac{1}{N}\frac{1}{I}\sum_{l = 1}^I\sum_{j = 1}^{P_l - 1} \frac{N_{ij}\overline{R}_{ij}(0)}{1-e_{ij}} \displaybreak[0]\\ 
    &= \frac{1}{N}\frac{1}{I}\sum_{l = 1}^I\sum_{j = P_l}^J \frac{N_{ij}\overline{R}_{ij}(1)}{I_j/I} - \frac{1}{N}\frac{1}{I}\sum_{l = 1}^I\sum_{j = 1}^{P_l - 1} \frac{N_{ij}\overline{R}_{ij}(0)}{1-I_j/I} \displaybreak[0]\\ 
    &= \frac{1}{N}\sum_{l = 1}^I\sum_{j = P_l}^J \frac{N_{ij}\overline{R}_{ij}(1)}{I_j} - \frac{1}{N}\sum_{l = 1}^I\sum_{j = 1}^{P_l - 1} \frac{N_{ij}\overline{R}_{ij}(0)}{I-I_j} \displaybreak[0]\\ 
    &= \frac{1}{N}\sum_{j = 1}^J N_{ij}\overline{R}_{ij}(1) - \frac{1}{N}\sum_{j = 1}^{J} N_{ij}\overline{R}_{ij}(0),
\end{align*}
where the last equality is due to there being $I_j$ and $I-I_j$ occurrences of $N_{ij}\overline{R}_{ij}(1)$ and $N_{ij}\overline{R}_{ij}(0)$ respectively when summing over $l=1,\ldots,I$. So $\overline{c}_{i\cdot}$ is the overall (residualized) treatment effect in cluster $i$. The column means are as follows:
\begin{align*}
    \overline{c}_{\cdot l} &= \frac{1}{N}\frac{1}{I}\sum_{i = 1}^I\sum_{j = P_l}^J \frac{N_{ij}\overline{R}_{ij}(1)}{e_{ij}} - \frac{1}{N}\frac{1}{I}\sum_{l = 1}^I\sum_{j = 1}^{P_l - 1} \frac{N_{ij}\overline{R}_{ij}(0)}{1-e_{ij}} \displaybreak[0]\\ 
    &=  \frac{1}{N}\sum_{j = P_l}^J \frac{I^{-1}\sum_{i = 1}^I N_{ij}\overline{R}_{ij}(1)}{e_{ij}} - \frac{1}{N}\sum_{j = 1}^{P_l - 1} \frac{I^{-1}\sum_{l = 1}^I N_{ij}\overline{R}_{ij}(0)}{1-e_{ij}},
\end{align*}
which is a weighted difference treatment and control potential outcomes with a cutoff at $P_l$, averaged over clusters. Finally, it is easy to check that the overall mean of the $c$ matrix is exactly $\tau_{\lambda_0}$. We next introduce a regularity condition to ensure the Horvitz-Thompson estimator satisfies a Central Limit Theorem (CLT).

\begin{assumption} \label{assumption: ht clt regularity}
    \begin{equation*}
        \lim_{I \to \infty}\frac{\max_{1 \leq i,l \leq I} \{c(i,l) - \overline{c}_{i\cdot}-\overline{c}_{\cdot l}+ \tau_{\lambda_0}\}^2}{I^{-1}\sum_{i=1}^I\sum_{l=1}^I \{c(i,l) - \overline{c}_{i\cdot}-\overline{c}_{\cdot l}+ \tau_{\lambda_0}\}^2} \to 0.
    \end{equation*}
\end{assumption}

\begin{proposition} \label{prop: ht clt}
    The distribution of $\widehat{\tau}_{\lambda_0,\text{HT}}$ is asymptotically normal under Assumption \ref{assumption: ht clt regularity}.
\end{proposition}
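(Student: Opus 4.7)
The plan is to recognize Proposition \ref{prop: ht clt} as an immediate consequence of Hoeffding's (1951) combinatorial central limit theorem, applied to the permutation representation just constructed. The setup already does most of the work: the Horvitz--Thompson estimator has been rewritten as $\sum_{i=1}^I c(i, V_i)$ for a uniform random permutation $V = (V_1, \ldots, V_I)$ of $\{1, \ldots, I\}$, and Hoeffding's theorem concerns precisely such linear-in-permutation statistics.

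First I would verify the permutation representation by direct bookkeeping: if cluster $i$ occupies position $V_i = l$, then $Z_{ij} = \mathbf{1}\{j \geq P_l\}$ for every period $j$, so summing the inner Horvitz--Thompson summand over the $k$ index inside each cluster-period reproduces $c(i, l)$ term-by-term. Next I would compute the first two moments of the statistic. The marginal uniformity of $V_i$ on $\{1,\ldots,I\}$ gives $\mathrm{E}[\widehat{\tau}_{\lambda_0,\text{HT}}] = \sum_i \overline{c}_{i\cdot} = I\overline{c}_{\cdot\cdot}$, which coincides with $\tau_{\lambda_0}$ by Lemma \ref{lemma: ht unbiased}. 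The classical variance identity for linear functionals of a uniform random permutation then yields
\begin{equation*}
\mathrm{Var}\!\left(\sum_{i=1}^I c(i, V_i)\right) = \frac{1}{I-1} \sum_{i=1}^I \sum_{l=1}^I d(i, l)^2, \qquad d(i,l) := c(i,l) - \overline{c}_{i\cdot} - \overline{c}_{\cdot l} + \overline{c}_{\cdot\cdot}.
\end{equation*}

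The final step is to apply the combinatorial CLT of Hoeffding (see also Motoo, 1957, or Bolthausen, 1984, for modern statements): under the Lindeberg-type ratio condition $\max_{i,l} d(i,l)^2 / \bigl(I^{-1}\sum_{i,l} d(i,l)^2\bigr) \to 0$, the centered and standardized sum converges in distribution to a standard normal. This ratio condition is precisely Assumption \ref{assumption: ht clt regularity} (with the centering $\overline{c}_{\cdot\cdot}$ playing the role written as $\tau_{\lambda_0}$, consistent with $I\overline{c}_{\cdot\cdot} = \tau_{\lambda_0}$), so the conclusion of Proposition \ref{prop: ht clt} follows at once. The main obstacle is not the CLT itself, which is classical and off-the-shelf, but rather the combinatorial bookkeeping: identifying $\widehat{\tau}_{\lambda_0,\text{HT}}$ with Hoeffding's canonical sum and deriving the Hoeffding variance formula both hinge on the counting identities $\#\{l : P_l \leq j\} = I_j$ and $\#\{l : P_l > j\} = I - I_j$, which already appear implicitly in the derivation of $\overline{c}_{i\cdot}$ preceding the proposition.
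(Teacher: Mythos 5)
Your proposal is correct and follows essentially the same route as the paper: the paper's proof likewise writes $\widehat{\tau}_{\lambda_0,\text{HT}}$ as the linear permutation statistic $\sum_{i=1}^I c(i,V_i)$ and invokes Hoeffding's combinatorial central limit theorem, with Assumption \ref{assumption: ht clt regularity} serving as the required Lindeberg-type ratio condition. The extra detail you supply (the variance identity and the moment bookkeeping, including the observation that the centering constant should be the grand mean of the $c$ matrix) is a faithful elaboration of the same argument.
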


\begin{proof}[Proof of Proposition \ref{prop: ht clt}]
Given the construction of $\overline{c}_{i\cdot}$ and $\overline{c}_{\cdot l}$ and the fact that the Horvitz-Thompson estimator can be exactly written as $\sum_{i=1}^I c(i,V_i)$, the statement follows directly from Theorem 3 of \citet{hoeffding_combinatorial_clt}.
\end{proof}

Assumption \ref{assumption: ht clt regularity} essentially requires that the contrast in (potential) outcomes before and after some time period $l$ in a certain cluster $i$ is not too large, relative to other time period and cluster combinations. For further analysis, we will consider an asymptotic regime where the number of periods is fixed, and the propensity scores $e_{ij}$ are as well, and some additional boundedness conditions.

\begin{assumption}[Bounded cluster sizes, outcomes, and time periods] \label{assumption: bounded_regularity}
    $0 < c_1< N_{ij} < c_2$ for all $i,j$. $|\overline{R}_{ij}(1)|,|\overline{R}_{ij}(0)| < c_3$. $J$ is fixed, as well as the $e_{ij}$, so that $I_j  = O(I) \to \infty$.
\end{assumption}

Assumption \ref{assumption: bounded_regularity} is a sufficient, not necessary, condition. Because our framework is design-based and finite-population, the cluster-period sizes $N_{ij} $ are fixed constants for any realized trial, so the boundedness condition constrains only the hypothetical sequence of trials underlying the $I \to \infty$ asymptotics, rather than the cluster sizes of a given study. It can potentially be relaxed to allow $N_{ij} $ to grow with $I$ at the cost of additional technical conditions. However, in its current form, Assumption \ref{assumption: bounded_regularity} is an easy-to-interpret and transparent condition that suffices for the theoretical results. To demonstrate consistency, we need to check that the variance of $\widehat{\tau}_a$ goes to zero for $a = 0,1$.

\begin{lemma} \label{lem:lemma-3}
    Under Assumption \ref{assumption: bounded_regularity}, $\mathrm{Var}(\widehat{\tau}_{a,\lambda_0,\text{HT}}) \to 0$ for $a = 0,1$.
\end{lemma}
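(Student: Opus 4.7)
The plan is to represent $\widehat{\tau}_{a, \lambda_0, \text{HT}}$ as a combinatorial sum over a uniformly random permutation, apply Hoeffding's variance identity for random permutations, and bound the resulting double sum by $O(I^{-1})$ using the uniform bounds in Assumption \ref{assumption: bounded_regularity}.

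First I would mirror the $c(i,l)$ construction given just above Proposition \ref{prop: ht clt}, but split by arm. With $V = (V_1, \ldots, V_I)$ a uniformly random permutation of $\{1, \ldots, I\}$ and $P_l$ denoting the period at which a cluster placed $l$th in line first receives the intervention, define
\begin{equation*}
c_1(i, l) = \frac{1}{N} \sum_{j = P_l}^{J} \frac{N_{ij}\, \overline{R}_{ij}(1)}{e_{ij}}, \qquad c_0(i, l) = \frac{1}{N} \sum_{j = 1}^{P_l - 1} \frac{N_{ij}\, \overline{R}_{ij}(0)}{1 - e_{ij}},
\end{equation*}
with the convention that empty sums are zero. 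Then $\widehat{\tau}_{a, \lambda_0, \text{HT}} = \sum_{i=1}^I c_a(i, V_i)$ for $a \in \{0, 1\}$, which is precisely the decomposition already used for Proposition \ref{prop: ht clt} (with $c = c_1 - c_0$), just separated by arm. Hoeffding's variance identity for random permutations then gives
\begin{equation*}
\text{Var}(\widehat{\tau}_{a, \lambda_0, \text{HT}}) = \frac{1}{I - 1} \sum_{i = 1}^{I} \sum_{l = 1}^{I} \bigl(c_a(i, l) - \overline{c}_{a, i \cdot} - \overline{c}_{a, \cdot l} + \overline{c}_a\bigr)^2,
\end{equation*}
where $\overline{c}_{a, i \cdot}$, $\overline{c}_{a, \cdot l}$, $\overline{c}_a$ are the row, column, and grand means of the $I \times I$ array $c_a$.

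Next I would bound the entries of $c_a$ uniformly. Under Assumption \ref{assumption: bounded_regularity}, $N = \sum_{i,j} N_{ij} \geq c_1 I J$; the fixed rollout propensity scores $\{e_{ij} : j = 1, \ldots, J\}$ stay bounded away from $0$ and $1$ by some constant $e_* > 0$; and $N_{ij} \leq c_2$, $|\overline{R}_{ij}(a)| \leq c_3$. Since each $c_a(i, l)$ is a sum of at most $J$ terms of magnitude $\leq c_2 c_3 / e_*$, scaled by $1/N$, one has $|c_a(i, l)| \leq J c_2 c_3 / (N e_*) = O(I^{-1})$, uniformly in $i, l$. The same $O(I^{-1})$ bound then holds for the row, column, and grand means, so each summand in the variance identity is $O(I^{-2})$. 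Summing over $I^2$ terms and dividing by $I - 1$ yields $\text{Var}(\widehat{\tau}_{a, \lambda_0, \text{HT}}) = O(I^{-1}) \to 0$.

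The main obstacle is essentially bookkeeping: confirming that $N$ grows linearly with $I$ and that the propensity scores do not approach the boundary. Both follow directly from Assumption \ref{assumption: bounded_regularity} (fixed $J$, $N_{ij} \in (c_1, c_2)$, and fixed $e_{ij}$ with $I_j = O(I)$), so no additional regularity beyond what is already assumed is needed. This is consistent with the expectation that consistency is a strictly weaker requirement than the Lindeberg-type condition (Assumption \ref{assumption: ht clt regularity}) invoked for the central limit theorem in Proposition \ref{prop: ht clt}.
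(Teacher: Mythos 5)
Your proof is correct, but it takes a genuinely different route from the paper's. The paper proves this lemma by plugging the order-two joint assignment probabilities (Lemma \ref{lemma: order 2 joint}) into the exact variance expansion of Lemma \ref{lemma: moments} and then bounding the diagonal term and each of the three off-diagonal cases (same cluster/different period, different cluster/same period, both different) separately, showing each contribution is $O(1/I)$. You instead reuse the combinatorial representation $\sum_i c_a(i,V_i)$ already set up for Proposition \ref{prop: ht clt}, invoke Hoeffding's exact variance identity for sums over a uniform random permutation, and observe that under Assumption \ref{assumption: bounded_regularity} every entry of the array (hence every doubly-centered entry) is $O(1/I)$ uniformly, so that $\frac{1}{I-1}\sum_{i,l}(c_a(i,l)-\overline{c}_{a,i\cdot}-\overline{c}_{a,\cdot l}+\overline{c}_a)^2 = O(1/I)$. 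All the ingredients check out: the per-arm decomposition $\widehat{\tau}_{a,\lambda_0,\text{HT}}=\sum_i c_a(i,V_i)$ is valid, the Hoeffding identity is exact (not merely an upper bound), $N \geq c_1 I J$ gives the uniform $O(1/I)$ entry bound since $J$ and the $e_{ij}$ are fixed with $e_{ij}\in(0,1)$ on the rollout periods, and you correctly note that the Lindeberg-type Assumption \ref{assumption: ht clt regularity} is not needed for consistency. Your argument is shorter and avoids the case analysis over joint inclusion probabilities entirely; what it buys is uniformity across both arms and less bookkeeping, at the cost of leaning on the permutation-array machinery, whereas the paper's direct expansion keeps the variance formula explicit (which it then reuses when analyzing the variance \emph{estimator}).
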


\begin{proof}[Proof of Lemma \ref{lem:lemma-3}]
    We consider first the $\text{Var}(\widehat{\tau}_{1,\lambda_0,\text{HT}})$ term. For the first term, we have, as $I \to \infty$,
    \begin{equation*}
        \begin{aligned}
            \frac{1}{N^2}\sum_{i=1}^I \sum_{j=1}^J \frac{e_{ij}(1-e_{ij})}{e_{ij}^2} N_{ij}^2 \overline{R}_{ij}(1)^{2} &\leq \frac{1}{c_1^2I^2J^2} \sum_{i=1}^I \sum_{j=1}^J \frac{1-e_{ij}}{e_{ij}} c_2^2 c_3^{2} \to 0,
        \end{aligned}
    \end{equation*}
    since it is a summation over $I$ terms and the $e_{ij}$ do not vary with $I$. We can break the second term into 3. Using Lemma \ref{lemma: order 2 joint} repeatedly,
    \begin{equation*}
        \begin{aligned}
            \frac{1}{N^2}\sum_{i=1}^I \sum_{j=1}^J\sum_{j' \neq j} \frac{e^{11}_{ij,ij'} - e_{ij}e_{ij'}}{e_{ij}e_{ij'}} N_{ij}N_{ij'} \overline{R}_{ij}(1)\overline{R}_{ij'}(1) \leq \frac{1}{c_1^2I^2J^2}\sum_{i=1}^I \sum_{j=1}^J (J-1) C c_2^2c_3^2 \to 0.
        \end{aligned}
    \end{equation*}
    Also,
    \begin{align*}
        \frac{1}{N^2}\sum_{i=1}^I \sum_{j=1}^J\sum_{i' \neq i} &\frac{e^{11}_{ij,i'j} - e_{ij}e_{i'j}}{e_{ij}e_{i'j}} N_{ij}N_{i'j} \overline{R}_{ij}(1)\overline{R}_{i'j}(1) \displaybreak[0]\\ 
        &= \frac{1}{N^2}\sum_{i=1}^I \sum_{j=1}^J\sum_{i' \neq i} \frac{I_j(I_j-1)/\{I(I-1)\} - I_j^2/I^2}{I_j^2/I^2} N_{ij}N_{i'j} \overline{R}_{ij}(1)\overline{R}_{i'j}(1) \displaybreak[0]\\ 
        &= \frac{1}{N^2}\sum_{i=1}^I \sum_{j=1}^J\sum_{i' \neq i} \frac{I+1/I_j}{I(I-1)} N_{ij}N_{i'j} \overline{R}_{ij}(1)\overline{R}_{i'j}(1) 
        \\ 
        &\leq \frac{1}{c_1^2I^2J^2}\sum_{i=1}^I \sum_{j=1}^J\sum_{i' \neq i} \frac{I+1/I_j}{I(I-1)} c_2^2c_3^2 \leq \frac{1}{c_1^2I^2J^2}\sum_{i=1}^I \sum_{j=1}^J \frac{I+1/I_j}{I} c_2^2c_3^2 \to 0.
    \end{align*}
    Furthermore,
    \begin{align*}
        \frac{1}{N^2}\sum_{i=1}^I \sum_{j=1}^J\sum_{i' \neq i}\sum_{j' \neq j} & \frac{e^{11}_{ij,i'j'} - e_{ij}e_{i'j'}}{e_{ij}e_{i'j'}} N_{ij}N_{i'j'} \overline{R}_{ij}(1)\overline{R}_{i'j'}(1) \displaybreak[0]\\ 
        &= \frac{1}{N^2}\sum_{i=1}^I \sum_{j=1}^J\sum_{i' \neq i}\sum_{j' > j} \frac{I_j(I_{j'}-1)/\{I(I-1)\} - I_jI_{j'}/I^2}{I_jI_{j'}/I^2} N_{ij}N_{i'j'} \overline{R}_{ij}(1)\overline{R}_{i'j'}(1) \displaybreak[0]\\ 
        &+ \frac{1}{N^2}\sum_{i=1}^I \sum_{j=1}^J\sum_{i' \neq i}\sum_{j' < j} \frac{I_{j'}(I_{j}-1)/\{I(I-1)\} - I_jI_{j'}/I^2}{I_jI_{j'}/I^2} N_{ij}N_{i'j'} \overline{R}_{ij}(1)\overline{R}_{i'j'}(1) \displaybreak[0]\\ 
        &= \frac{1}{N^2}\sum_{i=1}^I \sum_{j=1}^J\sum_{i' \neq i}\sum_{j' > j} \frac{I_{j'}-I}{I_{j'}(I-1)} N_{ij}N_{i'j'} \overline{R}_{ij}(1)\overline{R}_{i'j'}(1) \displaybreak[0]\\ 
        &+ \frac{1}{N^2}\sum_{i=1}^I \sum_{j=1}^J\sum_{i' \neq i}\sum_{j' < j} \frac{I_{j}-I}{I_{j}(I-1)} N_{ij}N_{i'j'} \overline{R}_{ij}(1)\overline{R}_{i'j'}(1) \displaybreak[0]\\ 
        &\leq \frac{1}{c_1^2I^2J^2}\sum_{i=1}^I \sum_{j=1}^J\sum_{j' > j} (1-I/I_{j'}) c_2^2c_3^2 \displaybreak[0]\\ 
        &+ \frac{1}{c_1^2I^2J^2}\sum_{i=1}^I \sum_{j=1}^J\sum_{j' < j}  (1-I/I_j) c_2^2c_3^2 \to 0.
    \end{align*}
    Similar steps show the analogous result for $a = 0$, i.e. $\text{Var}(\widehat{\tau}_{1,\lambda_0,\text{HT}}) \to 0$.
\end{proof}

\subsection{Proof of Proposition \ref{prop: ht test validity}}

In this subsection, we put everything together to prove the validity of tests based on the Horvitz-Thompson estimator. To this end, we first make another mild assumption, akin to Condition 6 of \citet{Aronow2017}.

\begin{assumption} \label{assumption: ht converging variance}
    As $I \to \infty$,
    \begin{equation*}
        N\mathrm{Var}(\widehat{\tau}_{\lambda_0,\text{HT}}) \to c > 0.
    \end{equation*}
\end{assumption}
\begin{remark}
We note that Assumption \ref{assumption: ht converging variance} can be weakened to $N^\alpha\mathrm{Var}(\widehat{\tau}_{\lambda_0,\text{HT}}) \to c > 0$, for some $0 < \alpha < 1$. This is because Proposition \ref{prop:proposition-4} below continues to hold if $N$ is replaced with $N^\alpha$ for some $\alpha \in (0,1)$, as $\text{Var}\{N^\alpha\widehat{\text{Var}}(\widehat{\tau}_{\lambda_0,\text{HT}})\}$ is smaller than $\text{Var}\{N\widehat{\text{Var}}(\widehat{\tau}_{\lambda_0,\text{HT}})\}$ by definition.
\end{remark}
To ultimately obtain asymptotic validity of comparing the deviate with the estimated variance to a standard normal, we need to show that $N\widehat{\text{Var}}(\widehat{\tau}_{\lambda_0,\text{HT}})$ converges to something larger than $c$. We formalize this in the following Proposition.

\begin{proposition} \label{prop:proposition-4}
    Under the asymptotic regime of Assumption \ref{assumption: bounded_regularity}, as $I \to \infty$, $N\widehat{\mathrm{Var}}(\widehat{\tau}_{\lambda_0,\text{HT}}) \to c' \geq c > 0$.
\end{proposition}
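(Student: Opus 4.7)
The strategy combines three ingredients: (i) the conservativeness of $\widehat{\text{Var}}$ from Lemma~\ref{lemma: conservative variance}, (ii) convergence of $E[N\widehat{\text{Var}}]$ to some finite $c'$, and (iii) concentration of $N\widehat{\text{Var}}$ around that mean. The first is essentially free: by Lemma~\ref{lemma: conservative variance}, $E[N\widehat{\text{Var}}(\widehat{\tau}_{\lambda_0,\text{HT}})] \geq N\,\text{Var}(\widehat{\tau}_{\lambda_0,\text{HT}}) \to c$ by Assumption~\ref{assumption: ht converging variance}, so any subsequential limit of $E[N\widehat{\text{Var}}]$ is at least $c$.

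For (ii), I would take expectations term-by-term in \eqref{eqn: conservative variance estimator}, replacing the indicators $\mathbf{1}(Z_{ij}=z)$ and products $\mathbf{1}(Z_{ij}=z)\mathbf{1}(Z_{i'j'}=z')$ by their deterministic marginal and joint propensities $e_{ij}$, $1-e_{ij}$, and $e^{zz'}_{ij,i'j'}$. Under Assumption~\ref{assumption: bounded_regularity} the $e_{ij}$ are bounded away from $0$ and $1$, and $e^{zz'}_{ij,i'j'}$ admits closed forms in terms of $I_j/I$ and $I_{j'}/I$ that stabilize as $I \to \infty$. The resulting deterministic expression is a sum over $O(I^2)$ cluster-period pairs of bounded quantities involving $N_{ij}\overline{R}_{ij}(a)$, normalized by $N^2 = O(I^2)$. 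Multiplying by $N = O(I)$ gives an $O(1)$ quantity. Assuming convergence of the relevant empirical averages of products of potential outcomes across clusters (a regularity condition implicit in Assumption~\ref{assumption: ht converging variance}), $E[N\widehat{\text{Var}}]$ converges to some finite $c'$, and $c' \geq c$ by (i).

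For (iii), I would bound $\text{Var}(N\widehat{\text{Var}})$. Expanding the square, this variance is a weighted sum over quadruples of cluster-period indices $(i,j), (i',j'), (i'',j''), (i''',j''')$ of covariances between products of $Z$-indicators. Under stepped-wedge randomization these covariances vanish whenever all four cluster indices are distinct, so only $O(I^3)$ of the $O(I^4)$ quadruples contribute. Combined with the $N^{-4} = O(I^{-4})$ prefactor and the boundedness of $N_{ij}$, $\overline{R}_{ij}(a)$, and propensity ratios from Assumption~\ref{assumption: bounded_regularity}, this yields $\text{Var}(N\widehat{\text{Var}}) = O(I^{-1}) \to 0$. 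Chebyshev's inequality then upgrades (ii) to convergence in probability, $N\widehat{\text{Var}} \to c'$.

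The main obstacle is (iii): organizing the fourth-order combinatorial expansion of $\text{Var}(N\widehat{\text{Var}})$. Step (ii) is essentially bookkeeping once the closed forms of $e_{ij}$ and $e^{zz'}_{ij,i'j'}$ are substituted, but the variance expansion requires stratifying the quadruples by the number of distinct cluster indices appearing, uniform bounds on each integrand, and sharp counts of each stratum. The calculation is a natural one-order-higher extension of the argument already carried out in the lemma bounding $\text{Var}(\widehat{\tau}_{a,\lambda_0,\text{HT}})$, so the techniques are the same (explicit formulas for $e^{zz'}_{ij,i'j'}$ plus crude uniform bounds from Assumption~\ref{assumption: bounded_regularity}), but the bookkeeping is considerably heavier.
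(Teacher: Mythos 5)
Your overall architecture matches the paper's: conservativeness of the estimator's mean via Lemma~\ref{lemma: conservative variance}, plus concentration of $N\widehat{\text{Var}}(\widehat{\tau}_{\lambda_0,\text{HT}})$ around its mean by bounding its variance through a fourth-order expansion over quadruples of cluster-period indices and applying Chebyshev. However, there is a genuine error in your step (iii): you claim that under stepped-wedge randomization the covariances $\text{Cov}[\mathbf{1}(Z_{ij}=1)\mathbf{1}(Z_{i'j'}=1),\,\mathbf{1}(Z_{kl}=1)\mathbf{1}(Z_{k'l'}=1)]$ vanish whenever all four cluster indices are distinct, so that only $O(I^3)$ quadruples contribute. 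This is false. The stepped-wedge assignment is a without-replacement (permutation-type) randomization with fixed margins $I_j$, so the adoption times of distinct clusters are dependent; for four distinct clusters the covariance is generically nonzero and of order $O(1/I)$, as the paper computes explicitly from the order-4 joint probabilities (e.g.\ $P(Z_{ij}=1,Z_{i'j'}=1,Z_{kl}=1,Z_{k'l'}=1)=\tfrac{I_{l'}}{I}\tfrac{I_{j'}-1}{I-1}\tfrac{I_l-2}{I-2}\tfrac{I_j-3}{I-3}$, which differs from the product of the two pairwise probabilities by $O(1/I)$).

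The all-distinct stratum therefore contains $O(I^4)$ nonvanishing terms, and it is precisely the case you have waved away that requires the most care. The paper's resolution is that each such summand is a product of \emph{three} $O(1/I)$ factors --- the covariance itself and the two centering factors $(e^{11}_{ij,i'j'}-e_{ij}e_{i'j'})$ and $(e^{11}_{kl,k'l'}-e_{kl}e_{k'l'})$ --- so the total contribution is $O(I^4\cdot I^{-3}\cdot I^{-2})=O(1/I)$, with the $I^{-2}$ coming from the $N^{-2}$ normalization of each variance-estimator component (note the paper bounds the variance of each of the three components $N\widehat{\text{Var}}(\widehat{\tau}_{1,\lambda_0,\text{HT}})$, $N\widehat{\text{Var}}(\widehat{\tau}_{0,\lambda_0,\text{HT}})$, $N\widehat{\text{Cov}}$ separately, each carrying a $1/N^2$ prefactor, rather than expanding the full estimator with an $N^{-4}$ prefactor as you describe). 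Your final conclusion happens to be of the right order, but the mechanism you give for discarding the dominant stratum is wrong, and without the explicit order-4 probability calculations (Lemmas~\ref{lemma: order 4 treatment}--\ref{lemma: order 4 mixed}) the bound does not go through. A similar issue arises in the covariance-estimator component, where the correction terms indexed by $e^{10}_{ij,i'j'}=0$ (which occurs exactly when $i=i'$, $j<j'$) must be handled separately, as in the paper's terms $A$ and $C$.
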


\begin{proof}[Proof of Proposition \ref{prop:proposition-4}]
    We make a few observations. First, under Assumption \ref{assumption: bounded_regularity}, as $I \to \infty$, there are no $(i,j)$, $(i',j')$ pairs for which $e^{11}_{ij,i'j'} = 0$, nor for which $e^{00}_{ij,i'j'} = 0$. Thus, the single-arm variance estimators are exactly unbiased. Next, we know that $\mathbb E\{\widehat{\text{Var}}(\widehat{\tau}_{\lambda_0,\text{HT}})\} \geq \text{Var}(\widehat{\tau}_{\lambda_0,\text{HT}})$ since the covariance estimator is downward biased. Similar to \citet{Aronow2017}, we will show that $N\widehat{\text{Var}}(\widehat{\tau}_{\lambda_0,\text{HT}})$ concentrates around its mean by showing each of its three components concentrates around their means. It is sufficient to separately show that the variances of $N\widehat{\text{Var}}(\widehat{\tau}_{1,\lambda_0,\text{HT}})$, $N\widehat{\text{Var}}(\widehat{\tau}_{0,\lambda_0,\text{HT}})$, and $N\widehat{\text{Cov}}(\widehat{\tau}_{1,\lambda_0,\text{HT}},\widehat{\tau}_{0,\lambda_0,\text{HT}})$ all tend to zero (Chebyshev's inequality then implies the desired result). Compared to other proofs, we omit some tedious details for brevity and readability. 
    
    First consider $N\widehat{\text{Var}}(\widehat{\tau}_{1,\lambda_0,\text{HT}})$. Under fixed $J$ asymptotics, there are no instances of $(i,j),(i'j')$ such that $e_{ij,i'j'}^{11} = 0$. We get 
    \begin{align*}
        &\text{Var}\{N\widehat{\text{Var}}(\widehat{\tau}_{1,\lambda_0,\text{HT}})\} \displaybreak[0]\\
        &= \frac{1}{N^2}\sum_{(i,j)} \sum_{(i',j')} \sum_{(k,l)} \sum_{(k',l')} \text{Cov}\{\mathbbm{1}(Z_{ij} =1)\mathbbm{1}(Z_{i'j'} =1),\mathbbm{1}(Z_{kl} =1)\mathbbm{1}(Z_{k'l'} =1)\} \displaybreak[0]\\ 
        &\quad\times\frac{(e^{11}_{ij,i'j'} - e_{ij}e_{i'j'})}{e^{11}_{ij,i'j'}e_{ij}e_{i'j'}} \frac{(e^{11}_{kl,k'l'} - e_{kl}e_{k'l'})}{e^{11}_{kl,k'l'}e_{kl}e_{k'l'}} N_{ij}N_{i'j'} N_{kl}N_{k'l'} \overline{R}_{ij}(1)\overline{R}_{i'j'}(1)\overline{R}_{kl}(1)\overline{R}_{k'l'}(1) \displaybreak[0]\\ &\lesssim  \frac{1}{I^2}\sum_{(i,j)} \sum_{(i',j')} \sum_{(k,l)} \sum_{(k',l')} \text{Cov}\{\mathbbm{1}(Z_{ij} =1)\mathbbm{1}(Z_{i'j'} =1),\mathbbm{1}(Z_{kl} =1)\mathbbm{1}(Z_{k'l'} =1)\} \displaybreak[0]\\ 
        &\quad\times \frac{(e^{11}_{ij,i'j'} - e_{ij}e_{i'j'})}{e^{11}_{ij,i'j'}e_{ij}e_{i'j'}} \frac{(e^{11}_{kl,k'l'} - e_{kl}e_{k'l'})}{e^{11}_{kl,k'l'}e_{kl}e_{k'l'}} 
    \end{align*}
    There are several cases we need to consider.
    
    \begin{enumerate}
        \item $i = i' = k = k'$. There are only $I \times J^4$ such terms, and under fixed $J$ asymptotics, the summands are all of constant order. Thus, the contribution of such terms is $O(I/I^2) = O(1/I)$.
        
        \item Two distinct clusters. There are $O(I^2)$ such terms. There are several subcases to consider. First, consider the case where 3 clusters are the same, and one is different. Without the loss of generality (WLOG), take $i = i' = k$, and fix $j, j', l, l'$. Then, the summands take the form 
        \begin{align*}
            \text{Cov}\{\mathbbm{1}(Z_{ij} =1)\mathbbm{1}(Z_{ij'} =1),\mathbbm{1}(Z_{il} =1)\mathbbm{1}(Z_{k'l'} =1)\}  \times \frac{(e^{11}_{ij,ij'} - e_{ij}e_{ij'})}{e^{11}_{ij,ij'}e_{ij}e_{ij'}} \frac{(e^{11}_{il,k'l'} - e_{il}e_{k'l'})}{e^{11}_{il,k'l'}e_{il}e_{k'l'}}. 
        \end{align*}
        Using Lemma \ref{lemma: order 2 joint}, one can verify that $(e^{11}_{il,k'l'} - e_{il}e_{k'l'}) = O(1/I)$. All of the other terms are of constant order. Thus, the contribution of such terms is $O(I^2 \times 1/I \times 1/I^2) = O(1/I).$ Next, consider the case where $i = i'$ and $k = k'$, and fix $j, j', l, l'$. Then, the summands take the form 
        \begin{align*}
            \text{Cov}\{\mathbbm{1}(Z_{ij} =1)\mathbbm{1}(Z_{ij'} =1),\mathbbm{1}(Z_{kl} =1)\mathbbm{1}(Z_{kl'} =1)\}  \times \frac{(e^{11}_{ij,ij'} - e_{ij}e_{ij'})}{e^{11}_{ij,ij'}e_{ij}e_{ij'}} \frac{(e^{11}_{kl,kl'} - e_{kl}e_{k'l'})}{e^{11}_{kl,k'l'}e_{kl}e_{k'l'}}. 
        \end{align*}
        Using Lemmas \ref{lemma: order 2 joint} and \ref{lemma: order 4 treatment}, one can verify that $\text{Cov}\{\mathbbm{1}(Z_{ij} =1)\mathbbm{1}(Z_{ij'} =1),\mathbbm{1}(Z_{kl} =1)\mathbbm{1}(Z_{kl'} =1)\} = O(1/I)$. All of the other terms are of constant order. Thus, the contribution of such terms is $O(I^2 \times 1/I \times 1/I^2) = O(1/I).$ Finally, we consider the case where $i,i'$ and $k,k'$. WLOG, take $i = k$ and $i' = k'$, and fix $j, j', l, l'$. Then, the summands take the form
        \begin{align*}
            \text{Cov}\{\mathbbm{1}(Z_{ij} =1)\mathbbm{1}(Z_{i'j'} =1),\mathbbm{1}(Z_{il} =1)\mathbbm{1}(Z_{i'l'} =1)\}  \times \frac{(e^{11}_{ij,i'j'} - e_{ij}e_{i'j'})}{e^{11}_{ij,i'j'}e_{ij}e_{i'j'}} \frac{(e^{11}_{il,i'l'} - e_{il}e_{i'l'})}{e^{11}_{il,i'l'}e_{il}e_{i'l'}}. 
        \end{align*}
        From Lemma \ref{lemma: order 2 joint}, we know $(e^{11}_{ij,i'j'} - e_{ij}e_{i'j'})$ and $(e^{11}_{il,i'l'} - e_{il}e_{i'l'})$ are both $O(1/I)$. All of the other terms are of constant order. Thus, the contribution of such terms is $O(I^2 \times 1/I^2 \times 1/I^2) = O(1/I^2)$.
        
        \item Three distinct clusters. There are two types of such terms. First, terms of the type where $i = i'$ and $k$, $k'$ are distinct. There are $2 \times I(I-1)(I-2)\times J^4$ such terms. Then, the summands take the form 
        \begin{align*}
            \text{Cov}\{\mathbbm{1}(Z_{ij} =1)\mathbbm{1}(Z_{ij'} =1),\mathbbm{1}(Z_{kl} =1)\mathbbm{1}(Z_{k'l'} =1)\}  \times \frac{(e^{11}_{ij,ij'} - e_{ij}e_{ij'})}{e^{11}_{ij,ij'}e_{ij}e_{ij'}} \frac{(e^{11}_{kl,k'l'} - e_{kl}e_{k'l'})}{e^{11}_{kl,k'l'}e_{kl}e_{k'l'}}. 
        \end{align*}
        As before, one can verify $(e^{11}_{kl,k'l'} - e_{kl}e_{k'l'}) = O(1/I)$. We next examine $\text{Cov}\{\mathbbm{1}(Z_{ij} =1)\mathbbm{1}(Z_{ij'} =1),\mathbbm{1}(Z_{kl} =1)\mathbbm{1}(Z_{k'l'} =1)]= \mathbb P(Z_{ij} =1,Z_{ij'} =1,Z_{kl} = 1, Z_{k'l'} = 1) - \mathbb P(Z_{ij} =1,Z_{ij'} =1)\mathbb P(Z_{kl} = 1, Z_{k'l'} = 1) $. This requires examining the six cases from Lemma \ref{lemma: order 4 treatment}. We will only illustrate one of the six cases, as the others are similar.
        
        Consider $l \leq j \leq l' \leq j'$. Then, we have 
        \begin{align*}
            \mathbb P(Z_{ij} =1,Z_{ij'} =1,Z_{kl} = 1, Z_{k'l'} = 1) = \frac{I_l}{I}\times \frac{I_j-1}{I-1}\times\frac{I_l'-2}{I-2}.
        \end{align*}
        Next, $\mathbb P(Z_{ij} =1,Z_{ij'} =1) = I_j/I$ and
        \begin{align*}
            \mathbb P(Z_{kl} = 1, Z_{k'l'} = 1) = \frac{I_l}{I}\times\frac{I_l'-1}{I-1}.
        \end{align*}
        Thus, the covariance term is of order $O(1/I)$ and the contribution of such terms is $O\{2 \times I(I-1)(I-2)\times J^4 \times 1/I\times 1/I \times 1/I^2\} = O(1/I)$. The second type of terms occurs when $i=k$ and $i'$ and $k'$ are distinct, and there are $4 \times I(I-1)(I-2)\times J^4$ such terms. Here, the terms take the form 
        \begin{align*}
            \text{Cov}\{\mathbbm{1}(Z_{ij} =1)\mathbbm{1}(Z_{i'j'} =1),\mathbbm{1}(Z_{il} =1)\mathbbm{1}(Z_{k'l'} =1)\}  \times \frac{(e^{11}_{ij,i'j'} - e_{ij}e_{i'j'})}{e^{11}_{ij,i'j'}e_{ij}e_{i'j'}} \frac{(e^{11}_{il,k'l'} - e_{il}e_{k'l'})}{e^{11}_{il,k'l'}e_{il}e_{k'l'}}. 
        \end{align*}
        In this case, both $(e^{11}_{ij,i'j'} - e_{ij}e_{i'j'})$ and $(e^{11}_{il,k'l'} - e_{il}e_{k'l'})$ are both $O(1/I)$. So the contribution of such terms is $O\{4 \times I(I-1)(I-2)\times J^4 \times 1/I\times 1/I \times 1/I^2\} = O(1/I)$.
        
        \item All distinct. There are $I(I-1)(I-2)(I-3)\times J^4$ such terms. The summands take the form 
        \begin{align*}
            \text{Cov}\{\mathbbm{1}(Z_{ij} =1)\mathbbm{1}(Z_{i'j'} =1),\mathbbm{1}(Z_{kl} =1)\mathbbm{1}(Z_{k'l'} =1)\}  \times \frac{(e^{11}_{ij,i'j'} - e_{ij}e_{i'j'})}{e^{11}_{ij,i'j'}e_{ij}e_{i'j'}} \frac{(e^{11}_{kl,k'l'} - e_{kl}e_{k'l'})}{e^{11}_{kl,k'l'}e_{kl}e_{k'l'}}.
        \end{align*}
        Here, $(e^{11}_{ij,i'j'} - e_{ij}e_{i'j'})$ and $(e^{11}_{kl,k'l'} - e_{kl}e_{k'l'})$ are both $O(1/I)$. For the covariance, there are 24 possible orderings of $j,j',l,l'$, but WLOG, we consider just one, where $l' \leq j' \leq l \leq j$. By Lemma \ref{lemma: order 4 treatment}, 
        \begin{align*}
            \mathbb P(Z_{ij} =1,Z_{i'j'} =1,Z_{kl} =1, Z_{k'l'} =1) = \frac{I_l'}{I}\times \frac{I_j'-1}{I-1}\times \frac{I_l-2}{I-2}\times \frac{I_j-3}{I-3},
        \end{align*}
        and 
        \begin{align*}
            \mathbb P(Z_{ij} =1,Z_{i'j'} =1) = \frac{I_j'}{I}\times \frac{I_j-1}{I-1}, ~~ \mathbb P(Z_{kl} =1,Z_{k'l'} =1) = \frac{I_l'}{I}\times \frac{I_l-1}{I-1}.
        \end{align*}
        Then, subtracting the product of the latter terms from the first, we get that the covariance term is also $O(1/I)$. So the contribution of all such terms is of order $O\{I(I-1)(I-2)(I-3)\times J^4 \times 1/I\times 1/I \times 1/I \times 1/I^2\} = O(1/I) $. 
    \end{enumerate}
    We do not include the steps for $N\widehat{\text{Var}}(\widehat{\tau}_{0,\lambda_0,\text{HT}})$ as they mirror those of $N\widehat{\text{Var}}(\widehat{\tau}_{1,\lambda_0,\text{HT}})$. Finally, we consider the covariance estimator. First, note that in the stepped wedge design, the only instances where $e^{10}_{ij,i'j'} = 0$ occur when $i = i'$ and $j < j'$. We now rewrite $\text{Var}\{N\widehat{\text{Cov}}(\widehat{\tau}_{1,\lambda_0,\text{HT}}, \widehat{\tau}_{1,\lambda_0,\text{HT}})\}$ as the sum $A + B + C$, where  
    \begin{align*}
        A &= \frac{1}{N^2} \sum_{(i,j)}\sum_{i' = i, j \leq j'}\sum_{(k,l)}\sum_{k' = k, l \leq l'} \frac{N_{ij}^2 \overline{R}_{ij}(1)^2N_{kl}^2 \overline{R}_{kl}(1)^2}{4e_{ij}e_{kl}}\text{Cov}\{\mathbbm{1}(Z_{ij} = 1),\mathbbm{1}(Z_{kl} = 1)\} \displaybreak[0]\\ 
        &\quad+ \frac{N_{ij}^2 \overline{R}_{ij}(1)^2N_{k'l'}^2 \overline{R}_{k'l'}(0)^2}{4e_{ij}(1-e_{k'l'})}\text{Cov}\{\mathbbm{1}(Z_{ij} = 1),\mathbbm{1}(Z_{k'l'} = 0)\} \displaybreak[0]\\ 
        &\quad+ \frac{N_{i'j'}^2 \overline{R}_{i'j'}(0)^2N_{kl}^2 \overline{R}_{kl}(1)^2}{4(1-e_{i'j'})e_{kl}}\text{Cov}\{\mathbbm{1}(Z_{i'j'} = 0),\mathbbm{1}(Z_{kl} = 1)\} \\ 
        &\quad+ \frac{N_{i'j'}^2 \overline{R}_{i'j'}(0)^2N_{k'l'}^2 \overline{R}_{k'l'}(0)^2}{4(1-e_{i'j'})(1-e_{k'l'})}\text{Cov}\{\mathbbm{1}(Z_{i'j'} = 0),\mathbbm{1}(Z_{k'l'} = 0)\} \displaybreak[0]\\
        & \lesssim \frac{1}{I^2} \sum_{(i,j)}\sum_{i' = i, j \leq j'}\sum_{(k,l)}\sum_{k' = k, l \leq l'} \frac{1}{e_{ij}e_{kl}}\text{Cov}\{\mathbbm{1}(Z_{ij} = 1),\mathbbm{1}(Z_{kl} = 1)\} \displaybreak[0]\\ 
        &\quad+ \frac{1}{e_{ij}(1-e_{k'l'})}\text{Cov}\{\mathbbm{1}(Z_{ij} = 1),\mathbbm{1}(Z_{k'l'} = 0)\} + \frac{1}{(1-e_{i'j'})e_{kl}}\text{Cov}\{\mathbbm{1}(Z_{i'j'} = 0),\mathbbm{1}(Z_{kl} = 1)\} \displaybreak[0]\\ 
        &\quad + \frac{1}{(1-e_{i'j'})(1-e_{k'l'})}\text{Cov}\{\mathbbm{1}(Z_{i'j'} = 0),\mathbbm{1}(Z_{k'l'} = 0)\}, \displaybreak[0]\\
        B &= \frac{1}{N^2} \sum_{(i,j)} \sum_{(i',j'): e_{ij,i'j'}^{10} > 0} \sum_{(k,l)} \sum_{(k',l'): e_{kl,k'l'}^{10} > 0} \text{Cov}\{\mathbbm{1}(Z_{ij}=1)\mathbbm{1}(Z_{i'j'} = 0), \mathbbm{1}(Z_{kl}=1)\mathbbm{1}(Z_{k'l'} = 0)\} \displaybreak[0]\\ 
        &\quad\times  N_{ij}N_{i',j'}N_{kl}N_{k',l'} \frac{e^{10}_{ij,i'j'} - e_{ij}(1-e_{i'j'})}{e^{10}_{ij,i'j'}e_{ij}(1-e_{i'j'})} \frac{e^{10}_{kl,k'l'} - e_{kl}(1-e_{k'l'})}{e^{10}_{kl,k'l'}e_{kl}(1-e_{k'l'})}\overline{R}_{ij}(1)\overline{R}_{i'j'}(0)\overline{R}_{kl}(1)\overline{R}_{k'l'}(0) \displaybreak[0]\\ 
        &\lesssim \frac{1}{I^2} \sum_{(i,j)} \sum_{(i',j'): e_{ij,i'j'}^{10} > 0} \sum_{(k,l)} \sum_{(k',l'): e_{kl,k'l'}^{10} > 0} \text{Cov}\{\mathbbm{1}(Z_{ij}=1)\mathbbm{1}(Z_{i'j'} = 0), \mathbbm{1}(Z_{kl}=1)\mathbbm{1}(Z_{k'l'} = 0)\} \displaybreak[0]\\ 
        &\quad\times\frac{e^{10}_{ij,i'j'} - e_{ij}(1-e_{i'j'})}{e^{10}_{ij,i'j'}e_{ij}(1-e_{i'j'})} \frac{e^{10}_{kl,k'l'} - e_{kl}(1-e_{k'l'})}{e^{10}_{kl,k'l'}e_{kl}(1-e_{k'l'})}, \displaybreak[0]\\ 
        C &= -\frac{2}{N^2} \sum_{(i,j)} \sum_{(i',j'): e_{ij,i'j'}^{10} > 0} \sum_{(k,l)} \sum_{(k',l'): e_{kl,k'l'}^{10} = 0} \text{Cov}\{\mathbbm{1}(Z_{ij} = 1)\mathbbm{1}(Z_{i'j'} = 0), \mathbbm{1}(Z_{kl} = 1)\} \displaybreak[0]\\ 
        &\qquad \times \frac{e^{10}_{ij,i'j'} - e_{ij}(1-e_{i'j'})}{e^{10}_{ij,i'j'}e_{ij}(1-e_{i'j'})}\frac{N_{kl}^2\overline{R}_{kl}(1)^2}{2e_{kl}} N_{ij}N_{i'j'} \overline{R}_{ij}(1)\overline{R}_{i'j'}(0) \displaybreak[0]\\
        &\quad+ \text{Cov}\{\mathbbm{1}(Z_{ij} = 1)\mathbbm{1}(Z_{i'j'} = 0), \mathbbm{1}(Z_{k'l'} = 0)\} \displaybreak[0]\\ 
        &\qquad \times \frac{e^{10}_{ij,i'j'} - e_{ij}(1-e_{i'j'})}{e^{10}_{ij,i'j'}e_{ij}(1-e_{i'j'})}\frac{N_{k'l'}^2\overline{R}_{k'l'}(0)^2}{2(1-e_{k'l'})} N_{ij}N_{i'j'} \overline{R}_{ij}(1)\overline{R}_{i'j'}(0) \displaybreak[0]\\ 
        &\lesssim -\frac{2}{I^2} \sum_{(i,j)} \sum_{(i',j'): e_{ij,i'j'}^{10} > 0} \sum_{(k,l)} \sum_{(k',l'): e_{kl,k'l'}^{10} = 0} \text{Cov}\{\mathbbm{1}(Z_{ij} = 1)\mathbbm{1}(Z_{i'j'} = 0), \mathbbm{1}(Z_{kl} = 1)\} \displaybreak[0]\\ 
        &\qquad \times \frac{e^{10}_{ij,i'j'} - e_{ij}(1-e_{i'j'})}{e^{10}_{ij,i'j'}e_{ij}(1-e_{i'j'})2e_{kl}} \displaybreak[0]\\
        &\quad+ \text{Cov}\{\mathbbm{1}(Z_{ij} = 1)\mathbbm{1}(Z_{i'j'} = 0), \mathbbm{1}(Z_{k'l'} = 0)\} \times \frac{e^{10}_{ij,i'j'} - e_{ij}(1-e_{i'j'})}{e^{10}_{ij,i'j'}e_{ij}(1-e_{i'j'})2(1-e_{k'l'})}
    \end{align*}
    We now demonstrate that each of the terms $A, B, C$ tends to 0. 
    
    For the $A$ term, there are $O(I^2)$ terms as we sum over $i'=i$ and $k'=k$. The first case is when $i=i'=k=k'$. There are only $O(I)$ such terms, and the summands are of constant order, so after scaling by $1/I^2$, the contribution from such terms tends to 0. When $i \neq k$, of which there are $O(I^2)$ such terms, the covariance terms can be shown to be $O(1/I)$ using Lemma \ref{lemma: order 2 joint}. Thus, the contribution from such terms is $O(1/I^2I^21/I) = O(1/I)$.
    
    For the $B$ term, there are many cases to consider. Technically, we do not enumerate all of them, but all other cases can be addressed in the same manner as one of the following cases.
    \begin{enumerate}
        \item $i=i'=k=k'$. There are only $O(I)$ such terms, and all of the summands are of constant orders, so the contribution from such terms is only $O(I/I^2) \to 0$.
        
        \item $i = i' = k$, $k'$ distinct. There are $O(I^2)$ such terms. The summands take the form 
        \begin{align*}
            \text{Cov}\{\mathbbm{1}(Z_{ij}=1)\mathbbm{1}(Z_{ij'} = 0), \mathbbm{1}(Z_{il}=1)\mathbbm{1}(Z_{k'l'} = 0)\} \times \frac{e^{10}_{ij,ij'} - e_{ij}(1-e_{ij'})}{e^{10}_{ij,ij'}e_{ij}(1-e_{ij'})} \frac{e^{10}_{il,k'l'} - e_{il}(1-e_{k'l'})}{e^{10}_{il,k'l'}e_{il}(1-e_{k'l'})}.
        \end{align*}
        By Lemma \ref{lemma: order 2 joint}, the terms $e^{10}_{il,k'l'} - e_{il}(1-e_{k'l'})$ are $O(1/I)$ and the remaining parts of the summand are of constant order, so the total contribution from such terms is $O(1/I^2 \times I^2 \times 1/I) = O(1/I)$.
        
        \item $i = i'$, $k = k'$. There are $O(I^2)$ such terms. The summands take the form 
        \begin{align*}
            \text{Cov}\{\mathbbm{1}(Z_{ij}=1)\mathbbm{1}(Z_{ij'} = 0), \mathbbm{1}(Z_{kl}=1)\mathbbm{1}(Z_{kl'} = 0)\} \times \frac{e^{10}_{ij,ij'} - e_{ij}(1-e_{ij'})}{e^{10}_{ij,ij'}e_{ij}(1-e_{ij'})} \frac{e^{10}_{kl,kl'} - e_{kl}(1-e_{kl'})}{e^{10}_{kl,kl'}e_{kl}(1-e_{kl'})}.
        \end{align*}
        By Lemma \ref{lemma: order 4 mixed}, the terms $\text{Cov}\{\mathbbm{1}(Z_{ij}=1)\mathbbm{1}(Z_{ij'} = 0), \mathbbm{1}(Z_{kl}=1)\mathbbm{1}(Z_{kl'} = 0)\}$ are $O(1/I)$ and the remaining parts of the summand are of constant order, so the total contribution from such terms is $O(1/I^2 \times I^2 \times 1/I) = O(1/I)$.
    
        \item $i =k$, $i' = k'$. There are $O(I^2)$ such terms. The summands take the form 
        \begin{align*}
            \text{Cov}\{\mathbbm{1}(Z_{ij}=1)\mathbbm{1}(Z_{i'j'} = 0), \mathbbm{1}(Z_{il}=1)\mathbbm{1}(Z_{i'l'} = 0)\} \times \frac{e^{10}_{ij,i'j'} - e_{ij}(1-e_{i'j'})}{e^{10}_{ij,i'j'}e_{ij}(1-e_{i'j'})} \frac{e^{10}_{il,i'l'} - e_{il}(1-e_{i'l'})}{e^{10}_{il,i'l'}e_{il}(1-e_{i'l'})}.
        \end{align*}
        By Lemma \ref{lemma: order 2 joint}, the terms $e^{10}_{ij,i'j'} - e_{ij}(1-e_{i'j'})$ and $e^{10}_{il,i'l'} - e_{il}(1-e_{i'l'})$ are $O(1/I)$ and the remaining parts of the summand are of constant order, so the total contribution from such terms is $O(1/I^2 \times I^2 \times 1/I^2) = O(1/I^2)$.
    
        \item $i = i'$, $k$, $k'$ distinct. There are $O(I^3)$ such terms. The summands take the form 
        \begin{align*}
            \text{Cov}\{\mathbbm{1}(Z_{ij}=1)\mathbbm{1}(Z_{ij'} = 0), \mathbbm{1}(Z_{kl}=1)\mathbbm{1}(Z_{k'l'} = 0)\} \times \frac{e^{10}_{ij,ij'} - e_{ij}(1-e_{ij'})}{e^{10}_{ij,ij'}e_{ij}(1-e_{ij'})} \frac{e^{10}_{kl,k'l'} - e_{kl}(1-e_{k'l'})}{e^{10}_{kl,k'l'}e_{kl}(1-e_{k'l'})}.
        \end{align*}
        By Lemmas \ref{lemma: order 2 joint} and \ref{lemma: order 4 mixed}, the terms $e^{10}_{kl,k'l'} - e_{kl}(1-e_{k'l'})$ and  $\text{Cov}\{\mathbbm{1}(Z_{ij}=1)\mathbbm{1}(Z_{ij'} = 0), \mathbbm{1}(Z_{kl}=1)\mathbbm{1}(Z_{k'l'} = 0)\}$ are $O(1/I)$ and the remaining parts of the summand are of constant order, so the total contribution from such terms is $O(1/I^2 \times I^3 \times 1/I^2) = O(1/I)$.
    
        \item $i = k$, $i'$, $k'$ distinct. There are $O(I^3)$ such terms. The summands take the form 
        \begin{align*}
            \text{Cov}\{\mathbbm{1}(Z_{ij}=1)\mathbbm{1}(Z_{i'j'} = 0), \mathbbm{1}(Z_{il}=1)\mathbbm{1}(Z_{k'l'} = 0)\} \times \frac{e^{10}_{ij,i'j'} - e_{ij}(1-e_{i'j'})}{e^{10}_{ij,i'j'}e_{ij}(1-e_{i'j'})} \frac{e^{10}_{il,k'l'} - e_{il}(1-e_{k'l'})}{e^{10}_{il,k'l'}e_{il}(1-e_{k'l'})}.
        \end{align*}
        By Lemma \ref{lemma: order 2 joint}, the terms $e^{10}_{ij,i'j'} - e_{ij}(1-e_{i'j'})$ and $e^{10}_{il,k'l'} - e_{il}(1-e_{k'l'})$ are $O(1/I)$ and the remaining parts of the summand are of constant order, so the total contribution from such terms is $O(1/I^2 \times I^3 \times 1/I^2) = O(1/I)$.
    
        \item $i$, $i'$, $k$, $k'$ distinct. There are $O(I^4)$ such terms. The summands take the form 
        \begin{align*}
            \text{Cov}\{\mathbbm{1}(Z_{ij}=1)\mathbbm{1}(Z_{i'j'} = 0), \mathbbm{1}(Z_{kl}=1)\mathbbm{1}(Z_{k'l'} = 0)\} \times \frac{e^{10}_{ij,i'j'} - e_{ij}(1-e_{i'j'})}{e^{10}_{ij,i'j'}e_{ij}(1-e_{i'j'})} \frac{e^{10}_{kl,k'l'} - e_{kl}(1-e_{k'l'})}{e^{10}_{kl,k'l'}e_{kl}(1-e_{k'l'})}.
        \end{align*} 
        By Lemmas \ref{lemma: order 2 joint} and \ref{lemma: order 4 mixed}, all of the terms $\text{Cov}\{\mathbbm{1}(Z_{ij}=1)\mathbbm{1}(Z_{i'j'} = 0), \mathbbm{1}(Z_{kl}=1)\mathbbm{1}(Z_{k'l'} = 0)\}$, $e^{10}_{ij,i'j'} - e_{ij}(1-e_{i'j'})$, and $e^{10}_{kl,k'l'} - e_{kl}(1-e_{k'l'})$ are all $O(1/I)$, and the remaining parts of the summand are of constant order, so the total contribution from such terms is $O(1/I^2 \times I^4 \times 1/I^3) = O(1/I)$.
    \end{enumerate}
    For the $C$ term, recall that $e^{10}_{kl,k'l'} = 0$ only when $k = k'$ and $l < l'$. We again have several cases:
    \begin{enumerate}
        \item $i = i' = k$. There are $O(I)$ such terms. The summands of constant order, so the contribution from such terms is $O(1/I^2 \times I) = O(1/I)$.

        \item $i = i'$, $k$ distinct. There are $O(I^2)$ such terms. The two summands take the forms 
        \begin{align*}
            \text{Cov}\{\mathbbm{1}(Z_{ij} = 1)\mathbbm{1}(Z_{ij'} = 0), \mathbbm{1}(Z_{kl} = 1)\} \times \frac{e^{10}_{ij,ij'} - e_{ij}(1-e_{ij'})}{e^{10}_{ij,ij'}e_{ij}(1-e_{ij'})2e_{kl}}
        \end{align*}
        and 
        \begin{align*}
            \text{Cov}\{\mathbbm{1}(Z_{ij} = 1)\mathbbm{1}(Z_{ij'} = 0), \mathbbm{1}(Z_{kl'} = 0)\} \times \frac{e^{10}_{ij,ij'} - e_{ij}(1-e_{ij'})}{e^{10}_{ij,ij'}e_{ij}(1-e_{ij'})2(1-e_{kl'})}.
        \end{align*}
        By Lemmas \ref{lemma: order 2 joint} and \ref{lemma: order 3 mixed}, $\text{Cov}\{\mathbbm{1}(Z_{ij} = 1)\mathbbm{1}(Z_{ij'} = 0), \mathbbm{1}(Z_{kl} = 1)\}$ and $\text{Cov}\{\mathbbm{1}(Z_{ij} = 1)\mathbbm{1}(Z_{ij'} = 0), \mathbbm{1}(Z_{kl'} = 0)\}$ are $O(1/I)$, and the other parts of the summand is of constant order, so the contribution from such terms is $O(1/I^2 \times I^2 \times 1/I) = O(1/I)$.
    
        \item $i = k$, $i'$ distinct. There are $O(I^2)$ such terms. The two summands take the forms 
        \begin{align*}
            \text{Cov}\{\mathbbm{1}(Z_{ij} = 1)\mathbbm{1}(Z_{i'j'} = 0), \mathbbm{1}(Z_{il} = 1)\} \times \frac{e^{10}_{ij,i'j'} - e_{ij}(1-e_{i'j'})}{e^{10}_{ij,i'j'}e_{ij}(1-e_{i'j'})2e_{il}}
        \end{align*}
        and 
        \begin{align*}
            \text{Cov}\{\mathbbm{1}(Z_{ij} = 1)\mathbbm{1}(Z_{i'j'} = 0), \mathbbm{1}(Z_{il'} = 0)\} \times \frac{e^{10}_{ij,i'j'} - e_{ij}(1-e_{i'j'})}{e^{10}_{ij,i'j'}e_{ij}(1-e_{i'j'})2(1-e_{il'})}.
        \end{align*} 
        By Lemma \ref{lemma: order 2 joint}, $e^{10}_{ij,i'j'} - e_{ij}(1-e_{i'j'}) = O(1/I)$, and the other parts of the summands are of constant order, so the contribution from such terms is $O(1/I^2 \times I^2 \times 1/I) = O(1/I)$.
    
        \item $i$, $i'$, $k$ all distinct. There are $O(I^3)$ such terms. The two summands take the forms 
        \begin{align*}
            \text{Cov}\{\mathbbm{1}(Z_{ij} = 1)\mathbbm{1}(Z_{i'j'} = 0), \mathbbm{1}(Z_{kl} = 1)\} \times \frac{e^{10}_{ij,i'j'} - e_{ij}(1-e_{i'j'})}{e^{10}_{ij,i'j'}e_{ij}(1-e_{i'j'})2e_{kl}} 
        \end{align*} 
        and 
        \begin{align*}
            \text{Cov}\{\mathbbm{1}(Z_{ij} = 1)\mathbbm{1}(Z_{i'j'} = 0), \mathbbm{1}(Z_{kl'} = 0)\} \times \frac{e^{10}_{ij,i'j'} - e_{ij}(1-e_{i'j'})}{e^{10}_{ij,i'j'}e_{ij}(1-e_{i'j'})2(1-e_{kl'})}.
        \end{align*}
        In this case, by Lemmas \ref{lemma: order 2 joint} and \ref{lemma: order 3 mixed}, $e^{10}_{ij,i'j'} - e_{ij}(1-e_{i'j'})$, $\text{Cov}\{\mathbbm{1}(Z_{ij} = 1)\mathbbm{1}(Z_{i'j'} = 0), \mathbbm{1}(Z_{kl} = 1)\}$, and $\text{Cov}\{\mathbbm{1}(Z_{ij} = 1)\mathbbm{1}(Z_{i'j'} = 0), \mathbbm{1}(Z_{kl'} = 0)\}$ are all $O(1/I)$. The other parts of the summands are of constant order, so the contribution from such terms is $O(1/I^2 \times I^3 \times 1/I^2) = O(1/I)$.
    \end{enumerate}
    We have thus established the desired result after showing the variances of $N\widehat{\text{Var}}(\widehat{\tau}_{1,\lambda_0, \text{HT}})$, $N\widehat{\text{Var}}(\widehat{\tau}_{0,\lambda_0, \text{HT}})$, and $N\widehat{\text{Cov}}(\widehat{\tau}_{1,\lambda_0, \text{HT}},\widehat{\tau}_{0,\lambda_0, \text{HT}})$ all tend to zero.
\end{proof}

We now have all of the ingredients to prove Proposition \ref{prop: ht test validity}.

\begin{proof}[Proof of Proposition \ref{prop: ht test validity}]
    Under the null $H_0: \lambda = \lambda_0$, $\tau_{\lambda_0} = 0$. Note that 
    \begin{align*}
        (\widehat{\tau}_{\lambda_0,\text{HT}}-\tau_{\lambda_0}) / \sqrt{\widehat{\text{Var}}(\widehat{\tau}_{\lambda_0,\text{HT}})} &= \widehat{\tau}_{\lambda_0,\text{HT}} / \sqrt{\widehat{\text{Var}}(\widehat{\tau}_{\lambda_0,\text{HT}})} \displaybreak[0]\\ &= \widehat{\tau}_{\lambda_0,\text{HT}} / \sqrt{{\text{Var}}(\widehat{\tau}_{\lambda_0,\text{HT}})}\times \sqrt{{\text{Var}}(\widehat{\tau}_{\lambda_0,\text{HT}})} / \sqrt{\widehat{\text{Var}}(\widehat{\tau}_{\lambda_0,\text{HT}})} \displaybreak[0]\\ &\stackrel{d}{\to} N(0,1) \times \sqrt{c/c'}.
    \end{align*}
    Thus, since $c/c' \leq 1$, asymptotically, the tails of $(\widehat{\tau}_{\lambda_0,\text{HT}}-\tau_{\lambda_0}) / \sqrt{\widehat{\text{Var}}(\widehat{\tau}_{\lambda_0,\text{HT}})} = \widehat{\tau}_{\lambda_0,\text{HT}} / S_{HT}(\lambda_0)$ are dominated by those of the standard normal distribution, proving the result.
\end{proof}

\section{Detailed simulation setup}
\label{section: detailed simulation setup}
There are two dichotomous factors in our numerical experiment, which amounts to $2 \times 2 = 4$ distinct settings. The first factor is whether the cluster size is informative or not, and the second factor is whether $I = J - 1$ or not. We hold constant how we simulate the cluster-period sizes and the cluster-period and individual covariates. For $j \in\{0,1, \ldots, J+1\}$, we simulate the cluster-period size from $N_{i j} \sim \mathcal{U}(10,90)+2(j+1)^{1.5}$ rounded to the nearest integer, and $\mathcal{U}$ stands for the uniform distribution; $X_{i j 1} \sim \mathcal{B}(0.5)$ represents an exogenous cluster-period-level summary variable following the Bernoulli distribution with $\mathbb{P}\left(X_{i j 1}=1\right)=0.5$, $X_{i j k 2} \sim i / I+\mathcal{U}(-1,1)$ is an individual-level covariate, and $\bar{X}_{j 2}$ denotes the sample average. Let $C_{ijk} \in \{0, 1, 2\}$ denote the compliance status, where 0 corresponds to complier ($D_{ijk}(1) = 1, D_{ijk}(0) = 0$), 1 corresponds to always-taker ($D_{ijk}(1) =  D_{ijk}(0) = 1$), and 2 corresponds to never-taker ($D_{ijk}(1) = D_{ijk}(0) = 0$). We exclude defiers ($D_{ijk}(1) = 0, D_{ijk}(0) = 1$). $c_F$ and $c_I$ are constants chosen to ensure that the compliance rate was close to 30\%, which is roughly what it was estimated to be in the REDAPS trial. Finally, $c_i \sim \mathcal{N}(0,0.1)$ is a cluster effect and $e_{i j k} \sim \mathcal{N}(0,0.9)$ is individual noise. In Appendix \ref{appendix: heavy tail sims}, we consider an identical setting except with $e_{i j k}$ drawn from a $t$-distribution with 4 degrees of freedom, which has substantially heavier tails. The results are qualitatively similar to the Gaussian error setting. However, in the one-cluster-per-sequence setting, the type I error rate of the unadjusted ANCOVA and ANCOVA-I procedures with CR3 standard errors and $t$ distribution are slightly inflated, while the ANCOVA-III procedure properly maintains the nominal test size.

We enforce the exclusion restriction, and in a slight abuse of notation, we let $Y_{i j k}(0)$ and $Y_{i j k}(1)$ denote potential outcomes with respect to the actual treatment $D$, not the randomized intervention (of course, they are identical among compliers). For the case without informative cluster size, we generate potential outcomes as follows:
\begin{align*}
    & Y_{i j k}(0)=\frac{j+1}{J+2}+X_{i j 1}+\left(X_{i j k 2}-\overline{X}_{j 2}\right)^2+c_i+e_{i j k}, \\
    & Y_{i j k}(1)=Y_{i j k}(0)+0.5 X_{i j 1}+\left(X_{i j k 2}-\overline{X}_{j 2}\right)^3.
\end{align*}
For the case without informative cluster size, we generate compliance status using a multinomial logistic regression with 3 levels (excluding defiers) as follows $(c_I = -0.5, c_F = 2.5)$: 
\begin{align*}
    &\log \frac{\mathbb P(C_{ijk} = 1)}{\mathbb P(C_{ijk} = 0)} = \left\{ c_I + \frac{j+1}{J+2} + 0.7X_{i j 1}+0.5\left(X_{i j k 2}-\overline{X}_{j 2}\right)^3+c_i+e_{i j k}\right\}/ c_F, \\ & \log \frac{\mathbb P(C_{ijk} = 2)}{\mathbb P(C_{ijk} = 0)} = \left\{c_I -\frac{j+1}{J+2} - 0.4X_{i j 1}+\left(X_{i j k 2}-\overline{X}_{j 2}\right)^2-c_i+e_{i j k}\right\}/ c_F.
\end{align*}
For the case with informative cluster size, we generate potential outcomes as follows:
\begin{align*}
    & Y_{i j k}(0)=\frac{j+1}{J+2}+X_{i j 1}+\left(X_{i j k 2}-\overline{X}_{j 2}\right)^2+c_i+e_{i j k}, \\
    & Y_{i j k}(1)=Y_{i j k}(0)+\frac{2 N_{i j} I}{(J+2)^{-1} \sum_{j=0}^{J+1} N_j}+0.5 X_{i j 1}+\left(X_{i j k 2}-\overline{X}_{j 2}\right)^3.
\end{align*}
For the case with informative cluster size, we generate compliance status using a multinomial logistic regression with 3 levels (excluding defiers) as follows $(c_I = -0.5, c_F = 2.5)$: 
\begin{align*}
    &\log \frac{\mathbb P(C_{ijk} = 1)}{\mathbb P(C_{ijk} = 0)} = \left\{c_I + \frac{j+1}{J+2} + \frac{2 N_{i j} I}{(J+2)^{-1} \sum_{j=0}^{J+1} N_j}+ 0.7X_{i j 1}+0.5\left(X_{i j k 2}-\overline{X}_{j 2}\right)^3+c_i+e_{i j k}\right\} / c_F, \\ 
    & \log \frac{\mathbb P(C_{ijk} = 2)}{\mathbb P(C_{ijk} = 0)} = \left\{c_I -\frac{j+1}{J+2} - \frac{2 N_{i j} I}{(J+2)^{-1} \sum_{j=0}^{J+1} N_j} - 0.4X_{i j 1}+\left(X_{i j k 2}-\overline{X}_{j 2}\right)^2-c_i+e_{i j k}\right\} / c_F,
\end{align*}

For the case when there are more clusters than periods, we fix the number of periods at $J = 5$ and vary $I \in \{12, 30, 60, 90\}$. For the case when $I = J - 1$, we take $J = 11$ to emulate the number of clusters and periods in the REDAPS trial. In practice, we would not expect $J$ to be much larger than $11$.

We expand on the description of the three Horvitz-Thompson estimators. One uses no regression adjustment, the second uses the \emph{pre- and post-rollout periods} and fits two separate linear regression models (for intervened and control) of the outcome on the cluster-period covariate $X_{ij1}$ and the individual covariate $X_{ijk2}$, and the third uses the \emph{full data} and fits two separate linear regression models (for intervened and control) of the outcome on the cluster-period covariate $X_{ij1}$ and the individual covariate $X_{ijk2}$. For the second, the model fit on the pre-rollout period individuals becomes the $g_0$ (control) function, and the model fit on the post-rollout period individuals becomes the $g_1$ (intervened) function that are used for augmenting the Horvitz-Thompson estimator. 

\section{Detailed simulation results} \label{section: detailed sim results}

In this section, we collect  detailed simulation results in Figures \ref{fig: ancova results}-\ref{fig: ivmodel results} and Tables \ref{tab: ancova results}-\ref{tab: iv model results}. Figure \ref{fig: summary results} compares the best performing methods of each type on the same plot. Figure \ref{fig: ancova results} collects the results for the ANCOVA estimators. The CR0 and CR3 variance estimates (the latter with $t$ correction) are denoted accordingly, and the $S(\lambda)$ variance estimator is denoted by DB. Figure \ref{fig: ht results} reports the results for the Horvitz-Thompson estimators. cons refers to whether or not the type-I error using the provably conservative variance estimator is utilized, as opposed to the simplified, potentially anti-conservative estimator. Figure \ref{fig: ivmodel results} collects results for the \texttt{ivmodel} procedures, based on Fuller and conditional likelihood ratio (CLR) methods \citep{Fuller1977, Moreira2003}. Tables \ref{tab: ancova results}-\ref{tab: iv model results} include the results on informative cluster size, bias results, and for the ANCOVA estimators, results when using `CR3' variance estimation but with a Gaussian rather than $t$ reference distribution.

\begin{figure}[ht!]
    \centering
    \includegraphics[width=\linewidth]{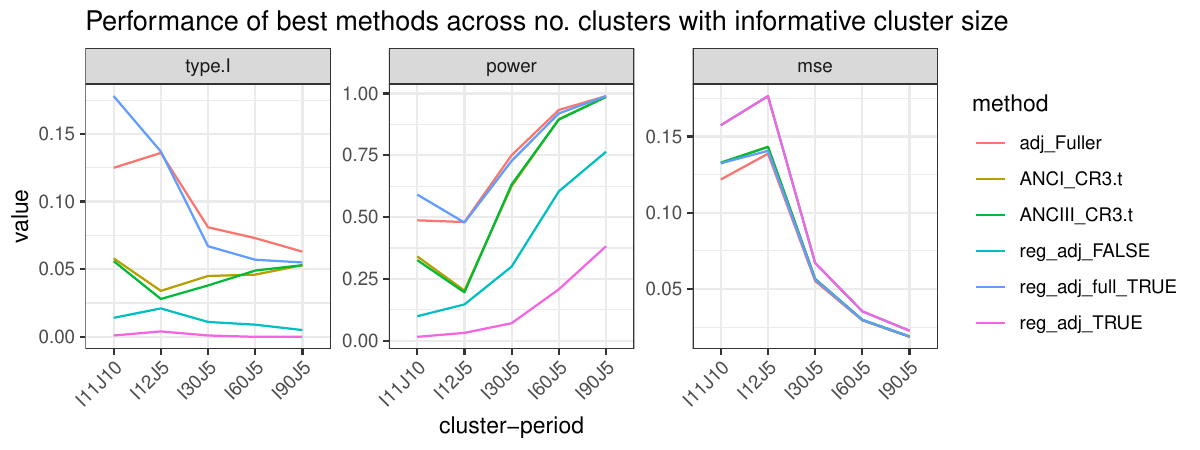}
    \caption{Simulation results for a subset of estimators under different cluster-period configurations and informative cluster size. `adj\_Fuller' corresponds to the adjusted \texttt{ivmodel} method with Fuller standard errors. `ANCI\_CR3.t' and `ANCIII\_CR3.t' are ANCOVA-I and III with CR3 standard errors compared to a $t$ distribution. `reg\_adj' and `reg\_adj\_full' are regression adjusted Horvitz-Thompson estimators using only pre and post rollout and full data, respectively, and `TRUE' refers to using the provably conservative variance estimator as opposed to the simplified, potentially anti-conservative estimator. `mse' is the Monte-Carlo mean squared error over 1000 iterations.`type.I' counts the proportion of iterations where testing at the true $\lambda_0$ resulted in a rejection. `power' counts the proportion of iterations where the false null $\lambda = 0$ was rejected.}
    \label{fig: summary results}
\end{figure}

\begin{figure}
    \centering
    \includegraphics[width=\linewidth]{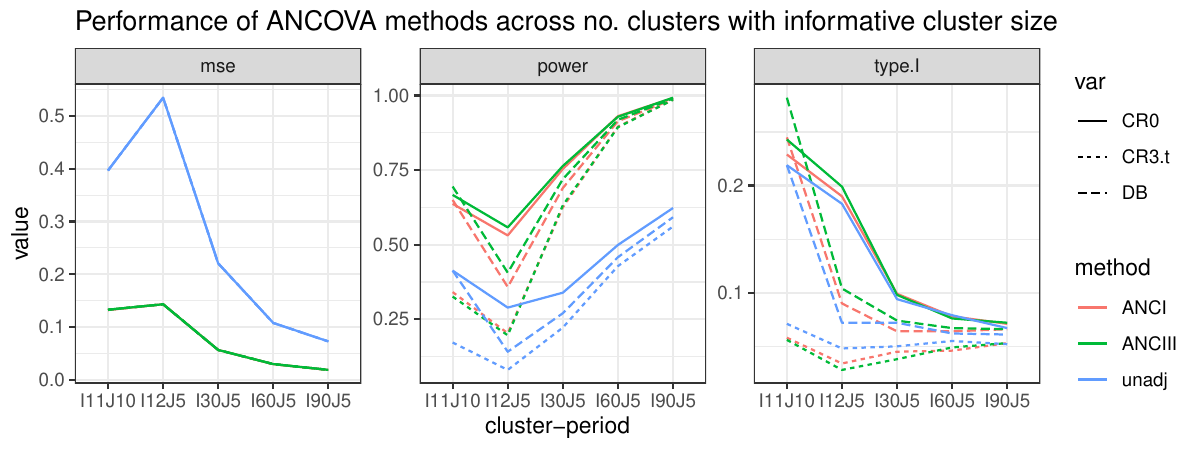}
    \caption{Simulation results for ANCOVA estimators under different cluster-period configurations and informative cluster size. `mse' is the mean squared error over 1000 iterations. `type.I' counts the proportion of iterations where testing at the true $\lambda_0$ resulted in a rejection. `var' refers to variance/reference distribution - `CR0', `CR3' comparing to a $t$-distribution, and design-based from \cite{Chen2025}. `power' counts the proportion of iterations where the false null $\lambda = 0$ was rejected.}
    \label{fig: ancova results}
\end{figure}

\begin{figure}
    \centering
    \includegraphics[width=\linewidth]{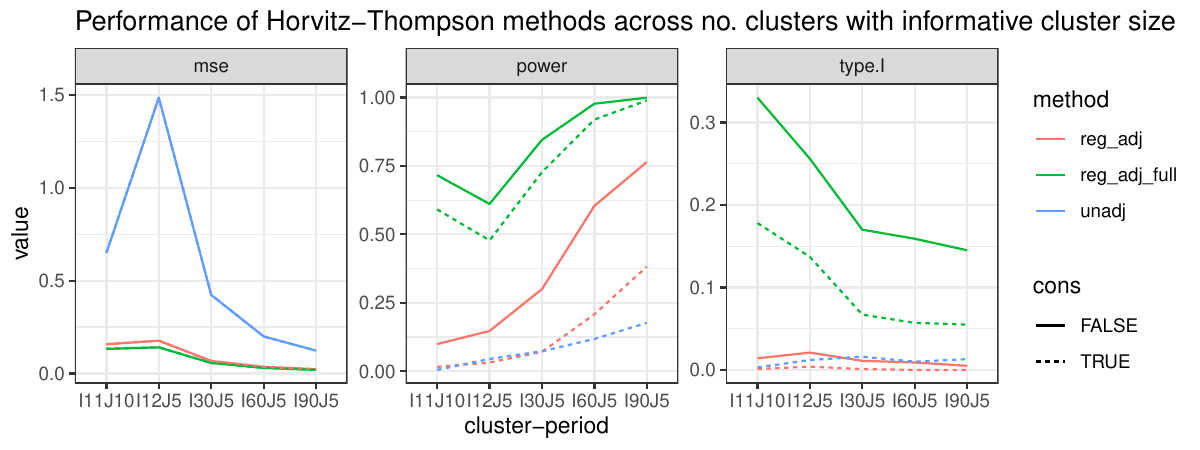}
    \caption{Simulation results for Horvitz-Thompson estimators under different cluster-period configurations and informative cluster size. The unadjusted HT estimator is omitted from the `power' and `type.I' plots as it produced ill-defined variance estimates over most of the simulation runs.  reg\_adj\_full is the adjusted HT estimator using all data for adjustment rather than only the pre and post rollout (reg\_adj). `mse' is the Monte-Carlo mean squared error over 1000 iterations. `type.I' counts the proportion of iterations where testing at the true $\lambda_0$ resulted in a rejection. `cons' denotes whether the provably conservative was used. `power' counts the proportion of iterations where the false null $\lambda = 0$ was rejected.}
    \label{fig: ht results}
\end{figure}

\begin{figure}
    \centering
    \includegraphics[width=\linewidth]{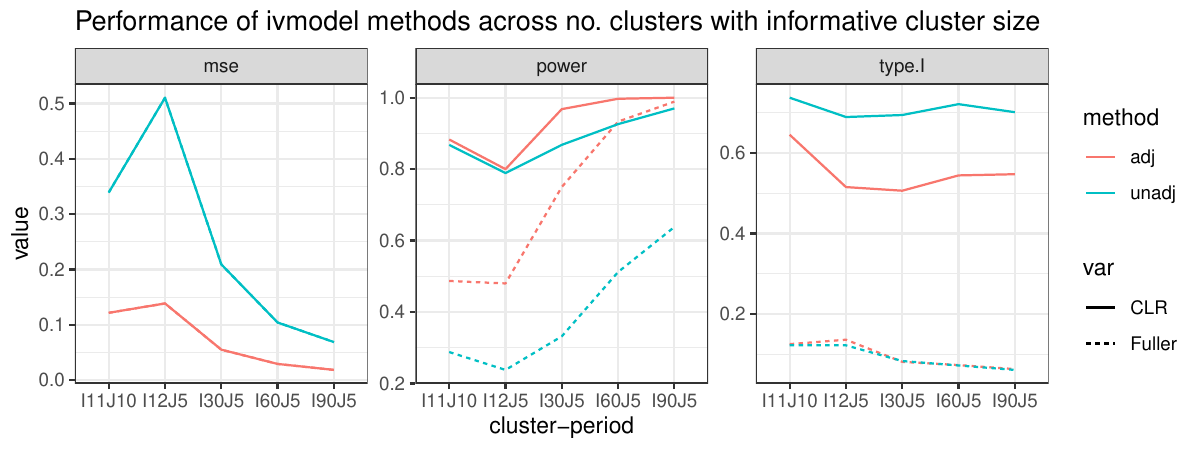}
    \caption{Simulation results for ivmodel estimators under different cluster-period configurations and informative cluster size. `mse' is the Monte-Carlo mean squared error over 1000 iterations.`type.I' counts the proportion of iterations where testing at the true $\lambda_0$ resulted in a rejection. `var' refers to Fuller or CLR variance estimators. The `power' columns count the proportion of iterations where the false null $\lambda = 0$ was rejected.}
    \label{fig: ivmodel results}
\end{figure}

\begin{sidewaystable}[ht]
    \centering
    \footnotesize
    \begin{tabular}{lllrrrrrrrrrr}
        \toprule
        IJ & method & inform & bias & mse & CR0.type.I & CR3.type.I & CR3.type.I.t & DB.type.I & CR0.power & CR3.power & CR3.power.t & DB.power \\ 
        \midrule
        I11J10 & ANCI & TRUE & -0.011 & 0.132 & 0.229 & 0.094 & 0.058 & 0.245 & 0.637 & 0.432 & 0.341 & 0.650 \\ 
        I11J10 & ANCIII & TRUE & 0.014 & 0.133 & 0.243 & 0.076 & 0.056 & 0.282 & 0.667 & 0.427 & 0.326 & 0.695 \\ 
        I11J10 & unadj & TRUE & -0.014 & 0.397 & 0.219 & 0.115 & 0.071 & 0.219 & 0.413 & 0.236 & 0.172 & 0.413 \\ 
        I12J5 & ANCI & TRUE & -0.038 & 0.143 & 0.190 & 0.062 & 0.034 & 0.090 & 0.531 & 0.281 & 0.203 & 0.358 \\ 
        I12J5 & ANCIII & TRUE & -0.020 & 0.143 & 0.199 & 0.046 & 0.028 & 0.104 & 0.558 & 0.265 & 0.196 & 0.406 \\ 
        I12J5 & unadj & TRUE & -0.048 & 0.535 & 0.183 & 0.065 & 0.048 & 0.072 & 0.289 & 0.109 & 0.080 & 0.141 \\ 
        I30J5 & ANCI & TRUE & -0.006 & 0.057 & 0.099 & 0.052 & 0.045 & 0.064 & 0.754 & 0.652 & 0.625 & 0.690 \\ 
        I30J5 & ANCIII & TRUE & 0.000 & 0.057 & 0.098 & 0.046 & 0.038 & 0.074 & 0.764 & 0.648 & 0.632 & 0.720 \\ 
        I30J5 & unadj & TRUE & 0.005 & 0.221 & 0.094 & 0.062 & 0.050 & 0.072 & 0.339 & 0.245 & 0.222 & 0.270 \\ 
        I60J5 & ANCI & TRUE & -0.002 & 0.030 & 0.078 & 0.052 & 0.046 & 0.064 & 0.931 & 0.900 & 0.896 & 0.914 \\ 
        I60J5 & ANCIII & TRUE & 0.001 & 0.030 & 0.076 & 0.055 & 0.049 & 0.067 & 0.929 & 0.903 & 0.894 & 0.919 \\ 
        I60J5 & unadj & TRUE & 0.001 & 0.108 & 0.079 & 0.057 & 0.055 & 0.062 & 0.499 & 0.441 & 0.428 & 0.458 \\ 
        I90J5 & ANCI & TRUE & -0.001 & 0.019 & 0.071 & 0.058 & 0.053 & 0.066 & 0.992 & 0.987 & 0.986 & 0.988 \\ 
        I90J5 & ANCIII & TRUE & 0.001 & 0.019 & 0.072 & 0.057 & 0.053 & 0.066 & 0.993 & 0.989 & 0.986 & 0.990 \\ 
        I90J5 & unadj & TRUE & -0.005 & 0.073 & 0.067 & 0.057 & 0.052 & 0.061 & 0.623 & 0.576 & 0.561 & 0.592 \\ 
        I11J10 & ANCI & FALSE & -0.013 & 0.124 & 0.195 & 0.081 & 0.046 & 0.200 & 0.310 & 0.139 & 0.092 & 0.334 \\ 
        I11J10 & ANCIII & FALSE & 0.003 & 0.127 & 0.224 & 0.081 & 0.054 & 0.256 & 0.367 & 0.157 & 0.109 & 0.415 \\ 
        I11J10 & unadj & FALSE & -0.037 & 0.538 & 0.214 & 0.102 & 0.071 & 0.214 & 0.237 & 0.125 & 0.094 & 0.237 \\ 
        I12J5 & ANCI & FALSE & -0.030 & 0.134 & 0.162 & 0.045 & 0.025 & 0.064 & 0.246 & 0.090 & 0.063 & 0.112 \\ 
        I12J5 & ANCIII & FALSE & -0.021 & 0.133 & 0.192 & 0.045 & 0.027 & 0.090 & 0.288 & 0.102 & 0.062 & 0.160 \\ 
        I12J5 & unadj & FALSE & -0.056 & 0.685 & 0.162 & 0.059 & 0.037 & 0.066 & 0.183 & 0.064 & 0.039 & 0.064 \\ 
        I30J5 & ANCI & FALSE & -0.005 & 0.054 & 0.076 & 0.040 & 0.030 & 0.053 & 0.264 & 0.185 & 0.158 & 0.206 \\ 
        I30J5 & ANCIII & FALSE & -0.002 & 0.053 & 0.094 & 0.046 & 0.034 & 0.065 & 0.308 & 0.195 & 0.177 & 0.241 \\ 
        I30J5 & unadj & FALSE & -0.001 & 0.279 & 0.089 & 0.058 & 0.050 & 0.066 & 0.136 & 0.078 & 0.067 & 0.094 \\ 
        I60J5 & ANCI & FALSE & -0.001 & 0.028 & 0.066 & 0.046 & 0.041 & 0.051 & 0.408 & 0.339 & 0.321 & 0.364 \\ 
        I60J5 & ANCIII & FALSE & -0.000 & 0.029 & 0.085 & 0.054 & 0.050 & 0.069 & 0.437 & 0.373 & 0.356 & 0.402 \\ 
        I60J5 & unadj & FALSE & -0.003 & 0.138 & 0.079 & 0.058 & 0.056 & 0.067 & 0.131 & 0.100 & 0.093 & 0.109 \\ 
        I90J5 & ANCI & FALSE & -0.000 & 0.018 & 0.048 & 0.036 & 0.035 & 0.041 & 0.491 & 0.439 & 0.431 & 0.461 \\ 
        I90J5 & ANCIII & FALSE & 0.001 & 0.018 & 0.068 & 0.052 & 0.046 & 0.057 & 0.521 & 0.481 & 0.471 & 0.499 \\ 
        I90J5 & unadj & FALSE & -0.007 & 0.093 & 0.070 & 0.055 & 0.053 & 0.063 & 0.166 & 0.148 & 0.139 & 0.152 \\ 
        \bottomrule
    \end{tabular}
    \caption{Simulation results for ANCOVA estimators under different cluster-period configurations.  `inform' indicates whether cluster size is informative or not. `bias' and `mse' are the Monte-Carlo bias and mean squared error over 1000 iterations. The `type.I' columns count the proportion of iterations where testing at the true $\lambda_0$ resulted in a rejection using specific variance estimators - `CR0', `CR3', `CR3' comparing to a $t$-distribution, and design-based from \citet{Chen2025}. The `power' columns count the proportion of iterations where the false null $\lambda = 0$ was rejected.}
    \label{tab: ancova results}
\end{sidewaystable}

\begin{sidewaystable}[ht]
    \centering
    \begin{tabular}{lllrrrrrrr}
        \toprule
        IJ & method & inform & mean.bias & mse & type.I.rate & cons.type.I.rate & power.rate & cons.power.rate \\ 
        \midrule
        I11J10 & reg\_adj & TRUE & -0.013 & 0.157 & 0.014 & 0.001 & 0.099 & 0.016 \\ 
        I11J10 & reg\_adj\_full & TRUE & 0.014 & 0.132 & 0.330 & 0.178 & 0.716 & 0.591 \\ 
        I11J10 & unadj & TRUE & -0.075 & 0.650 & 0.305 & 0.003 & 0.125 & 0.005 \\ 
        I12J5 & reg\_adj & TRUE & -0.030 & 0.176 & 0.021 & 0.004 & 0.147 & 0.032 \\ 
        I12J5 & reg\_adj\_full & TRUE & -0.018 & 0.141 & 0.256 & 0.137 & 0.611 & 0.478 \\ 
        I12J5 & unadj & TRUE & -0.244 & 1.487 &  & 0.012 &  & 0.045 \\ 
        I30J5 & reg\_adj & TRUE & -0.003 & 0.067 & 0.011 & 0.001 & 0.300 & 0.071 \\ 
        I30J5 & reg\_adj\_full & TRUE & 0.001 & 0.056 & 0.170 & 0.067 & 0.845 & 0.727 \\ 
        I30J5 & unadj & TRUE & -0.050 & 0.423 &  & 0.016 &  & 0.073 \\ 
        I60J5 & reg\_adj & TRUE & 0.004 & 0.035 & 0.009 & 0.000 & 0.604 & 0.208 \\ 
        I60J5 & reg\_adj\_full & TRUE & 0.002 & 0.030 & 0.159 & 0.057 & 0.977 & 0.919 \\ 
        I60J5 & unadj & TRUE & -0.029 & 0.199 &  & 0.010 &  & 0.118 \\ 
        I90J5 & reg\_adj & TRUE & -0.001 & 0.023 & 0.005 & 0.000 & 0.764 & 0.382 \\ 
        I90J5 & reg\_adj\_full & TRUE & 0.001 & 0.019 & 0.145 & 0.055 & 0.999 & 0.989 \\ 
        I90J5 & unadj & TRUE & -0.019 & 0.123 &  & 0.013 &  & 0.177 \\ 
        I11J10 & reg\_adj & FALSE & -0.021 & 0.147 & 0.010 & 0.000 & 0.031 & 0.002 \\ 
        I11J10 & reg\_adj\_full & FALSE & 0.002 & 0.127 & 0.308 & 0.168 & 0.440 & 0.296 \\ 
        I11J10 & unadj & FALSE & -0.078 & 0.730 & 0.292 & 0.002 & 0.234 & 0.004 \\ 
        I12J5 & reg\_adj & FALSE & -0.033 & 0.163 & 0.024 & 0.001 & 0.062 & 0.012 \\ 
        I12J5 & reg\_adj\_full & FALSE & -0.021 & 0.132 & 0.238 & 0.125 & 0.347 & 0.209 \\ 
        I12J5 & unadj & FALSE & -0.170 & 1.203 & 0.500 & 0.013 & 1.000 & 0.024 \\ 
        I30J5 & reg\_adj & FALSE & -0.006 & 0.062 & 0.013 & 0.003 & 0.070 & 0.012 \\ 
        I30J5 & reg\_adj\_full & FALSE & -0.001 & 0.053 & 0.158 & 0.059 & 0.409 & 0.240 \\ 
        I30J5 & unadj & FALSE & -0.033 & 0.400 &  & 0.014 &  & 0.033 \\ 
        I60J5 & reg\_adj & FALSE & 0.002 & 0.033 & 0.010 & 0.001 & 0.128 & 0.017 \\ 
        I60J5 & reg\_adj\_full & FALSE & -0.000 & 0.029 & 0.155 & 0.057 & 0.561 & 0.384 \\ 
        I60J5 & unadj & FALSE & -0.021 & 0.197 &  & 0.011 &  & 0.044 \\ 
        I90J5 & reg\_adj & FALSE & -0.000 & 0.021 & 0.006 & 0.000 & 0.182 & 0.024 \\ 
        I90J5 & reg\_adj\_full & FALSE & 0.001 & 0.018 & 0.145 & 0.049 & 0.664 & 0.468 \\ 
        I90J5 & unadj & FALSE & -0.015 & 0.125 &  & 0.014 &  & 0.051 \\ 
        \bottomrule
    \end{tabular}
    \caption{Simulation results for Horvitz-Thompson estimators under different cluster-period configurations. Blanks indicate settings where the unadjusted HT estimator produced ill-defined variance estimates over most of the simulation runs. `inform' indicates whether cluster size is informative or not. reg\_adj\_full is the adjusted HT estimator using all data for adjustment rather than only the pre- and post-rollout (reg\_adj). `bias' and `mse' are the Monte-Carlo bias and mean squared error over 1000 iterations. The `type.I' columns count the proportion of iterations where testing at the true $\lambda_0$ resulted in a rejection using the provable conservative and not necessarily conservative variance estimators. The `power' columns count the proportion of iterations where the false null $\lambda = 0$ was rejected.}
    \label{tab: ht results}
\end{sidewaystable}

\begin{sidewaystable}[ht]
    \centering
    \begin{tabular}{lllrrrrrr}
        \toprule
        cluster-period & method & inform & Fuller.mean.bias & Fuller.mse & CLR.type.I & Fuller.type.I & CLR.power & Fuller.power \\ 
        \midrule
        I11J10 & adj\_ivm & TRUE & -0.022 & 0.122 & 0.645 & 0.125 & 0.883 & 0.487 \\ 
        I11J10 & unadj\_ivm & TRUE & -0.019 & 0.339 & 0.737 & 0.122 & 0.868 & 0.288 \\ 
        I12J5 & adj\_ivm & TRUE & -0.037 & 0.139 & 0.515 & 0.136 & 0.800 & 0.480 \\ 
        I12J5 & unadj\_ivm & TRUE & -0.047 & 0.511 & 0.689 & 0.122 & 0.789 & 0.238 \\ 
        I30J5 & adj\_ivm & TRUE & -0.006 & 0.055 & 0.506 & 0.081 & 0.968 & 0.750 \\ 
        I30J5 & unadj\_ivm & TRUE & 0.001 & 0.209 & 0.694 & 0.083 & 0.868 & 0.332 \\ 
        I60J5 & adj\_ivm & TRUE & -0.003 & 0.029 & 0.544 & 0.073 & 0.997 & 0.933 \\ 
        I60J5 & unadj\_ivm & TRUE & -0.002 & 0.104 & 0.721 & 0.072 & 0.926 & 0.512 \\ 
        I90J5 & adj\_ivm & TRUE & -0.001 & 0.019 & 0.547 & 0.063 & 1.000 & 0.989 \\ 
        I90J5 & unadj\_ivm & TRUE & -0.007 & 0.069 & 0.701 & 0.061 & 0.970 & 0.638 \\ 
        I11J10 & adj\_ivm & FALSE & -0.015 & 0.114 & 0.636 & 0.114 & 0.730 & 0.211 \\ 
        I11J10 & unadj\_ivm & FALSE & -0.029 & 0.466 & 0.745 & 0.109 & 0.779 & 0.149 \\ 
        I12J5 & adj\_ivm & FALSE & -0.032 & 0.130 & 0.511 & 0.105 & 0.589 & 0.197 \\ 
        I12J5 & unadj\_ivm & FALSE & -0.057 & 0.661 & 0.693 & 0.107 & 0.754 & 0.133 \\ 
        I30J5 & adj\_ivm & FALSE & -0.004 & 0.052 & 0.482 & 0.069 & 0.718 & 0.258 \\ 
        I30J5 & unadj\_ivm & FALSE & -0.003 & 0.267 & 0.715 & 0.076 & 0.740 & 0.130 \\ 
        I60J5 & adj\_ivm & FALSE & -0.000 & 0.028 & 0.511 & 0.064 & 0.835 & 0.408 \\ 
        I60J5 & unadj\_ivm & FALSE & -0.004 & 0.135 & 0.723 & 0.069 & 0.815 & 0.134 \\ 
        I90J5 & adj\_ivm & FALSE & 0.002 & 0.018 & 0.513 & 0.051 & 0.901 & 0.509 \\ 
        I90J5 & unadj\_ivm & FALSE & -0.009 & 0.089 & 0.707 & 0.070 & 0.800 & 0.167 \\ 
        \bottomrule
    \end{tabular}
    \caption{Simulation results for ivmodel estimators under different cluster-period configurations. `inform' indicates whether cluster size is informative or not. `bias' and `mse' are the Monte-Carlo bias and mean squared error over 1000 iterations. The `type.I' columns count the proportion of iterations where testing at the true $\lambda_0$ resulted in a rejection using Fuller or CLR variance estimators. The `power' columns count the proportion of iterations where the false null $\lambda = 0$ was rejected.}
    \label{tab: iv model results}
\end{sidewaystable}

\section{Auxiliary lemmas}

\allowdisplaybreaks[4]

Most of the lemmas consist of calculations of certain joint probabilities of treatment assignments under the stepped-wedge design. The calculations are not particularly interesting and are quite tedious to write out, so they are omitted for brevity.
\begin{lemma}[Order 2 Joint Probabilities] \label{lemma: order 2 joint}
    Under the stepped-wedge design, for $j < j'$,
    \begin{equation}
        \begin{aligned}
            & \mathbb P(Z_{ij} = 1, Z_{ij'} = 1) = \frac{I_j}{I}, \  \mathbb P(Z_{ij} = 0, Z_{ij'} = 1) = \frac{I_{j'}-I_j}{I}, \\  
            & \mathbb P(Z_{ij} = 0, Z_{ij'} = 0) = 1-\frac{I_{j'}}{I}, \ \mathbb P(Z_{ij} = 1, Z_{ij'} = 0) = 0, \\  
            & \mathbb P(Z_{ij} = 1, Z_{i'j} = 1) = \frac{I_j}{I} \times \frac{I_j - 1}{I - 1},\\ 
            & \mathbb P(Z_{ij} = 1, Z_{i'j} = 0) = \mathbb P(Z_{ij} = 0, Z_{i'j} = 1) = \left(1-\frac{I_j}{I}\right) \times \frac{I_j }{I - 1}, \\  
            & \mathbb P(Z_{ij} = 0, Z_{i'j} = 0) = \frac{I - I_j}{I} \times \frac{I- I_j -1 }{I - 1},\\  
            & \mathbb P(Z_{ij} = 1, Z_{i'j'} = 1) = \frac{I_j}{I} \times \frac{I_{j'}-1}{I-1},\\ 
            & \mathbb P(Z_{ij} = 0, Z_{i'j'} = 1) = \left(\frac{I_{j'}-I_j}{I}\right)\times \frac{I_{j'}-1}{I-1} + \left(1-\frac{I_{j'}}{I}\right)\times \frac{I_{j'}}{I-1}, \\  
            & \mathbb P(Z_{ij} = 0, Z_{i'j'} = 0) = \left(1-\frac{I_{j'}}{I}\right) \times \frac{I - I_{j}-1}{I-1}, \\  
            & \mathbb P(Z_{ij} = 1, Z_{i'j'} = 0) = \frac{I_{j}}{I} \times \frac{I-I_{j'}}{I-1}.
        \end{aligned}
    \end{equation}    
\end{lemma}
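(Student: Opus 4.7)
The plan is to encode the stepped-wedge randomization from Assumption \ref{assumption: sw randomization} as a uniform random permutation and then reduce every joint probability in the statement to an elementary hypergeometric counting argument. Concretely, Assumption \ref{assumption: sw randomization} implies that the vector of adoption times $(A_1,\ldots,A_I)$ is a uniform permutation of the multiset in which each period $j \in \{1,\ldots,J+1\}$ appears exactly $I_j - I_{j-1}$ times, and the treatment indicators are the deterministic functions $Z_{ij} = \mathbb{I}\{A_i \leq j\}$. Under this representation, each row of the lemma becomes a statement about draws without replacement from this multiset.

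For the single-cluster block ($i = i'$, $j < j'$), the key observation is monotonicity: $\{A_i \leq j\} \subseteq \{A_i \leq j'\}$, so $P(Z_{ij} = 1, Z_{ij'} = 0) = 0$ immediately, and the remaining three identities follow from the marginal $P(Z_{ij} = 1) = I_j/I$ combined with a simple partition of the sample space.

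For the two-cluster same-period block ($i \neq i'$, $j = j'$), I would apply the sampling-without-replacement identity: conditional on $A_i$ taking any particular value $\leq j$, the remaining $I-1$ adoption times are a uniform permutation of the multiset with that slot removed, so $P(A_{i'} \leq j \mid A_i \leq j) = (I_j - 1)/(I-1)$. Multiplying by $P(A_i \leq j) = I_j/I$ gives the $(1,1)$ entry; the $(0,1)$ and $(1,0)$ cases are symmetric, and the $(0,0)$ case swaps $I_j$ for $I - I_j$.

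The main obstacle, and essentially the only substantive computation, is the two-cluster two-period block ($i \neq i'$, $j < j'$). Here I would condition on the state of cluster $i$, splitting into the three subcases $A_i \leq j$, $j < A_i \leq j'$, and $A_i > j'$; in each case the remaining $I-1$ slots form a reduced multiset from which the required conditional probability for $A_{i'}$ is again a simple ratio. For example, for $P(Z_{ij} = 0, Z_{i'j'} = 1)$ the contributions are $((I_{j'} - I_j)/I) \cdot ((I_{j'} - 1)/(I-1))$ from $j < A_i \leq j'$ (the slot consumed by $i$ was among those $\leq j'$) and $((I - I_{j'})/I) \cdot (I_{j'}/(I-1))$ from $A_i > j'$, which sums to the stated expression. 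The remaining entries are obtained by analogous two-case decompositions, or more economically by combining the already-computed joint with the marginals $P(Z_{ij} = 1) = I_j/I$ and $P(Z_{i'j'} = 1) = I_{j'}/I$ via $P(Z_{ij}=1, Z_{i'j'}=0) = P(Z_{ij}=1) - P(Z_{ij}=1, Z_{i'j'}=1)$ and similarly for the last entry. The difficulty is purely bookkeeping: making sure each conditional count correctly accounts for the slot consumed by cluster $i$ and that the inequality $j < j'$ is used consistently so that an $A_i \leq j$ slot is automatically also $\leq j'$.
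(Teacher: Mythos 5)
Your proposal is correct: the permutation representation of the adoption times together with the monotonicity of $Z_{ij}=\mathbb{I}\{A_i\le j\}$ and the sampling-without-replacement conditioning yields exactly the stated formulas, and I verified that each of your case decompositions (in particular the two contributions to $P(Z_{ij}=0,Z_{i'j'}=1)$ and the marginal-subtraction shortcut for the last two entries) reproduces the displayed expressions. The paper itself omits the proof of this lemma as routine and tedious, and your argument is precisely the standard computation it presupposes — the same permutation viewpoint the authors use explicitly in their CLT construction.
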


\begin{lemma}[First and second moments] \label{lemma: moments}
    \begin{equation}
        \begin{aligned}
            \mathbb E(\widehat{\tau}_{a,\lambda_0,HT}) &= \tau_{a,\lambda_0} \text{ for } a = 0,1. \\ 
            \mathrm{Var}(\widehat{\tau}_{1,\lambda_0,HT}) &=  \frac{1}{N^2}\sum_{i=1}^I \sum_{j=1}^J \frac{e_{ij}(1-e_{ij})}{e_{ij}^2} N_{ij}^2 \overline{R}_{ij}(1)^{2} \\ 
            &\quad+ \frac{1}{N^2}\sum_{i=1}^I \sum_{j=1}^J\sum_{(i',j') \neq (i,j)} \frac{e^{11}_{ij,i'j'} - e_{ij}e_{i'j'}}{e_{ij}e_{i'j'}} N_{ij}N_{i'j'} \overline{R}_{ij}(1)\overline{R}_{i'j'}(1), \\
            \mathrm{Var}(\widehat{\tau}_{0,\lambda_0,HT}) &=  \frac{1}{N^2}\sum_{i=1}^I \sum_{j=1}^J \frac{e_{ij}(1-e_{ij})}{(1-e_{ij})^2} N_{ij}^2 \overline{R}_{ij}(0)^{2} \\ 
            &\quad+ \frac{1}{N^2}\sum_{i=1}^I \sum_{j=1}^J\sum_{(i',j') \neq (i,j)}\frac{e^{00}_{ij,i'j'} - (1-e_{ij})(1-e_{i'j'})}{(1-e_{ij})(1-e_{i'j'})} N_{ij}N_{i'j'} \overline{R}_{ij}(0)\overline{R}_{i'j'}(0), \\
            \mathrm{Cov}(\widehat{\tau}_{1,\lambda_0,HT},\widehat{\tau}_{0,\lambda_0,HT}) &= -\frac{1}{N^2}\sum_{i=1}^I \sum_{j=1}^J N_{ij}^2 \overline{R}_{ij}(0)\overline{R}_{ij}(1)\\ 
            &\quad+ \frac{1}{N^2}\sum_{i=1}^I \sum_{j=1}^J\sum_{(i',j') \neq (i,j)}\frac{e^{10}_{ij,i'j'} - e_{ij}(1-e_{i'j'})}{e_{ij}(1-e_{i'j'})} N_{ij}N_{i'j'} \overline{R}_{ij}(1)\overline{R}_{i'j'}(0).
        \end{aligned}
    \end{equation}
\end{lemma}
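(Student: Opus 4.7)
\textbf{Proof plan for Lemma \ref{lemma: moments}.}
The plan is to rewrite each single-arm estimator as a linear combination of the cluster-period treatment indicators with fixed coefficients, and then invoke linearity of expectation together with standard bivariate second-moment calculations for those indicators. Using $Y_{ijk}=Z_{ij}Y_{ijk}(1)+(1-Z_{ij})Y_{ijk}(0)$ and similarly for $D_{ijk}$, I would first observe that $Z_{ij}\overline{R}_{ij}=Z_{ij}\overline{R}_{ij}(1)$ and $(1-Z_{ij})\overline{R}_{ij}=(1-Z_{ij})\overline{R}_{ij}(0)$, so that
\[
\widehat{\tau}_{1,\lambda_0,\mathrm{HT}}=\frac{1}{N}\sum_{i,j}\frac{Z_{ij}}{e_{ij}}\,N_{ij}\,\overline{R}_{ij}(1),\qquad
\widehat{\tau}_{0,\lambda_0,\mathrm{HT}}=\frac{1}{N}\sum_{i,j}\frac{1-Z_{ij}}{1-e_{ij}}\,N_{ij}\,\overline{R}_{ij}(0).
\]
Since all potential outcomes and covariates are fixed by conditioning on $\mathcal{Y},\mathcal{D},\mathcal{X}$, the only random quantities are the $Z_{ij}$, and Assumption \ref{assumption: sw randomization} gives $E[Z_{ij}]=e_{ij}$. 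Unbiasedness for $a=1$ then follows termwise; the argument for $a=0$ is identical using $E[1-Z_{ij}]=1-e_{ij}$.

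For the single-arm variances I would apply $\mathrm{Var}(\sum_{i,j}\xi_{ij})=\sum_{i,j}\mathrm{Var}(\xi_{ij})+\sum_{(i,j)\neq(i',j')}\mathrm{Cov}(\xi_{ij},\xi_{i'j'})$ to $\xi_{ij}=\frac{Z_{ij}}{e_{ij}}N_{ij}\overline{R}_{ij}(1)$. The diagonal terms use $\mathrm{Var}(Z_{ij}/e_{ij})=e_{ij}(1-e_{ij})/e_{ij}^{2}$, which produces the first line of the stated formula. The off-diagonal terms require the joint probability $e^{11}_{ij,i'j'}:=P(Z_{ij}=1,Z_{i'j'}=1)$, from which $\mathrm{Cov}(Z_{ij}/e_{ij},Z_{i'j'}/e_{i'j'})=(e^{11}_{ij,i'j'}-e_{ij}e_{i'j'})/(e_{ij}e_{i'j'})$; Lemma \ref{lemma: order 2 joint} supplies $e^{11}_{ij,i'j'}$ explicitly. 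Substituting in the fixed coefficients $N_{ij}N_{i'j'}\overline{R}_{ij}(1)\overline{R}_{i'j'}(1)$ yields the off-diagonal sum. The computation for $\widehat{\tau}_{0,\lambda_0,\mathrm{HT}}$ is word-for-word analogous with $1-Z_{ij}$ in place of $Z_{ij}$ and $e^{00}_{ij,i'j'}$ replacing $e^{11}_{ij,i'j'}$.

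For the covariance I would split into the diagonal $(i,j)=(i',j')$ term and the off-diagonal sum. On the diagonal, $Z_{ij}(1-Z_{ij})=0$ deterministically, so $E[Z_{ij}(1-Z_{ij})/(e_{ij}(1-e_{ij}))]=0$ and $E[Z_{ij}/e_{ij}]\cdot E[(1-Z_{ij})/(1-e_{ij})]=1$, giving $\mathrm{Cov}(Z_{ij}/e_{ij},(1-Z_{ij})/(1-e_{ij}))=-1$; multiplying by $N_{ij}^{2}\overline{R}_{ij}(1)\overline{R}_{ij}(0)$ and summing produces the first line of the covariance formula. For the off-diagonal term I use $e^{10}_{ij,i'j'}:=P(Z_{ij}=1,Z_{i'j'}=0)$ from Lemma \ref{lemma: order 2 joint} to obtain $\mathrm{Cov}(Z_{ij}/e_{ij},(1-Z_{i'j'})/(1-e_{i'j'}))=(e^{10}_{ij,i'j'}-e_{ij}(1-e_{i'j'}))/(e_{ij}(1-e_{i'j'}))$, and multiplying by the fixed weights $N_{ij}N_{i'j'}\overline{R}_{ij}(1)\overline{R}_{i'j'}(0)$ and summing gives the second line.

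No real obstacle arises; the entire argument is linearity of expectation plus pairwise indicator-moment bookkeeping, so the statement reduces to the joint probability calculations already tabulated in Lemma \ref{lemma: order 2 joint}. The only mild subtlety is the treatment of the diagonal in the covariance, where the identity $Z_{ij}(1-Z_{ij})\equiv 0$ must be invoked so that one does not mistakenly plug an ``$e^{10}_{ij,ij}$'' into the off-diagonal formula; once that is handled correctly the $-N_{ij}^{2}\overline{R}_{ij}(0)\overline{R}_{ij}(1)/N^{2}$ diagonal contribution falls out. A formal write-up would therefore be short and largely algebraic, and I would omit the most tedious index manipulations in the interest of brevity.
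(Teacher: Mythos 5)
Your plan is correct and is exactly the standard argument the paper itself invokes (it omits the proof, citing Proposition 4.2 of Aronow and Middleton's framework): write each arm as a fixed linear combination of the $Z_{ij}$, use $E[Z_{ij}]=e_{ij}$ for unbiasedness, and reduce the second moments to the pairwise joint probabilities $e^{11}_{ij,i'j'}$, $e^{00}_{ij,i'j'}$, $e^{10}_{ij,i'j'}$ tabulated in Lemma \ref{lemma: order 2 joint}. Your handling of the diagonal covariance term via $Z_{ij}(1-Z_{ij})\equiv 0$ is precisely the step that produces the $-N_{ij}^{2}\overline{R}_{ij}(0)\overline{R}_{ij}(1)$ contribution, so nothing is missing.
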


\begin{proof}[Proof of Lemma \ref{lemma: moments}]
    The proof follows similar steps as the proof for Proposition 4.2 of \citet{Aronow2017}, and is thus omitted.
\end{proof}

\begin{lemma}[Order 4 treatment probabilities] \label{lemma: order 4 treatment}
    Let $1 \leq a \leq b \leq c \leq d \leq J$ and $i, i', k, k'$ are pairwise distinct.
    \begin{align*}
        &\text{All distinct:} \\ 
        &\mathbb P(Z_{ia} = 1, Z_{i'b} = 1,Z_{kc} = 1, Z_{k'd} = 1) = \frac{I_a}{I} \times \frac{I_b - 1}{I - 1} \times \frac{I_c - 2}{I-2} \times \frac{I_d-3}{I-3} \\
        &\text{3 distinct:} \\ 
        &\mathbb P(Z_{ia} = 1, Z_{ib} = 1,Z_{kc} = 1, Z_{k'd} = 1) = \frac{I_a}{I} \times \frac{I_c - 1}{I-1} \times \frac{I_d-2}{I-2} \\ 
        &\mathbb P(Z_{ia} = 1, Z_{i'b} = 1,Z_{i'c} = 1, Z_{k'd} = 1) = \frac{I_a}{I} \times \frac{I_b - 1}{I-1} \times \frac{I_d-2}{I-2} \\ 
        &\mathbb P(Z_{ia} = 1, Z_{i'b} = 1,Z_{kc} = 1, Z_{kd} = 1) = \frac{I_a}{I} \times \frac{I_b - 1}{I-1} \times \frac{I_c-2}{I-2} \\ 
        &\mathbb P(Z_{ia} = 1, Z_{i'b} = 1,Z_{ic} = 1, Z_{kd} = 1) = \frac{I_a}{I} \times \frac{I_b - 1}{I-1} \times \frac{I_d-2}{I-2} \\ 
        &\mathbb P(Z_{ia} = 1, Z_{i'b} = 1,Z_{kc} = 1, Z_{i'd} = 1) = \frac{I_a}{I} \times \frac{I_b - 1}{I-1} \times \frac{I_c-2}{I-2} \\ 
        &\mathbb P(Z_{ia} = 1, Z_{i'b} = 1,Z_{kc} = 1, Z_{id} = 1) = \frac{I_a}{I} \times \frac{I_b - 1}{I-1} \times \frac{I_c-2}{I-2} \\ 
        &\text{2 distinct} \\  
        &\mathbb P(Z_{ia} = 1, Z_{ib} = 1,Z_{ic} = 1, Z_{k'd} = 1) = \frac{I_a}{I} \times \frac{I_d - 1}{I - 1}  \\ 
        &\mathbb P(Z_{ia} = 1, Z_{i'b} = 1,Z_{i'c} = 1, Z_{i'd} = 1) = \frac{I_a}{I} \times \frac{I_d - 1}{I - 1} \\  
        &\mathbb P(Z_{ia} = 1, Z_{ib} = 1,Z_{kc} = 1, Z_{id} = 1) = \frac{I_a}{I} \times \frac{I_c - 1}{I - 1}  \\ 
        &\mathbb P(Z_{ia} = 1, Z_{i'b} = 1,Z_{ic} = 1, Z_{id} = 1) = \frac{I_a}{I} \times \frac{I_b - 1}{I - 1} \\ 
        &\mathbb P(Z_{ia} = 1, Z_{ib} = 1,Z_{kc} = 1, Z_{kd} = 1) = \frac{I_a}{I} \times \frac{I_c-1}{I-1}\\ 
        &\mathbb P(Z_{ia} = 1, Z_{i'b} = 1,Z_{ic} = 1, Z_{i'd} = 1) = \frac{I_a}{I} \times \frac{I_b-1}{I-1} \\ 
        &\mathbb P(Z_{ia} = 1, Z_{i'b} = 1,Z_{i'c} = 1, Z_{id} = 1) = \frac{I_a}{I} \times \frac{I_b-1}{I-1}\\ 
        &\text{1 distinct} \\ 
        &\mathbb P(Z_{ia} = 1, Z_{ib} = 1,Z_{ic} = 1, Z_{id} = 1) = \frac{I_a}{I} 
    \end{align*}
\end{lemma}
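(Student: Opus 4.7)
The plan is to leverage the equivalent permutation representation of the stepped-wedge randomization described earlier in Appendix \ref{appendix: horvitz-thompson}: we may write $Z_{ij} = \mathbf{1}(V_i \leq I_j)$, where $(V_1, \ldots, V_I)$ is a uniformly random permutation of $(1, \ldots, I)$. Under this representation, any joint probability of treatment indicators reduces to a counting problem about uniform random permutations. The key structural observation is the monotonicity identity: since $I_a \leq I_b$ whenever $a \leq b$, we have $\mathbf{1}(V_i \leq I_a)\mathbf{1}(V_i \leq I_b) = \mathbf{1}(V_i \leq I_a)$. Consequently, whenever a cluster label $i$ appears in multiple events $\{Z_{i a_1} = 1\}, \ldots, \{Z_{i a_m} = 1\}$, the intersection reduces to the single binding constraint $\{V_i \leq I_{\min(a_1,\ldots,a_m)}\}$.

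My approach would be to reduce every listed joint event to an intersection of the form $\{V_{i_1} \leq I_{b_1}, V_{i_2} \leq I_{b_2}, \ldots, V_{i_r} \leq I_{b_r}\}$, where $i_1, \ldots, i_r$ are the distinct clusters that appear and $b_s$ is the minimum period index attached to cluster $i_s$. After sorting so that $b_1 \leq b_2 \leq \cdots \leq b_r$, I would compute the probability by sequential selection: pick $V_{i_1}$ in $I_{b_1}$ ways, then $V_{i_2}$ in $I_{b_2}-1$ ways (valid because $V_{i_1} \leq I_{b_1} \leq I_{b_2}$ makes the excluded value fall within the admissible range), continuing through $V_{i_r}$ in $I_{b_r} - (r-1)$ ways, and divide by the $I(I-1)\cdots(I-r+1)$ ordered $r$-tuples of distinct positions. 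This yields the general formula
\begin{equation*}
P\bigl(V_{i_1}\leq I_{b_1}, \ldots, V_{i_r}\leq I_{b_r}\bigr) \;=\; \frac{I_{b_1}}{I}\cdot \frac{I_{b_2}-1}{I-1}\cdots\frac{I_{b_r}-(r-1)}{I-(r-1)}.
\end{equation*}

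To finish, I would simply instantiate this general formula for each case in the lemma: the all-distinct case has $r = 4$ with $(b_1,b_2,b_3,b_4) = (a,b,c,d)$ directly; the six three-distinct subcases all yield $r = 3$ with the binding triple determined by which two cluster labels coincide; the seven two-distinct subcases yield $r = 2$ with the binding pair given by the minimum period index of each cluster; and the one-distinct case collapses to $r = 1$. In each subcase I would verify that the binding constraints, read off by taking the minimum period appearing with each cluster, match the $I_j$ factors in the lemma's stated numerators, which is straightforward given the assumption $a \leq b \leq c \leq d$.

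The main obstacle here is bookkeeping rather than any mathematical depth. There is no probabilistic subtlety: the argument is entirely combinatorial and follows immediately once the monotonicity reduction is applied. The care required lies in cleanly identifying, for each of the fifteen listed subcases, which period constraints are binding and in what order, and then mechanically reading off the probability. As the excerpt already signals these calculations are tedious and omitted for brevity, a full write-up would consist of a brief statement of the permutation representation and the monotonicity identity, a derivation of the general sequential-selection formula above, and a table matching the reduced binding constraints to each of the lemma's listed expressions.
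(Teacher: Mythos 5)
Your approach is correct and is essentially the intended one: the paper omits the calculation as tedious, but the permutation representation $Z_{ij}=\mathbf{1}(V_i\le I_j)$ is exactly the device the appendix already sets up, and your monotonicity reduction plus sequential counting gives the general formula $\prod_{s=1}^{r}\frac{I_{b_s}-(s-1)}{I-(s-1)}$ with $b_s$ the earliest period attached to the $s$-th distinct cluster (sorted so $b_1\le\cdots\le b_r$). One small remark worth adding to a full write-up: non-negativity of each factor is automatic, since whenever some factor $I_{b_s}-(s-1)$ would be negative an earlier factor is already exactly zero (using $I_{b_1}\ge I_1\ge 1$ and the sorting), so the product correctly returns $0$ in degenerate cases.

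Be aware, however, that carrying out your reduction faithfully does not reproduce one entry of the lemma as printed. For $P(Z_{ia}=1,Z_{i'b}=1,Z_{i'c}=1,Z_{i'd}=1)$ the binding constraint on cluster $i'$ is the \emph{earliest} period $b$ (since $Z_{i'b}=1$ implies $Z_{i'c}=Z_{i'd}=1$), so your method yields $\frac{I_a}{I}\cdot\frac{I_b-1}{I-1}$, whereas the lemma states $\frac{I_a}{I}\cdot\frac{I_d-1}{I-1}$. A sanity check with $I=3$, $I_1=1$, $a=b=1$ shows the probability must be $0=\frac{I_a}{I}\cdot\frac{I_b-1}{I-1}$, not $\frac{1}{3}\cdot\frac{I_3-1}{2}$. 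The printed entry appears to be a typo carried over from the control-probability analogue (Lemma \ref{lemma: order 4 control}), where the latest period is indeed the binding one; the value your argument produces is the correct one, and all other fourteen entries agree with your formula.
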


\begin{lemma}[Order 4 control probabilities] \label{lemma: order 4 control}
    Let $1 \leq a \leq b \leq c \leq d \leq J$ and $i, i', k, k'$ are pairwise distinct.
    \begin{align*}
        &\text{All distinct:} \\ 
        &\mathbb P(Z_{ia} = 0, Z_{i'b} = 0,Z_{kc} = 0, Z_{k'd} = 0) = \frac{I-I_a-3}{I-3} \times \frac{I-I_b - 2}{I - 2} \times \frac{I - I_c - 1}{I-1} \times \frac{I-I_d}{I} \\
        &\text{3 distinct:} \\ 
        &\mathbb P(Z_{ia} = 0, Z_{ib} = 0,Z_{kc} = 0, Z_{k'd} = 0) = \frac{I-I_b-2}{I-2} \times \frac{I-I_c - 1}{I-1} \times \frac{I-I_d}{I} \\ 
        &\mathbb P(Z_{ia} = 0, Z_{i'b} = 0,Z_{i'c} = 0, Z_{k'd} = 0) = \frac{I-I_a-2}{I-2} \times \frac{I - I_c - 1}{I-1} \times \frac{I-I_d}{I} \\ 
        &\mathbb P(Z_{ia} = 0, Z_{i'b} = 0,Z_{kc} = 0, Z_{kd} = 0) = \frac{I-I_a-2}{I-2} \times \frac{I-I_b - 1}{I-1} \times \frac{I-I_d}{I} \\ 
        &\mathbb P(Z_{ia} = 0, Z_{i'b} = 0,Z_{ic} = 0, Z_{kd} = 0) = \frac{I-I_b-2}{I-2} \times \frac{I - I_c - 1}{I-1} \times \frac{I-I_d}{I} \\ 
        &\mathbb P(Z_{ia} = 0, Z_{i'b} = 0,Z_{kc} = 0, Z_{i'd} = 0) = \frac{I-I_a-2}{I-2} \times \frac{I-I_c - 1}{I-1} \times \frac{I-I_d}{I} \\ 
        &\mathbb P(Z_{ia} = 0, Z_{i'b} = 0,Z_{kc} = 0, Z_{id} = 0) = \frac{I-I_a-2}{I-2} \times \frac{I-I_c - 1}{I-1} \times \frac{I-I_d}{I} \\ 
        &\text{2 distinct} \\  
        &\mathbb P(Z_{ia} = 0, Z_{ib} = 0,Z_{ic} = 0, Z_{k'd} = 0) = \frac{I-I_c-1}{I-1} \times \frac{I - I_d}{I}  \\ 
        &\mathbb P(Z_{ia} = 0, Z_{i'b} = 0,Z_{i'c} = 0, Z_{i'd} = 0) = \frac{I-I_a-1}{I-1} \times \frac{I - I_d}{I} \\  
        &\mathbb P(Z_{ia} = 0, Z_{ib} = 0,Z_{kc} = 0, Z_{id} = 0) = \frac{I-I_c-1}{I-1} \times \frac{I - I_d}{I}  \\ 
        &\mathbb P(Z_{ia} = 0, Z_{i'b} = 0,Z_{ic} = 0, Z_{id} = 0) = \frac{I-I_b-1}{I-1} \times \frac{I - I_d}{I} \\ 
        &\mathbb P(Z_{ia} = 0, Z_{ib} = 0,Z_{kc} = 0, Z_{kd} = 0) = \frac{I-I_b-1}{I-1} \times \frac{I - I_d}{I} \\ 
        &\mathbb P(Z_{ia} = 0, Z_{i'b} = 0,Z_{ic} = 0, Z_{i'd} = 0) = \frac{I-I_c-1}{I-1} \times \frac{I - I_d}{I} \\ 
        &\mathbb P(Z_{ia} = 0, Z_{i'b} = 0,Z_{i'c} = 0, Z_{id} = 0) = \frac{I-I_c-1}{I-1} \times \frac{I - I_d}{I} \\ 
        &\text{1 distinct} \\ 
        &\mathbb P(Z_{ia} = 0, Z_{ib} = 0,Z_{ic} = 0, Z_{id} = 0) = \frac{I-I_d}{I} 
    \end{align*} 
\end{lemma}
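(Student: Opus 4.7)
The plan is to exploit the standard equivalent representation of the stepped-wedge randomization as a uniform random permutation of the clusters. Specifically, let $(V_1, \ldots, V_I)$ be a uniform random permutation of $\{1, \ldots, I\}$ and assign cluster $i$ the adoption time $A_i = \min\{j : V_i \leq I_j\}$. Then $Z_{ij} = \mathbb{I}\{V_i \leq I_j\}$, so the event $\{Z_{ij} = 0\}$ is equivalent to $\{V_i > I_j\}$, i.e., $V_i$ lies in a specific subset of positions of size $I - I_j$. Assumption \ref{assumption: sw randomization} guarantees that all $I!$ permutations are equally likely, so all positions for each $V_i$ are exchangeable and the usual hypergeometric-style calculations apply.

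When two or more of the four events $\{Z_{ia}=0\}, \{Z_{i'b}=0\}, \{Z_{kc}=0\}, \{Z_{k'd}=0\}$ involve the same cluster, their intersection collapses. For example, if cluster $i$ appears with periods $j_1 \leq j_2$, then $\{V_i > I_{j_1}\} \cap \{V_i > I_{j_2}\} = \{V_i > I_{j_2}\}$. Combined with the assumption $a \leq b \leq c \leq d$ (hence $I_a \leq I_b \leq I_c \leq I_d$), each enumerated sub-case reduces to a system of constraints of the form $\{V_{i_1} > n_1, \ldots, V_{i_m} > n_m\}$ on $m \in \{1,2,3,4\}$ distinct clusters, with thresholds $n_1 \leq \cdots \leq n_m$ drawn from $\{I_a, I_b, I_c, I_d\}$.

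To compute $P(V_{i_1} > n_1, \ldots, V_{i_m} > n_m)$, the approach is sequential conditioning starting from the most restrictive constraint. First, $P(V_{i_m} > n_m) = (I - n_m)/I$. Next, since $n_{m-1} \leq n_m$ forces $V_{i_m}$ to also occupy one of the $I - n_{m-1}$ positions exceeding $n_{m-1}$, by exchangeability of the remaining $I-1$ positions we get $P(V_{i_{m-1}} > n_{m-1} \mid V_{i_m} > n_m) = (I - n_{m-1} - 1)/(I - 1)$. Iterating yields the closed form
\begin{equation*}
P(V_{i_1} > n_1, \ldots, V_{i_m} > n_m) \;=\; \prod_{\ell=0}^{m-1} \frac{I - n_{m-\ell} - \ell}{I - \ell}.
\end{equation*}
Each identity in the lemma is an instance of this formula for the appropriate collapsed thresholds: the ``all distinct'' case is $m=4$ with thresholds $(I_a, I_b, I_c, I_d)$; the ``3 distinct'' case with $i=i'$ is $m=3$ with $(I_b, I_c, I_d)$; the ``2 distinct'' case with $i=i'=k$ is $m=2$ with $(I_c, I_d)$; and so on.

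The main obstacle is purely bookkeeping: the statement enumerates many sub-cases distinguished by which of $i, i', k, k'$ coincide, and each must be mapped to the correct effective constraint set before applying the formula. The probabilistic content is identical across cases, so the cleanest exposition is to first prove the general identity above and then verify each line of the lemma as a direct substitution. The completely analogous argument (with $V_i \leq I_j$ replacing $V_i > I_j$, and reversing the direction of the sequential conditioning) recovers Lemma \ref{lemma: order 4 treatment}.
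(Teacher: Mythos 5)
Your proposal is correct and is essentially the computation the paper omits: the paper itself introduces the uniform-permutation representation $Z_{ij}=\mathbb{I}\{V_i\le I_j\}$ in Appendix B.2, and each line of the lemma is exactly the sequential-conditioning/hypergeometric calculation you describe, with repeated clusters collapsed to their largest (latest-period) threshold. Your master formula $P(V_{i_1}>n_1,\ldots,V_{i_m}>n_m)=\prod_{\ell=0}^{m-1}\frac{I-n_{m-\ell}-\ell}{I-\ell}$ is a clean way to organize the bookkeeping, and it reproduces every line of the lemma except one: for $P(Z_{ia}=0,Z_{i'b}=0,Z_{kc}=0,Z_{id}=0)$ the collapsed thresholds are $(I_b,I_c,I_d)$ (cluster $i$ appears at periods $a$ and $d$, so only $I_d$ binds for it, while $i'$ contributes $I_b$), giving $\frac{I-I_b-2}{I-2}\times\frac{I-I_c-1}{I-1}\times\frac{I-I_d}{I}$ rather than the $\frac{I-I_a-2}{I-2}\times\frac{I-I_c-1}{I-1}\times\frac{I-I_d}{I}$ printed in the lemma; a direct check (e.g.\ $I=5$, $I_a,I_b,I_c,I_d=1,2,3,4$ gives $1/60$ versus the printed $1/30$) confirms your formula and indicates a typo in that line of the statement, not a flaw in your argument.
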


\begin{lemma}[Order 3 treatment probabilities] \label{lemma: order 3 treatment}
    Let $1 \leq a \leq b \leq c \leq J$ and $i, i', k$ are pairwise distinct.
    \begin{align*}
        &\text{All distinct:} \\ 
        &\mathbb P(Z_{ia} = 1, Z_{i'b} = 1,Z_{kc} = 1) = \frac{I_a}{I} \times \frac{I_b - 1}{I - 1} \times \frac{I_c - 2}{I-2}  \\ 
        &\text{2 distinct} \\  
        &\mathbb P(Z_{ia} = 1, Z_{ib} = 1,Z_{kc} = 1) = \frac{I_a}{I} \times \frac{I_c - 1}{I - 1}  \\ 
        &\mathbb P(Z_{ia} = 1, Z_{i'b} = 1,Z_{ic} = 1) = \frac{I_a}{I} \times \frac{I_b - 1}{I - 1}  \\  
        &\mathbb P(Z_{ia} = 1, Z_{i'b} = 1,Z_{i'c} = 1) = \frac{I_a}{I} \times \frac{I_b - 1}{I - 1}  \\ 
        &\text{1 distinct} \\ 
        &\mathbb P(Z_{ia} = 1, Z_{ib} = 1,Z_{ic} = 1) = \frac{I_a}{I} 
    \end{align*}  
\end{lemma}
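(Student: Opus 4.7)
The plan is to leverage the permutation representation of the stepped-wedge randomization introduced earlier in this appendix: let $(V_1, \ldots, V_I)$ be a uniformly random permutation of $(1, \ldots, I)$, so that by Assumption \ref{assumption: sw randomization} we have $Z_{ij} = 1$ if and only if $V_i \leq I_j$. Under this representation, each joint probability in the lemma reduces to counting the number of permutations of $(1, \ldots, I)$ that satisfy certain upper-bound constraints on a subset of coordinates, divided by $I!$.

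For the all-distinct case with $i, i', k$ pairwise distinct and $1 \leq a \leq b \leq c$, note that $I_a \leq I_b \leq I_c$. I would choose $V_i \in \{1, \ldots, I_a\}$ in $I_a$ ways; then $V_{i'} \in \{1, \ldots, I_b\} \setminus \{V_i\}$ in $I_b - 1$ ways, since $V_i \in \{1, \ldots, I_b\}$ holds automatically; then $V_k \in \{1, \ldots, I_c\} \setminus \{V_i, V_{i'}\}$ in $I_c - 2$ ways; and the remaining $I - 3$ coordinates can be placed in $(I - 3)!$ orderings. Dividing the product $I_a(I_b - 1)(I_c - 2)(I - 3)!$ by $I!$ yields the stated $\frac{I_a}{I}\cdot \frac{I_b - 1}{I - 1}\cdot \frac{I_c - 2}{I - 2}$.

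For the two-distinct sub-cases, the key simplification is that whenever a single cluster appears at two time indices $a_1 \leq a_2$, the event $\{V_\cdot \leq I_{a_2}\}$ is implied by $\{V_\cdot \leq I_{a_1}\}$ and so can be discarded, reducing the problem to a two-cluster count handled by the same argument with one fewer factor. Case-by-case: if cluster $i$ is repeated across periods $a$ and $b$, the binding bounds are $I_a$ on $i$ and $I_c$ on $k$, giving $\frac{I_a}{I}\cdot \frac{I_c-1}{I-1}$; if $i$ is repeated across $a$ and $c$, the binding bounds are $I_a$ on $i$ and $I_b$ on $i'$, giving $\frac{I_a}{I}\cdot \frac{I_b-1}{I-1}$; if $i'$ is repeated across $b$ and $c$, the binding bounds are $I_a$ on $i$ and $I_b$ on $i'$, again giving $\frac{I_a}{I}\cdot \frac{I_b-1}{I-1}$. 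The one-distinct case is immediate: all three constraints collapse to $V_i \leq I_a$, so the probability is $I_a/I$.

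The proof presents no conceptual obstacle and amounts to disciplined case bookkeeping; indeed, the same style of enumeration already underlies Lemmas \ref{lemma: order 2 joint}, \ref{lemma: order 4 treatment}, and \ref{lemma: order 4 control}. The only point requiring care is correctly identifying, in each sub-case, which upper bound remains binding after deduplication; organizing the enumeration by the multiset of cluster labels $\{i, i', k\}$ paired with the sorted time indices $(a, b, c)$ makes this verification entirely mechanical.
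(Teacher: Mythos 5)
Your proof is correct: the permutation representation $Z_{ij}=\mathbb{I}\{V_i\leq I_j\}$ reduces each case to a direct count, and your handling of the redundant constraints in the repeated-cluster sub-cases (using $I_a\leq I_b\leq I_c$) matches all five stated formulas. The paper explicitly omits the proof of this lemma as routine, and your argument is exactly the intended sequential-counting computation already implicit in the appendix's setup, so there is nothing further to reconcile.
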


\begin{lemma}[Order 3 control probabilities] \label{lemma: order 3 control}
    Let $1 \leq a \leq b \leq c \leq J$ and $i, i', k$ are pairwise distinct.
    \begin{align*}
        &\text{All distinct:} \\ 
        &\mathbb P(Z_{ia} = 0, Z_{i'b} = 0,Z_{kc} = 0) = \frac{I-I_a-2}{I-2} \times \frac{I-I_b - 1}{I - 1} \times \frac{I - I_c }{I}\\ 
        &\text{2 distinct} \\  
        &\mathbb P(Z_{ia} = 0, Z_{ib} = 0,Z_{kc} = 0) = \frac{I-I_b-1}{I-1} \times \frac{I - I_c}{I}  \\ 
        &\mathbb P(Z_{ia} = 0, Z_{i'b} = 0,Z_{ic} = 0) = \frac{I-I_b-1}{I-1} \times \frac{I - I_c}{I} \\ 
        &\mathbb P(Z_{ia} = 0, Z_{i'b} = 0,Z_{i'c} = 0) = \frac{I-I_a-1}{I-1} \times \frac{I - I_c}{I} \\ 
        &\text{1 distinct} \\ 
        &\mathbb P(Z_{ia} = 0, Z_{ib} = 0,Z_{ic} = 0) = \frac{I-I_c}{I} 
    \end{align*}
\end{lemma}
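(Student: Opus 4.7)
The plan is to work with the equivalent representation of the stepped-wedge randomization as a uniformly random permutation $(V_1, \ldots, V_I)$ of $\{1, \ldots, I\}$ (introduced in the discussion preceding Assumption \ref{assumption: ht clt regularity}), so that $Z_{ij} = 0$ if and only if $V_i > I_j$. Each joint probability in the lemma then reduces to counting permutations that satisfy the relevant position constraints and dividing by $I!$.

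The key simplification I would exploit throughout is that $a \leq b \leq c$ implies $I_a \leq I_b \leq I_c$, so whenever two period constraints apply to the same cluster index, only the one with the larger period index is binding; for instance $\{V_i > I_a\} \cap \{V_i > I_b\} = \{V_i > I_b\}$. This collapses the \emph{1 distinct} and \emph{2 distinct} subcases to 1-cluster and 2-cluster joint-probability computations, both settled by the same greedy counting argument as the \emph{all distinct} case. For the all-distinct case I would count permutations with $V_i > I_a$, $V_{i'} > I_b$, $V_k > I_c$ by placing clusters in order of decreasing restrictiveness: first $k$ in one of $I - I_c$ positions in $\{I_c+1, \ldots, I\}$; then $i'$ in one of $I - I_b - 1$ positions in $\{I_b+1, \ldots, I\}$ (subtracting one for the slot used by $k$, which must lie in $\{I_b+1,\ldots,I\}$ by the nesting); then $i$ in one of $I - I_a - 2$ positions in $\{I_a+1, \ldots, I\}$; and finally the remaining $I - 3$ clusters into the remaining $I - 3$ slots in $(I-3)!$ ways. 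Dividing by $I!$ yields exactly $(I - I_a - 2)(I - I_b - 1)(I - I_c)/(I(I-1)(I-2))$, matching the stated formula. The 2-distinct formulas follow by the same placement scheme after collapsing the redundant constraint to the larger-index one (giving $(I - I_{\max})(I - I_{\text{other}} - 1)/(I(I-1))$ with $I_{\max}$ and $I_{\text{other}}$ read off from which cluster index is doubled), and the 1-distinct formula is immediate from $P(V_i > I_c) = (I - I_c)/I$.

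The main obstacle, such as it is, lies in the bookkeeping of the successive ``one less'' adjustments when placing clusters: each subtraction is only correct because of the nesting $\{I_c+1,\ldots,I\} \subseteq \{I_b+1,\ldots,I\} \subseteq \{I_a+1,\ldots,I\}$, which guarantees that every previously placed cluster's slot already lies inside the admissible window for the next cluster. Once this nesting is flagged and the greedy order is fixed to decreasing restrictiveness, the remainder is mechanical verification against each of the seven lines of the statement; no probabilistic or analytic difficulty arises. I would likely also note briefly that the ordering $a \leq b \leq c$ is without loss of generality by relabeling, which keeps the case analysis tidy.
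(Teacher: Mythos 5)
Your proposal is correct: the paper omits the proof of this lemma entirely (describing the calculations as tedious), but your argument via the uniform-permutation representation $Z_{ij}=0 \iff V_i > I_j$, the nesting $I_a \le I_b \le I_c$ that collapses repeated-cluster constraints to the largest period index, and the greedy placement count is exactly the intended computation. All seven formulas check out against this argument.
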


\begin{lemma}[Order 3 mixed probabilities] \label{lemma: order 3 mixed}
    Let $1 \leq a \leq b \leq c \leq J$ and $i, i', k$ are pairwise distinct.
    \begin{align*}
        &\textbf{110) } \text{All distinct:}\\ 
        &\mathbb P(Z_{ia} = 1, Z_{i'b} = 1,Z_{kc} = 0) = \frac{I_a}{I} \times \frac{I_b - 1}{I - 1} \times \frac{I-I_c }{I-2}\\ 
        &\text{2 distinct} \\  
        &\mathbb P(Z_{ia} = 1, Z_{ib} = 1,Z_{kc} = 0) = \frac{I_a}{I} \times \frac{I - I_c}{I-1}  \\ 
        &\mathbb P(Z_{ia} = 1, Z_{i'b} = 1,Z_{ic} = 0) = 0 \\ 
        &\mathbb P(Z_{ia} = 1, Z_{i'b} = 1,Z_{i'c} = 0) = 0 \\ 
        &\text{1 distinct} \\ 
        &\mathbb P(Z_{ia} = 1, Z_{ib} = 1,Z_{ic} = 0) = 0 \\ 
        &\textbf{101) } \text{All distinct:}\\ 
        &\mathbb P(Z_{ia} = 1, Z_{i'b} = 0,Z_{kc} = 1) = \frac{I_a}{I} \times \frac{I_c - I_b}{I - 1} \times \frac{I-I_b -1}{I-2} + \frac{I_a}{I} \times \frac{I_b }{I - 1} \times \frac{I-I_b}{I-2}\\ 
        &\text{2 distinct} \\  
        &\mathbb P(Z_{ia} = 1, Z_{ib} = 0,Z_{kc} = 1) = 0  \\ 
        &\mathbb P(Z_{ia} = 1, Z_{i'b} = 0,Z_{ic} = 1) = \frac{I_a}{I} \times \frac{I-I_b}{I-1} \\ 
        &\mathbb P(Z_{ia} = 1, Z_{i'b} = 0,Z_{i'c} = 1) = \frac{I_a}{I}\times \frac{I_c-I_b}{I-1} \\ 
        &\text{1 distinct} \\ 
        &\mathbb P(Z_{ia} = 1, Z_{ib} = 0,Z_{ic} = 1) = 0 \\
        &\textbf{011) } \text{All distinct:}\\ 
        &\mathbb P(Z_{ia} = 0, Z_{i'b} = 1,Z_{kc} = 1) = \frac{I_b-I_a}{I}\left(\frac{I_b - 1}{I - 1} \times \frac{I_c-2 }{I-2}\right) + \frac{I_c-I_b}{I}\left(\frac{I_b }{I - 1} \times \frac{I_c-2 }{I-2}\right)\\ 
        &+\frac{I-I_c}{I}\left(\frac{I_b }{I - 1} \times \frac{I_c-1 }{I-2}\right)\\ 
        &\text{2 distinct} \\  
        &\mathbb P(Z_{ia} = 0, Z_{ib} = 1,Z_{kc} = 1) = \frac{I_b-I_a}{I} \times \frac{I_c - 1}{I-1}  \\
        &\mathbb P(Z_{ia} = 0, Z_{i'b} = 1,Z_{ic} = 1) = \frac{I_b-I_a}{I} \times \left(\frac{I_c-I_a - 1}{I-1}\right) + \frac{I_a}{I} \times \left(\frac{I_c-I_a}{I-1}\right) \\ 
        &\mathbb P(Z_{ia} = 0, Z_{i'b} = 1,Z_{i'c} = 1 ) = \frac{I_b-I_a}{I} \times \frac{I_b-1}{I-1} + \frac{I-I_b}{I} \times \frac{I_b}{I-1} \\ 
        &\text{1 distinct} \\ 
        &\mathbb P(Z_{ia} = 0, Z_{ib} = 1,Z_{ic} = 1) = \frac{I_b-I_a}{I} \\
        &\textbf{100) } \text{All distinct:}\\ 
        &\mathbb P(Z_{ia} = 1, Z_{i'b} = 0,Z_{kc} = 0) = \frac{I_a}{I} \times \frac{I-I_b - 1}{I - 2} \times \frac{I-I_c }{I-1}\\ 
        &\text{2 distinct} \\  
        &\mathbb P(Z_{ia} = 1, Z_{ib} = 0,Z_{kc} = 0) = 0  \\ 
        &\mathbb P(Z_{ia} = 1, Z_{i'b} = 0,Z_{ic} = 0) = 0 \\ 
        &\mathbb P(Z_{ia} = 1, Z_{i'b} = 0,Z_{i'c} = 0) = \frac{I_a}{I} \times \frac{I-I_c}{I-1} \\ 
        &\text{1 distinct} \\ 
        &\mathbb P(Z_{ia} = 1, Z_{ib} = 0,Z_{ic} = 0) = 0 \\ 
        &\textbf{010) } \text{All distinct:}\\ 
        &\mathbb P(Z_{ia} = 0, Z_{i'b} = 1,Z_{kc} = 0) = \frac{I_a}{I}\left(\frac{I-I_c}{I-1}\times\frac{I-I_a-1}{I-2}\right) + \frac{I_b-I_a}{I}\left(\frac{I-I_c}{I-1}\times\frac{I-I_a-2}{I-2}\right)\\ 
        &\text{2 distinct} \\  
        &\mathbb P(Z_{ia} = 0, Z_{ib} = 1,Z_{kc} = 0) = \frac{I_b-I_a}{I} \times \frac{I-I_c}{I-1}  \\ 
        &\mathbb P(Z_{ia} = 0, Z_{i'b} = 1,Z_{ic} = 0) = \frac{I-I_c}{I} \times \frac{I_b}{I-1} \\ 
        &\mathbb P(Z_{ia} = 0, Z_{i'b} = 1,Z_{i'c} = 0) = 0 \\ 
        &\text{1 distinct} \\ 
        &\mathbb P(Z_{ia} = 0, Z_{ib} = 1,Z_{ic} = 0) = 0 \\
        &\textbf{001) } \text{All distinct:}\\ 
        &\mathbb P(Z_{ia} = 0, Z_{i'b} = 0,Z_{kc} = 1) = \frac{I_a}{I}\left(\frac{I-I_b}{I - 1} \times \frac{I-I_a-1}{I-2}\right) + \frac{I_b-I_a}{I}\left(\frac{I-I_b }{I - 1} \times \frac{I-I_a-2 }{I-2}\right)\\ 
        &+\frac{I_c-I_b}{I}\left(\frac{I - I_b - 1}{I - 1} \times \frac{I-I_a-2}{I-2}\right)\\ 
        &\text{2 distinct} \\  
        &\mathbb P(Z_{ia} = 0, Z_{ib} = 0,Z_{kc} = 1) = \frac{I-I_c}{I} \times \frac{I_c }{I-1} + \frac{I_c-I_b}{I} \times \frac{I_c-1}{I-1} \\ 
        &\mathbb P(Z_{ia} = 0, Z_{i'b} = 0,Z_{ic} = 1) = \frac{I_b-I_a}{I} \times \left(\frac{I-I_b }{I-1}\right) + \frac{I_c-I_b}{I} \times \left(\frac{I-I_b-1 }{I-1}\right) \\ 
        &\mathbb P(Z_{ia} = 0, Z_{i'b} = 0,Z_{i'c} = 1 ) = \frac{I_c-I_b}{I} \times \frac{I-I_a-1}{I-1} \\ 
        &\text{1 distinct} \\ 
        &\mathbb P(Z_{ia} = 0, Z_{ib} = 0,Z_{ic} = 1) = \frac{I_c-I_b}{I} 
    \end{align*}
\end{lemma}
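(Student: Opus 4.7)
The plan is to leverage the equivalent permutation representation of the stepped-wedge randomization introduced earlier in the appendix: let $(V_1, \ldots, V_I)$ be a uniformly random permutation of $(1, \ldots, I)$, and observe that $Z_{ij} = \mathbf{1}\{V_i \leq I_j\}$ for every cluster $i$ and every rollout period $j$. Under this representation, each event appearing in the lemma is a conjunction of range constraints on the ranks of the (at most three) clusters involved, and the joint probability reduces to counting rank arrangements that are compatible with those constraints and dividing by $I(I-1)(I-2)$ (or fewer factors when fewer distinct ranks are involved).

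First I would record the monotonicity consequence: since $a \leq b \leq c$ forces $I_a \leq I_b \leq I_c$, we have $Z_{ia} = 1 \Rightarrow Z_{ib} = 1 \Rightarrow Z_{ic} = 1$ for any single cluster. This immediately yields every probability-zero entry of the lemma (for example $P(Z_{ia} = 1, Z_{ib} = 1, Z_{ic} = 0) = 0$ and $P(Z_{ia} = 1, Z_{ib} = 0, Z_{ic} = 1) = 0$), which disposes of all degenerate same-cluster sub-cases up front. It also collapses a multi-period event on a single cluster into a single range condition on that cluster's rank, e.g., $Z_{ia} = 0, Z_{ic} = 1$ is equivalent to $I_a < V_i \leq I_c$ with probability $(I_c - I_a)/I$.

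Next I would dispatch the nontrivial cases pattern by pattern. For each of the six treatment patterns $(110, 101, 011, 100, 010, 001)$ and each choice of how many of the three cluster indices coincide, I would express the event as a disjoint union of sub-events indexed by which of the four intervals $[1, I_a]$, $(I_a, I_b]$, $(I_b, I_c]$, $(I_c, I]$ each relevant rank falls into, and then count. Concretely, for $m$ distinct clusters whose ranks must occupy specified intervals, the probability of a single such placement is $\prod_{r=1}^{m} \ell_r/(I - r + 1)$, where $\ell_r$ is the number of admissible slots for the $r$-th rank after accounting for the earlier placements. When two distinct clusters each have a specification compatible with several disjoint intervals (as in the 011 and 001 cases with all distinct clusters, where $V_{i'}$ and $V_k$ can each land in more than one of the four intervals), I would sum over the disjoint sub-cases, carefully tracking the reduction in admissible slots arising from rank collisions within a shared interval.

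The main obstacle will not be conceptual but organizational: ensuring that every sub-case is covered without double-counting, that overlapping interval specifications are decomposed consistently, and that the arithmetic collapses to the closed forms displayed. To manage this, I would introduce uniform notation $(\alpha_1, \alpha_2, \alpha_3, \alpha_4) = (I_a, I_b - I_a, I_c - I_b, I - I_c)$ for the four interval cardinalities, write each probability as a sum of products of these $\alpha_r$'s (minus adjustments for rank collisions) divided by $I(I-1)(I-2)$, and then verify term-by-term agreement with the displayed expressions. The parallel counting arguments used for Lemma \ref{lemma: order 2 joint} and the order-3 all-treatment/all-control lemmas (Lemmas \ref{lemma: order 3 treatment} and \ref{lemma: order 3 control}) serve as a direct template, so no new techniques are required beyond careful bookkeeping of which sub-interval each rank occupies.
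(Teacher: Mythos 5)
Your approach is correct and is exactly the one the paper intends: the paper omits the computation entirely ("the calculations... are omitted for brevity"), but the permutation representation $Z_{ij}=\mathbf{1}\{V_i\le I_j\}$ together with sequential conditioning on which of the intervals $[1,I_a]$, $(I_a,I_b]$, $(I_b,I_c]$, $(I_c,I]$ each rank occupies is precisely the technique used for the analogous counts elsewhere in the appendix, and your monotonicity observation correctly disposes of all the zero entries. One remark: carrying out your plan carefully actually exposes an apparent typo in the displayed lemma — for the all-distinct $101$ case, decomposing on whether $V_k\le I_b$ or $I_b<V_k\le I_c$ gives second term $\frac{I_a}{I}\times\frac{I_b-1}{I-1}\times\frac{I-I_b}{I-2}$ (the slot count $I_b$ must be reduced by one for the collision with $V_i\le I_a\le I_b$), not $\frac{I_a}{I}\times\frac{I_b}{I-1}\times\frac{I-I_b}{I-2}$ as printed, so your bookkeeping discipline is not merely organizational but necessary.
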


\begin{lemma}[Order 4 mixed probabilities] \label{lemma: order 4 mixed}
    Let $1 \leq a \leq b \leq c \leq d \leq J$ and $i, i', k, k'$ are pairwise distinct.
    \begin{align*}
        &\textbf{1100) } \text{All distinct:}\\ 
        &\mathbb P(Z_{ia} = 1, Z_{i'b} = 1, Z_{kc} = 0,  Z_{k'd} = 0) = \frac{I_a}{I} \times \frac{I_b-1}{I-1} \times \frac{I-I_d}{I-2} \times \frac{I-I_c - 1}{I-3} \\
        &\text{3 distinct:} \\  
        &\mathbb P(Z_{ia} = 1, Z_{i'b} = 1, Z_{kc} = 0,  Z_{kd} = 0) = \frac{I_a}{I} \times \frac{I_b-1}{I-1} \times \frac{I-I_d}{I-2}\\ 
        &\mathbb P(Z_{ia} = 1, Z_{i'b} = 1, Z_{i'c} = 0,  Z_{kd} = 0) = 0\\ 
        &\mathbb P(Z_{ia} = 1, Z_{ib} = 1, Z_{i'c} = 0,  Z_{kd} = 0) = \frac{I_a}{I} \times \frac{I-I_d}{I-1} \times \frac{I-I_c - 1}{I-2}\\ 
        &\mathbb P(Z_{ia} = 1, Z_{i'b} = 1, Z_{ic} = 0,  Z_{kd} = 0) = 0\\ 
        &\mathbb P(Z_{ia} = 1, Z_{i'b} = 1, Z_{kc} = 0,  Z_{id} = 0) = 0\\ 
        &\mathbb P(Z_{ia} = 1, Z_{i'b} = 1, Z_{kc} = 0,  Z_{i'd} = 0) = 0\\  
        &\text{2 distinct:} \\ 
        &\mathbb P(Z_{ia} = 1, Z_{ib} = 1, Z_{ic} = 0,  Z_{i'd} = 0) = 0 \\ 
        &\mathbb P(Z_{ia} = 1, Z_{ib} = 1, Z_{i'c} = 0,  Z_{id} = 0) = 0 \\ 
        &\mathbb P(Z_{ia} = 1, Z_{i'b} = 1, Z_{ic} = 0,  Z_{id} = 0) = 0 \\ 
        &\mathbb P(Z_{ia} = 1, Z_{i'b} = 1, Z_{i'c} = 0,  Z_{i'd} = 0) = 0 \\ 
        &\mathbb P(Z_{ia} = 1, Z_{ib} = 1, Z_{i'c} = 0,  Z_{i'd} = 0) = \frac{I_a}{I} \times \frac{I-I_d}{I-1} \\ 
        &\mathbb P(Z_{ia} = 1, Z_{i'b} = 1, Z_{i'c} = 0,  Z_{id} = 0) = 0 \\ 
        &\mathbb P(Z_{ia} = 1, Z_{i'b} = 1, Z_{ic} = 0,  Z_{i'd} = 0) = 0\\  
        &\text{1 distinct:} \\ 
        &\mathbb P(Z_{ia} = 1, Z_{ib} = 1, Z_{ic} = 0,  Z_{id} = 0) = 0. \\ 
        &\textbf{1010) } \text{All distinct:}\\  
        &\mathbb P(Z_{ia} = 1, Z_{i'b} = 0, Z_{kc} = 1,  Z_{k'd} = 0) = \frac{I_a}{I} \times \frac{I - I_d}{I-1}\\ 
        &\times\left(\frac{I-I_d-1}{I-2} \times \frac{I_c - 1}{I-3} + \frac{I_d-I_c}{I-2} \times \frac{I_c - 1}{I-3} + \frac{I_c-I_b}{I-2} \times \frac{I_c - 2}{I-3}\right)\\ 
        &\text{3 distinct:} \\  
        &\mathbb P(Z_{ia} = 1, Z_{i'b} = 0, Z_{kc} = 1,  Z_{kd} = 0) = 0\\ 
        &\mathbb P(Z_{ia} = 1, Z_{i'b} = 0, Z_{i'c} = 1,  Z_{kd} = 0) = \frac{I_a}{I} \times \frac{I_c-I_b}{I-1} \times \frac{I-I_d}{I-2}
        \\ 
        &\mathbb P(Z_{ia} = 1, Z_{ib} = 0, Z_{i'c} = 1,  Z_{kd} = 0) = 0\\ 
        &\mathbb P(Z_{ia} = 1, Z_{i'b} = 0, Z_{ic} = 1,  Z_{kd} = 0) = \frac{I_a}{I} \times \frac{I-I_d}{I-1} \times \frac{I-I_b-1}{I-2}\\ 
        &\mathbb P(Z_{ia} = 1, Z_{i'b} = 0, Z_{kc} = 1,  Z_{id} = 0) = 0\\ 
        &\mathbb P(Z_{ia} = 1, Z_{i'b} = 0, Z_{kc} = 1,  Z_{i'd} = 0) = \frac{I_a}{I} \times \frac{I_c-1}{I-1} \times \frac{I-I_d}{I-2}\\  
        &\text{2 distinct:} \\ 
        &\mathbb P(Z_{ia} = 1, Z_{ib} = 0, Z_{ic} = 1,  Z_{i'd} = 0) = 0 \\ 
        &\mathbb P(Z_{ia} = 1, Z_{ib} = 0, Z_{i'c} = 1,  Z_{id} = 0) = 0 \\ 
        &\mathbb P(Z_{ia} = 1, Z_{i'b} = 0, Z_{ic} = 1,  Z_{id} = 0) = 0\\ 
        &\mathbb P(Z_{ia} = 1, Z_{i'b} = 0, Z_{i'c} = 1,  Z_{i'd} = 0) = 0 \\ 
        &\mathbb P(Z_{ia} = 1, Z_{ib} = 0, Z_{i'c} = 1,  Z_{i'd} = 0) = 0 \\ 
        &\mathbb P(Z_{ia} = 1, Z_{i'b} = 0, Z_{i'c} = 1,  Z_{id} = 0) = 0 \\ 
        &\mathbb P(Z_{ia} = 1, Z_{i'b} = 0, Z_{ic} = 1,  Z_{i'd} = 0) = \frac{I_a}{I}\times\frac{I-I_d}{I-1}\\  &\text{1 distinct:} \\ 
        &\mathbb P(Z_{ia} = 1, Z_{ib} = 0, Z_{ic} = 1,  Z_{id} = 0) = 0. \\
        &\textbf{1001) } \text{All distinct:} \\ 
        &\mathbb P(Z_{ia} = 1, Z_{i'b} = 0, Z_{kc} = 0,  Z_{k'd} = 1) \\ 
        &= \frac{I_a}{I} \times \left(\frac{I - I_d}{I-1} \times \frac{I - I_d - 1}{I-2} \times \frac{I_d - 1}{I-3} + \frac{I_d-I_c}{I-1} \times \frac{I_d - I_b - 1}{I-2} \times \frac{I_d - 3}{I-3}\right)\\ 
        &\text{3 distinct:} \\  
        &\mathbb P(Z_{ia} = 1, Z_{i'b} = 0, Z_{kc} = 0,  Z_{kd} = 1) = \frac{I_a}{I} \times \frac{I_d-I_c}{I-1} \times \frac{I-I_b-1}{I-2}\\ 
        &\mathbb P(Z_{ia} = 1, Z_{i'b} = 0, Z_{i'c} = 0,  Z_{kd} = 1) = \frac{I_a}{I} \times \left(\frac{I-I_d}{I-1} \times \frac{I_d-1}{I-2} + \frac{I_d-I_c}{I-1} \times \frac{I_d-2}{I-2}\right)
        \\ 
        &\mathbb P(Z_{ia} = 1, Z_{ib} = 0, Z_{i'c} = 0,  Z_{kd} = 1) = 0\\ 
        &\mathbb P(Z_{ia} = 1, Z_{i'b} = 0, Z_{ic} = 0,  Z_{kd} = 1) = 0\\ 
        &\mathbb P(Z_{ia} = 1, Z_{i'b} = 0, Z_{kc} = 0,  Z_{id} = 1) = \frac{I_a}{I} \times \frac{I-I_c}{I-1} \times \frac{I-I_b-1}{I-2}\\ 
        &\mathbb P(Z_{ia} = 1, Z_{i'b} = 0, Z_{kc} = 0,  Z_{i'd} = 1) = \frac{I_a}{I} \times \left(\frac{I-I_d}{I-1} \times \frac{I_d-I_b}{I-2} + \frac{I_d-I_c}{I-1} \times \frac{I_d-I_b - 1}{I-2}\right)\\  
        &\text{2 distinct:} \\ 
        &\mathbb P(Z_{ia} = 1, Z_{ib} = 0, Z_{ic} = 0,  Z_{i'd} = 1) = 0 \\ 
        &\mathbb P(Z_{ia} = 1, Z_{ib} = 0, Z_{i'c} = 0,  Z_{id} = 1) = 0 \\ 
        &\mathbb P(Z_{ia} = 1, Z_{i'b} = 0, Z_{ic} = 0,  Z_{id} = 1) = 0\\ 
        &\mathbb P(Z_{ia} = 1, Z_{i'b} = 0, Z_{i'c} = 0,  Z_{i'd} = 1) = \frac{I_a}{I} \times \frac{I_d-I_c}{I-1}\\ 
        &\mathbb P(Z_{ia} = 1, Z_{ib} = 0, Z_{i'c} = 0,  Z_{i'd} = 1) = 0 \\ 
        &\mathbb P(Z_{ia} = 1, Z_{i'b} = 0, Z_{i'c} = 0,  Z_{id} = 1) = \frac{I_a}{I} \times \frac{I-I_c}{I-1} \\ 
        &\mathbb P(Z_{ia} = 1, Z_{i'b} = 0, Z_{ic} = 0,  Z_{i'd} = 1) = 0\\  
        &\text{1 distinct:} \\ 
        &\mathbb P(Z_{ia} = 1, Z_{ib} = 0, Z_{ic} = 0,  Z_{id} = 1) = 0.\\ 
        &\textbf{0101) } \text{All distinct:}\\
        &\mathbb P(Z_{ia} = 0, Z_{i'b} = 1, Z_{kc} = 0,  Z_{k'd} = 1) = \frac{I_b-I_a}{I} \times \frac{I_b-1}{I-1} \times \left(\frac{I-I_d}{I-2}\times \frac{I_d-2}{I-3} + \frac{I_d-I_c}{I-2}\times\frac{I_d-3}{I-3}\right) \\ 
        &+ \frac{I_c-I_b}{I} \times \frac{I_b}{I-1} \times \left(\frac{I-I_d}{I-2}\times \frac{I_d-2}{I-3} + \frac{I_d-I_c}{I-2}\times\frac{I_d-3}{I-3}\right) \\
        &+ \frac{I_d-I_c}{I} \times \frac{I_b}{I-1} \times \left(\frac{I-I_d}{I-2}\times \frac{I_d-2}{I-3} + \frac{I_d-I_c-1}{I-2}\times\frac{I_d-3}{I-3}\right) \\ 
        &+ \frac{I-I_d}{I} \times \frac{I_b}{I-1} \times \left(\frac{I-I_d-1}{I-2}\times \frac{I_d-1}{I-3} + \frac{I_d-I_c}{I-2}\times\frac{I_d-2}{I-3}\right) \\ 
        &\text{3 distinct:} \\  
        &\mathbb P(Z_{ia} = 0, Z_{i'b} = 1, Z_{kc} = 0,  Z_{kd} = 1) =  \frac{I_d-I_c}{I} \times\left(\frac{I_a}{I-1} \times \frac{I-I_a-1}{I-2} + \frac{I_b-I_a}{I-1} \times \frac{I-I_a-2}{I-2}\right)\\ 
        &\mathbb P(Z_{ia} = 0, Z_{i'b} = 1, Z_{i'c} = 0,  Z_{kd} = 1) = 0
        \\ 
        &\mathbb P(Z_{ia} = 0, Z_{ib} = 1, Z_{i'c} = 0,  Z_{kd} = 1) = \frac{I_b-I_a}{I} \times \left(\frac{I-I_d}{I-1} \times \frac{I_d-1}{I-2} + \frac{I_d-I_c}{I-1} \times \frac{I_d-2}{I-2}\right)\\ 
        &\mathbb P(Z_{ia} = 0, Z_{i'b} = 1, Z_{ic} = 0,  Z_{kd} = 1) = \frac{I_b}{I} \times \left(\frac{I-I_d}{I-1} \times \frac{I_d-1}{I-2} + \frac{I_d-I_c}{I-1} \times \frac{I_d-2}{I-2}\right)\\ 
        &\mathbb P(Z_{ia} = 0, Z_{i'b} = 1, Z_{kc} = 0,  Z_{id} = 1) = \frac{I_b-I_a}{I} \times \frac{I_b-1}{I-1} \times \frac{I-I_c}{I-2} + \frac{I_c-I_b}{I} \times \frac{I_b}{I-1} \times \frac{I-I_c}{I-2} \\ 
        &+ \frac{I_d-I_c}{I} \times \frac{I_b}{I-1} \times \frac{I-I_c-1}{I-2}\\ 
        &\mathbb P(Z_{ia} = 0, Z_{i'b} = 1, Z_{kc} = 0,  Z_{i'd} = 1) = \frac{I_a}{I} \times \frac{I-I_c}{I-1} \times \frac{I-I_a-1}{I-2} + \frac{I_b-I_a}{I} \times \frac{I-I_c}{I-1} \times \frac{I-I_a-2}{I-2}\\  &\text{2 distinct:} \\ 
        &\mathbb P(Z_{ia} = 0, Z_{ib} = 1, Z_{ic} = 0,  Z_{i'd} = 1) = 0 \\ 
        &\mathbb P(Z_{ia} = 0, Z_{ib} = 1, Z_{i'c} = 0,  Z_{id} = 1) = \frac{I_b-I_a}{I} \times \frac{I-I_c}{I-1} \\ 
        &\mathbb P(Z_{ia} = 0, Z_{i'b} = 1, Z_{ic} = 0,  Z_{id} = 1) = \frac{I_b}{I}\times\frac{I_d-I_c}{I-1} \\ 
        &\mathbb P(Z_{ia} = 0, Z_{i'b} = 1, Z_{i'c} = 0,  Z_{i'd} = 1) =0\\ 
        &\mathbb P(Z_{ia} = 0, Z_{ib} = 1, Z_{i'c} = 0,  Z_{i'd} = 1) = \frac{I_b-I_a}{I} \times \frac{I_d-I_c}{I-1} \\ 
        &\mathbb P(Z_{ia} = 0, Z_{i'b} = 1, Z_{i'c} = 0,  Z_{id} = 1) = 0 \\ 
        &\mathbb P(Z_{ia} = 0, Z_{i'b} = 1, Z_{ic} = 0,  Z_{i'd} = 1) = \frac{I_b}{I} \times \frac{I-I_c}{I-1}\\  &\text{1 distinct:} \\ 
        &\mathbb P(Z_{ia} = 0, Z_{ib} = 1, Z_{ic} = 0,  Z_{id} = 1) = 0.\\ 
        &\textbf{0110) } \text{All distinct:}\\ 
        &\mathbb P(Z_{ia} = 0, Z_{i'b} = 1, Z_{kc} = 1,  Z_{k'd} = 0) \\ 
        &= \frac{I-I_d}{I}\left(\frac{I_b-I_a}{I-1} \times \frac{I_b-1}{I-2} \times \frac{I_c-2}{I-2}+ \frac{I_c-I_b}{I-1} \times \frac{I_b}{I-2} \times \frac{I_c-2}{I-2} + \frac{I-I_c - 1}{I-1} \times \frac{I_b}{I-2} \times \frac{I_c-1}{I-2}\right) \\ 
        &\text{3 distinct:} \\  
        &\mathbb P(Z_{ia} = 0, Z_{i'b} = 1, Z_{kc} = 1,  Z_{kd} = 0) =  0\\ 
        &\mathbb P(Z_{ia} = 0, Z_{i'b} = 1, Z_{i'c} = 1,  Z_{kd} = 0) = \frac{I-I_d}{I} \times \left(\frac{I_b-I_a}{I-1} \times \frac{I_b-1}{I-2} + \frac{I-I_b-1}{I-1} \times \frac{I_b}{I-2}\right)
        \\ 
        &\mathbb P(Z_{ia} = 0, Z_{ib} = 1, Z_{i'c} = 1,  Z_{kd} = 0) = \frac{I_b-I_a}{I} \times \frac{I_c-1}{I-1} \times \frac{I-I_d}{I-2} \\ 
        &\mathbb P(Z_{ia} = 0, Z_{i'b} = 1, Z_{ic} = 1,  Z_{kd} = 0) = \frac{I_b-I_a}{I} \times \frac{I_b-1}{I-1} \times \frac{I-I_d}{I-2} + \frac{I_c-I_b}{I} \times \frac{I_b}{I-1} \times \frac{I-I_d}{I-2}\\ 
        &\mathbb P(Z_{ia} = 0, Z_{i'b} = 1, Z_{kc} = 1,  Z_{id} = 0) = \frac{I_b}{I} \times \frac{I_c-1}{I-1} \times \frac{I-I_d}{I-2} \\ 
        &\mathbb P(Z_{ia} = 0, Z_{i'b} = 1, Z_{kc} = 1,  Z_{i'd} = 0) = 0\\  
        &\text{2 distinct:} \\ 
        &\mathbb P(Z_{ia} = 0, Z_{ib} = 1, Z_{ic} = 1,  Z_{i'd} = 0) = \frac{I_b-I_a}{I} \times \frac{I-I_d}{I-1} \\ 
        &\mathbb P(Z_{ia} = 0, Z_{ib} = 1, Z_{i'c} = 1,  Z_{id} = 0) = 0 \\ 
        &\mathbb P(Z_{ia} = 0, Z_{i'b} = 1, Z_{ic} = 1,  Z_{id} = 0) =0 \\ 
        &\mathbb P(Z_{ia} = 0, Z_{i'b} = 1, Z_{i'c} = 1,  Z_{i'd} = 0) = 0 \\ 
        &\mathbb P(Z_{ia} = 0, Z_{ib} = 1, Z_{i'c} = 1,  Z_{i'd} = 0) = 0 \\ 
        &\mathbb P(Z_{ia} = 0, Z_{i'b} = 1, Z_{i'c} = 1,  Z_{id} = 0) = \frac{I_b}{I} \times \frac{I-I_d}{I-1} \\ 
        &\mathbb P(Z_{ia} = 0, Z_{i'b} = 1, Z_{ic} = 1,  Z_{i'd} = 0) = 0\\  
        &\text{1 distinct:} \\ 
        &\mathbb P(Z_{ia} = 0, Z_{ib} = 1, Z_{ic} = 1,  Z_{id} = 0) = 0.\\ 
        &\textbf{0011) } \text{All distinct:}\\ 
        &\mathbb P(Z_{ia} = 0, Z_{i'b} = 0, Z_{kc} = 1,  Z_{k'd} = 1) = \\ 
        &\frac{I_c-I_b}{I} \times \left(\frac{I_c-I_a-1}{I-1} \times \frac{I_c-2}{I-2} \times \frac{I_d-3}{I-3} + \frac{I_d-I_c}{I-1} \times \frac{I_c-1}{I-2} \times \frac{I_d-3}{I-3} + \frac{I-I_d}{I-1} \times \frac{I_c-1}{I-2} \times \frac{I_d-2}{I-3}\right) \\ 
        &+\frac{I_d-I_c}{I} \times \left(\frac{I_c-I_a}{I-1} \times \frac{I_c-2}{I-2} \times \frac{I_d-3}{I-3} + \frac{I_d-I_c-1}{I-1} \times \frac{I_c}{I-2} \times \frac{I_d-3}{I-3} + \frac{I-I_d}{I-1} \times \frac{I_c}{I-2} \times \frac{I_d-2}{I-3}\right) \\ 
        &+\frac{I-I_d}{I} \times \left(\frac{I_c-I_a}{I-1} \times \frac{I_c-1}{I-2} \times \frac{I_d-2}{I-3} + \frac{I_d-I_c}{I-1} \times \frac{I_c}{I-2} \times \frac{I_d-2}{I-3} + \frac{I-I_d -1}{I-1} \times \frac{I_c}{I-2} \times \frac{I_d-1}{I-3}\right) \\ 
        &\text{3 distinct:} \\  
        &\mathbb P(Z_{ia} = 0, Z_{i'b} = 0, Z_{kc} = 1,  Z_{kd} = 1) =  \frac{I_a}{I} \times \frac{I-I_b}{I - 1} \times \frac{I-I_a - 1}{I-2} + \frac{I_b-I_a}{I} \times \frac{I-I_b}{I - 1} \times \frac{I-I_a - 2}{I-2} \\ 
        &+ \frac{I_c-I_b}{I} \times \frac{I-I_b-1}{I - 1} \times \frac{I-I_a - 2}{I-2}\\ 
        &\mathbb P(Z_{ia} = 0, Z_{i'b} = 0, Z_{i'c} = 1,  Z_{kd} = 1) = \frac{I-I_d}{I} \times \frac{I_c-I_b}{I - 1} \times \frac{I_d - 1}{I-2} + \frac{I_d-I_c}{I} \times \frac{I_c-I_b}{I - 1} \times \frac{I_d - 2}{I-2} \\ 
        &+ \frac{I_c-I_b}{I} \times \frac{I_c-I_b-1}{I - 1} \times \frac{I_d - 2}{I-2} + \frac{I_b-I_a}{I} \times \frac{I_c-I_b}{I - 1} \times \frac{I_d - 2}{I-2}
        \\ 
        &\mathbb P(Z_{ia} = 0, Z_{ib} = 0, Z_{i'c} = 1,  Z_{kd} = 1) = \frac{I-I_d}{I} \times \frac{I_c}{I - 1} \times \frac{I_d-1}{I-2} \\ 
        &+ \frac{I_d-I_c}{I} \times \frac{I_c}{I - 1} \times \frac{I_d-2}{I-2} \times \frac{I_c - I_a}{I} \times \frac{I_c - 1}{I - 1} \times \frac{I_d-2}{I-2}\\ 
        &\mathbb P(Z_{ia} = 0, Z_{i'b} = 0, Z_{ic} = 1,  Z_{kd} = 1) = \frac{I_b-I_a}{I} \times \left(\frac{I_d- I_b }{I - 1} \times \frac{I_d-2}{I-2} + \frac{I- I_d}{I - 1} \times \frac{I_d-1}{I-2}\right) \\ 
        &+ \frac{I_c-I_b}{I} \times \left(\frac{I_d- I_b - 1 }{I - 1} \times \frac{I_d-2}{I-2} + \frac{I- I_d}{I - 1} \times \frac{I_d-1}{I-2}\right)\\ 
        &\mathbb P(Z_{ia} = 0, Z_{i'b} = 0, Z_{kc} = 1,  Z_{id} = 1) = \frac{I_b-I_a}{I} \times \left(\frac{I_c-I_b}{I-1} \times \frac{I_c-2}{I-2} + \frac{I-I_c}{I-1} \times \frac{I_c-1}{I-2} \right) \\ 
        &+ \frac{I_c-I_b}{I} \times \left(\frac{I_c-I_b-1}{I-1} \times \frac{I_c-2}{I-2} + \frac{I-I_c-1}{I-1} \times \frac{I_c-1}{I-2} \right) \\
        &+ \frac{I_d-I_c}{I} \times \left(\frac{I_c-I_b}{I-1} \times \frac{I_c-1}{I-2} + \frac{I-I_c-1}{I-1} \times \frac{I_c}{I-2} \right) \\ 
        &\mathbb P(Z_{ia} = 0, Z_{i'b} = 0, Z_{kc} = 1,  Z_{i'd} = 1) = \frac{I_a}{I} \times \frac{I_d-I_b}{I-1} \times \frac{I-I_a - 1}{I-2} \\ 
        &+ \frac{I_b-I_a}{I} \times \frac{I_d-I_b}{I-1} \times \frac{I-I_a - 2}{I-2} + \frac{I_c-I_b}{I} \times \frac{I_d-I_b-1}{I-1} \times \frac{I-I_a - 2}{I-2}\\  
        &\text{2 distinct:} \\ 
        &\mathbb P(Z_{ia} = 0, Z_{ib} = 0, Z_{ic} = 1,  Z_{i'd} = 1) = \frac{I_c-I_b}{I} \times \frac{I_d-1}{I-1}\\ 
        &\mathbb P(Z_{ia} = 0, Z_{ib} = 0, Z_{i'c} = 1,  Z_{id} = 1) = \frac{I_c-I_b}{I} \times\frac{I_c-1}{I-1} + \frac{I_d-I_c}{I} \times \frac{I_c}{I-1}\\ 
        &\mathbb P(Z_{ia} = 0, Z_{i'b} = 0, Z_{ic} = 1,  Z_{id} = 1) = \frac{I_c-I_b}{I} \times \frac{I_c-I_a - 1}{I-1} + \frac{I - I_c}{I} \times \frac{I_c-I_a}{I-1} \\ 
        &\mathbb P(Z_{ia} = 0, Z_{i'b} = 0, Z_{i'c} = 1,  Z_{i'd} = 1) = \frac{I_c-I_b}{I} \times \frac{I-I_a - 1}{I - 1} \\ 
        &\mathbb P(Z_{ia} = 0, Z_{ib} = 0, Z_{i'c} = 1,  Z_{i'd} = 1) = \frac{I_c-I_b}{I} \times \frac{I_c-1}{I-1} + \frac{I - I_c}{I} \times \frac{I_c}{I-1}  \\ 
        &\mathbb P(Z_{ia} = 0, Z_{i'b} = 0, Z_{i'c} = 1,  Z_{id} = 1) = \frac{I_c-I_b}{I} \times \frac{I_d-I_a-1}{I-1} \\ 
        &\mathbb P(Z_{ia} = 0, Z_{i'b} = 0, Z_{ic} = 1,  Z_{i'd} = 1) = \frac{I_b-I_a}{I} \times \frac{I_d-I_b}{I-1} + \frac{I_c-I_b}{I} \times \frac{I_d-I_b-1}{I-1}\\  
        &\text{1 distinct:} \\ 
        &\mathbb P(Z_{ia} = 0, Z_{ib} = 0, Z_{ic} = 1,  Z_{id} = 1) = \frac{I_c-I_b}{I}.
    \end{align*}
\end{lemma}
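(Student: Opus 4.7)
The plan is to reduce each joint probability to a rank-counting problem using the permutation representation already introduced in Appendix \ref{appendix: horvitz-thompson}: let $(V_1, \ldots, V_I)$ be a uniformly random permutation of $(1, \ldots, I)$, so that $Z_{ij} = 1$ if and only if $V_i \leq I_j$. Then each event $\{Z_{i_1 j_1} = z_1, Z_{i_2 j_2} = z_2, Z_{i_3 j_3} = z_3, Z_{i_4 j_4} = z_4\}$ translates into the intersection of constraints of the form $V_{i_l} \leq I_{j_l}$ (when $z_l = 1$) or $V_{i_l} > I_{j_l}$ (when $z_l = 0$) on the ranks of up to four clusters.

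First, I would handle the degenerate subcases in which two of the four indices refer to the same cluster: intersecting the two constraints on that cluster either yields a single allowed interval of ranks or is empty. For example, with $a \leq b$, the pair $(Z_{ia}=1, Z_{ib}=0)$ forces $V_i \leq I_a$ and $V_i > I_b \geq I_a$, which is impossible, producing the many zero entries in the statement (such as $P(Z_{ia}=1, Z_{i'b}=1, Z_{ic}=0, Z_{kd}=0) = 0$).

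Second, in each non-degenerate case, I would aggregate the per-cluster constraints into intervals $S_1, \ldots, S_m \subseteq \{1,\ldots,I\}$ (one per distinct cluster, with $m \in \{1,2,3,4\}$) and compute
\begin{equation*}
P(V_{i_1} \in S_1, \ldots, V_{i_m} \in S_m) = \frac{\#\{\text{injections } (v_1,\ldots,v_m) : v_l \in S_l\}}{I(I-1)\cdots(I-m+1)}.
\end{equation*}
The injection count is most cleanly done by partitioning $\{1,\ldots,I\}$ into the disjoint sub-intervals $[1,I_a]$, $(I_a,I_b]$, $(I_b,I_c]$, $(I_c,I_d]$, $(I_d, I]$ induced by the nested thresholds $I_a \leq I_b \leq I_c \leq I_d$, distributing the $m$ ranks across these sub-intervals in a way compatible with each $S_l$, and then using sequential conditioning (place the most-constrained cluster first and successively loosen). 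This produces the sum-of-products expressions visible in entries like the all-distinct $0101$ and $0011$ cases, where the enumeration over which sub-interval holds the first cluster's rank yields the explicit sums.

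The hard part is organizational rather than mathematical: there are sixteen sign patterns, and within each pattern the possible equality structures on $(i,i',k,k')$ produce up to seven subcases, giving the long enumeration in the lemma. I would manage this by first proving a single auxiliary statement that gives $P(V_{i_1} \in S_1, \ldots, V_{i_m} \in S_m)$ as a closed-form sum indexed by which sub-intervals hold each rank, and then specialize it mechanically to each displayed entry, noting that the paper itself (as for the earlier lemmas) omits these verifications for brevity.
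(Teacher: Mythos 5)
Your reduction to the permutation representation $Z_{ij}=\mathbf{1}\{V_i\le I_j\}$ --- with contradictory rank constraints on a repeated cluster yielding the zero entries, and sequential injection-counting over the sub-intervals $[1,I_a],(I_a,I_b],(I_b,I_c],(I_c,I_d],(I_d,I]$ yielding the nonzero ones --- is exactly the computation the paper has in mind but omits as ``tedious''; spot-checking (e.g., the all-distinct $1100$ and $1010$ entries) confirms your recipe reproduces the displayed products and sums. So the proposal is correct and takes essentially the same (here, the only natural) approach as the paper's unwritten proof.
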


\section{Testing sharp nulls using Fisherian randomization tests}
\label{appendix: sharp null}
\subsection{Null of no effect}
When the number of clusters is small, the $I \to \infty$, fixed $J$ asymptotic regime (as in the REDAPS trial) can be tenuous. It is therefore sensible to test Fisher's sharp null of no effect, at least as a first analysis, for which one can conduct a finite-sample exact $\alpha$ level test. More explicitly, consider the following null hypothesis: 
\begin{equation} \label{eq: fisher sharp null}
    H_0^F: Y_{ijk}(1) = Y_{ijk}(0) \ \forall i,j,k.
\end{equation}
The null $H_0^F$ states that receiving the randomized intervention (a default order for palliative care) has no effect whatsoever for any individual in the study. Under the exclusion restriction of \ref{assumption: iv}, $H_0^F$ can also be interpreted as a test of no effect of receiving palliative care on the outcome. Under the sharp null, $Y_{ijk}(1) = Y_{ijk}(0) = Y_{ijk} \ \forall i,j,k$, and it is straightforward to construct a $p$-value for an arbitrary test statistic $t(\bm{Z}, \bm{Y}, \bm{X})$ as follows:
\begin{equation} \label{eq: frt p-value}
    \mathbb P\{t(\bm{Z}, \bm{Y}, \bm{X}) \geq q | \mathcal{Y}, \mathcal{D}, \mathcal{X}\} \equiv \frac{|\bm{z} \in \Omega: t(\bm{z}, \bm{Y}, \bm{X}) \geq q|}{|\Omega|},
\end{equation}
where $\Omega$ denotes the set of all possible intervention assignments in the stepped-wedge design. Since $\Omega$ can be prohibitively large, the $p$-value can be approximated to the desired precision using Monte-Carlo samples from $\Omega$. \cite{Ji2017RandomizationInsurance} proposes to use the Wald statistic corresponding to the randomized intervention coefficient from fitting an LMM. In principle, any test statistic can be used.

\subsection{Proportional effect model for noncompliance}

Beyond the sharp null of no effect, it is also possible to test the proportional effect null hypothesis, where the effect of the randomized intervention on the outcome is posited to be proportional to the effect of the randomized intervention on the treatment. Mathematically, these can be expressed as follows:
\begin{equation} \label{eq: constant effect null}
    H_0^C: Y_{ijk}(1)-Y_{ijk}(0) = \beta_0 \{D_{ijk}(1)-D_{ijk}(0)\} \ \forall \ i,j,k.
\end{equation}
Note that $H_0^C$ posits a constant effect, and $\beta_0$ matches the effect ratio $\lambda_0$ from the main text. Testing $H_0^C$ is facilitated by the noticing that $Y_{ijk}(1)-\beta_0 D_{ijk}(1) = Y_{ijk}(0) - \beta_0 D_{ijk}(0) = Y_{ijk} - \beta_0 D_{ijk}$, meaning that the vector of values $\bm{Y}-\beta_0\bm{D}$ does not vary with the randomized intervention $\bm{Z}$. Thus, for an arbitrary test statistic $t(\bm{Z},\bm{Y}-\beta_0\bm{D},\bm{X})$ a $p$-value for $H_0^C$ can be constructed as follows:
\begin{equation} \label{constant effect p-value}
    \mathbb P\{t(\bm{Z}, \bm{Y}-\beta_0\bm{D}, \bm{X}) \geq q | \mathcal{Y}, \mathcal{D}, \mathcal{X}\} \equiv \frac{|\bm{z} \in \Omega: t(\bm{z}, \bm{Y}-\beta_0\bm{D}, \bm{X}) \geq q|}{|\Omega|},
\end{equation}
Such hypotheses have been considered by \cite{small2008randomization} in the context of a parallel arm cluster design. Taking $\beta_0 = 0$ recovers the test of the sharp null of no effect.

\section{ANCOVA estimators using cluster-period-level data}
\label{appendix: aggregate estimators}
Recall that
\begin{align} 
    \tau_{\lambda_0} &\equiv \frac{1}{N} \sum_{i=1}^I \sum_{j=1}^J \sum_{k=1}^{N_{ij}} \left[\{Y_{ijk}(1) - \lambda_0D_{ijk}(1)\} - \{Y_{ijk}(0) - \lambda_0D_{ijk}(0)\}\right] \nonumber\displaybreak[0]\\
    &= \frac{1}{N} \sum_{i=1}^I \sum_{j=1}^J \sum_{k=1}^{N_{ij}} \left\{R_{ijk}(1) - R_{ijk}(0)\right\}, \label{eq:R-def-estimand}
\end{align}
where $R_{ijk}(z)=Y_{ijk}(z)-\lambda_0D_{ijk}(z)$ for $z=0,1$. \eqref{eq:R-def-estimand} can be re-expressed as
\begin{align}
    \tau_{\lambda_0} &= \frac{1}{N} \sum_{i=1}^I \sum_{j=1}^J \sum_{k=1}^{N_{ij}} \left\{R_{ijk}(1) - R_{ijk}(0)\right\} = \frac{1}{N} \sum_{j=1}^J \sum_{i=1}^I N_{ij} \cdot \frac{1}{N_{ij}}\sum_{k=1}^{N_{ij}} \left\{R_{ijk}(1) - R_{ijk}(0)\right\} \nonumber \displaybreak[0]\\
    &= \frac{1}{N} \sum_{j=1}^J N_j \cdot \frac{1}{N_j}\sum_{i=1}^I N_{ij} \left\{\overline R_{ij}(1) - \overline R_{ij}(0)\right\} \label{eq:ave-R-def-estimand} \displaybreak[0]\\
    &= \frac{1}{N} \sum_{j=1}^J N_j\cdot \frac{1}{I}\sum_{i=1}^I  \left\{\widetilde R_{ij}(1) - \widetilde R_{ij}(0)\right\} \label{eq:scaled-R-def-estimand},
\end{align}
where $\widetilde R_{ij}(z) \equiv IN_{ij}/N_j \cdot \overline R_{ij}(z)$ is the scaled cluster-period total. \eqref{eq:ave-R-def-estimand} and \eqref{eq:scaled-R-def-estimand} suggest that the ANCOVA estimators can be applied to cluster-period-level data, as discussed in \citet{Su2021Model-AssistedExperiments} for parallel-arm CRTs and \citet{Chen2025} for staggered rollout CRTs with anticipation and without invariance to history. Specifically, \eqref{eq:ave-R-def-estimand} suggests that we can conduct weighted least squares (WLS) regression using cluster-period averages with weights $w_{ij}\equiv N_{ij}$, and \eqref{eq:scaled-R-def-estimand} suggests that we can conduct ordinary least squares (OLS) regression using scaled cluster-period totals.

\subsection{ANCOVA estimators using cluster-period averages}

We first study the ANCOVA estimators using cluster-period averages. The ANCOVA-I model in this case is
\begin{align} \label{eq:ave-ANCOVA-I}
    \overline R_{ij} \sim \beta_j + \theta_j Z_{ij} + \overline X_{ij}^c \eta,
\end{align}
where $\overline X_{ij}^c=N_{ij}^{-1}\sum_{k=1}^{N_{ij}}X_{ijk}-N_j^{-1}\sum_{i=1}^I\sum_{k=1}^{N_{ij}}X_{ijk}=\overline X_{ij} - \overline X_j$ is the centered cluster-period average of covariates. Similarly, the ANCOVA-III model in this case is
\begin{align} \label{eq:ave-ANCOVA-III}
    \overline R_{ij} \sim \beta_j + \theta_j Z_{ij} + \overline X_{ij}^c\gamma + Z_{ij}\overline X_{ij}^c \eta,
\end{align}
which include the interaction between $Z_{ij}$ and $\overline X_{ij}^c$. For ANCOVA-III, \eqref{eq:ave-ANCOVA-III} can be reparameterized to facilitate the derivation of analytical results. That is, we separate the model components by treatment conditions and obtain the following reparameterized ANCOVA-III model:
\begin{equation} \label{eq:ave-ANCOVA-III-reexp}
    \overline R_{ij} \sim (1-Z_{ij})\beta_j +(1-Z_{ij})\overline X_{ij}^c\gamma+ Z_{ij}\theta_j^* + Z_{ij}\overline X_{ij}^c\eta^*,
\end{equation}
where $\theta_j^* = \beta_j+\theta_j$ and $\eta^*=\gamma+\eta$. Similar to the main article, we denote the WLS estimators of $\theta_j$ in \eqref{eq:ave-ANCOVA-I} and $\theta_j = \theta_j^*-\beta_j$ in \eqref{eq:ave-ANCOVA-III-reexp} as $\widehat\theta_j$, and the estimator of $\tau_{\lambda_0}$ is $\widehat\tau_{\lambda_0} = N^{-1}\sum_{j=1}^JN_j\widehat\theta_j$. We now present the finite-population CLT for $\widehat\tau_{\lambda_0}$ using cluster-period averages. 

\begin{proposition} \label{prop:ave-CLT}
    Under Assumptions \ref{assumption: cluster sutva}-\ref{assumption: sw randomization} and analogous regularity conditions in Theorem 1 of \citet{Chen2025}, as $I\to\infty$, $\widehat\tau_{\lambda_0}$ from the WLS fit of \eqref{eq:ave-ANCOVA-I} or \eqref{eq:ave-ANCOVA-III-reexp} is a consistent estimator for $\tau_{\lambda_0}$ and 
    \begin{align*}
        \frac{\widehat\tau_{\lambda_0}-\tau_{\lambda_0}}{\sqrt{\mathrm{Var}(\widehat\tau_{\lambda_0})}} \overset{d}{\to} \mathcal N\left(0,1\right),
    \end{align*}
    where $\mathrm{Var}(\widehat\tau_{\lambda_0}) = \bm\varpi^\top\bm\Sigma_\theta\bm\varpi$ with
    \begin{align} \label{eq:estimator-cov-mat}
        \bm\Sigma_\theta = \sum_{a\in\mathcal A}\frac{1}{I^a}\bm S_{U^*,W}^a-\sum_{a,a'\in\mathcal A}\frac{1}{I}\bm S_{U^*,W}^{a,a'},
    \end{align}
    and $\bm\varpi = (N_1/N,\ldots,N_J/N)^\top$.
\end{proposition}

Similar to Proposition \ref{prop: ancova test validity}, Proposition \ref{prop:ave-CLT} also directly follows from Theorem 1 of \citet{Chen2025} as a special example, with the cluster-period size equal to one and their outcomes $(Y_{ijk})$ replaced by the cluster-period averages of residualized counterparts $\overline R_{ij}$ and the cluster-period-level weight $w_{ij}=N_{ij}$ or $N_{ij}/N_j$. The required conditions are also similar to those of Proposition \ref{prop: ancova test validity}, which are the Lindeberg conditions for the finite-population CLT, no overly dominant cluster-period, and the finiteness and positive definiteness of the limiting values of the covariance matrices such as $\bm S_{U^*,W}^a$ and $\bm S_{U^*,W}^{a,a'}$ in \eqref{eq:estimator-cov-mat}. We omit the formal proof for its length and similarity, and refer the reader to the Web Appendix of \citet{Chen2025} for details.

We now give formal definitions of the covariance matrices in \eqref{eq:estimator-cov-mat}. Recall that we use $A_i=a$, $a\in\mathcal A=\{1,\ldots,J,J+1\}$, to denote the adoption date of the treatment for cluster $i$. Let $G_i^a = \mathbbm 1(A_i=a)$, $\overline R_{ij}^a = N_{ij}^{-1}\sum_{k=1}^{N_{ij}} R_{ijk}^a$, and $\overline R_j^a = N_j^{-1}\sum_{i=1}^I N_{ij}\overline R_{ij}^a$. We write $\overline R_j(1)$ and $\overline R_j(0)$ as a function of $\overline R_j^a$ to highlight the implicit role of the adoption date in the target estimand, where, specifically, we have
\begin{align*}
    \overline R_j(1) = \frac{\sum_{a\in\mathcal A}\mathbbm 1(a\leq j)\overline R_j^a}{\sum_{a\in\mathcal A}\mathbbm 1(a\leq j)} ~~\text{and}~~ \overline R_j(0) = \frac{\sum_{a\in\mathcal A}\mathbbm 1(a>j)\overline R_j^a}{\sum_{a\in\mathcal A}\mathbbm 1(a>j)}.
\end{align*}

Analogous to the definition of $\overline R_{ij}^a$ and $\overline R_j^a$, we define $\overline U_j^a = N_j^{-1}\sum_{i=1}^I N_{ij}\overline U_{ij}^a$, where for ANCOVA-I, $\overline U_{ij}^a = \overline R_{ij}^a - \overline X_{ij}^c\widehat\gamma$; and for ANCOVA-III, $\overline U_{ij}^a = \overline R_{ij}^a - \overline X_{ij}^c\{\mathbbm 1(a\leq j)\widehat\eta^* + \mathbbm 1(a> j)\widehat\gamma\}$. If we further write $\overline u_j^a = \sum_{i=1}^I N_{ij}G_i^a\overline U_{ij}^a/\sum_{i=1}^I N_{ij}G_i^a$ to signify the stochasticity of the treatment assignment, then we have the following re-expressions,
\begin{align*}
    \overline u_j(1) = \frac{\sum_{a\in\mathcal A}\mathbbm 1(a\leq j)(\sum_{i=1}^I N_{ij}G_i^a)\overline u_j^a}{\sum_{a\in\mathcal A}\mathbbm 1(a\leq j)(\sum_{i=1}^I N_{ij}G_i^a)} ~~\text{and}~~
    \overline u_j(0) = \frac{\sum_{a\in\mathcal A}\mathbbm 1(a>j)(\sum_{i=1}^I N_{ij}G_i^a)\overline u_j^a}{\sum_{a\in\mathcal A}\mathbbm 1(a>j)(\sum_{i=1}^I N_{ij}G_i^a)}.
\end{align*}

We proceed to first define the intermediate quantities, $\overline U_{ij}^{a*}$, by assuming that estimates of associated parameter vectors, $\widehat \gamma$ and $\widehat\eta^*$, are substituted by their corresponding known values, $\gamma$ and $\eta^*$, which are WLS coefficient vectors that would be obtained if the full set of potential outcomes is available. That is, for ANCOVA-I, $\overline U_{ij}^{a*} = \overline R_{ij}^a - \overline X_{ij}^c\gamma$; and for ANCOVA-III, $\overline U_{ij}^{a*} = \overline R_{ij}^a - \overline X_{ij}^c\{\mathbbm 1(a\leq j)\eta^* + \mathbbm 1(a> j)\gamma\}$. 
 
Similar to $\overline U_{ij}^{a*}$, we define the intermediate quantities $\overline U_{ij}^*(z)$ by replacing estimates of associated parameter vectors with corresponding known values. Specifically, $\overline U_{ij}^*(z) = \overline R_{ij}(z) - \overline X_{ij}^c\gamma$ for ANCOVA-I; and $\overline U_{ij}^*(1) = \overline R_{ij}(1) - \overline X_{ij}^c\eta^*$ and $\overline U_{ij}^*(0) = \overline R_{ij}(0) - \overline X_{ij}^c\gamma$ for ANCOVA-III.

Then, we can obtain cluster-level intermediate vectors
\begin{align*}
    \overline{\bm U}_i^{a*} = \left\{\overline{\bm U}_i^*(1)-\overline{\bm U}^*(1)\right\}\left\{\otimes_{j=1}^J\mathbbm 1(a\leq j)\right\} + \left\{\overline{\bm U}_i^*(0)-\overline{\bm U}^*(0)\right\}\left\{\otimes_{j=1}^J\mathbbm 1(a> j)\right\},
\end{align*}
where $\overline{\bm U}_i^{a*} = (\overline U_{i1}^{a*}, \ldots, \overline U_{iJ}^{a*})^\top$, $\overline{\bm U}_i^*(z) = (\overline U_{i1}^*(z), \ldots, \overline U_{iJ}^*(z))^\top$, and $\overline{\bm U}^*(z) = (\overline U_1^*(z), \ldots, \overline U_J^*(z))^\top$ with $\overline U_j^*(z) = N_j^{-1}\sum_{i=1}^I N_{ij}\overline U_{ij}^*(z)$; `$\otimes$' is the block diagonal operator, and therefore $\otimes_{j=1}^J\mathbbm 1(a\leq j)$ and $\otimes_{j=1}^J\mathbbm 1(a> j)$ are $J\times J$ diagonal matrices, with the $j$-th diagonal element being $\mathbbm 1(a\leq j)$ and $\mathbbm 1(a> j)$, respectively. In addition, we have $J\times J$ adjusted diagonal weight matrices
\begin{align*}
    \bm W_i^a = \otimes_{j=1}^J\left\{\frac{I^a N_{ij}}{I_j\overline N_j}\mathbbm 1(a\leq j) - \frac{I^a N_{ij}}{(I-I_j)\overline N_j}\mathbbm 1(a>j)\right\},
\end{align*}
where $\overline N_j=I^{-1}\sum_{i=1}^IN_{ij}=I^{-1}N_j$. Finally, we have covariance matrices in \eqref{eq:estimator-cov-mat}, i.e., $\bm S_{U^*,W}^a = (I-1)^{-1}\sum_{i=1}^I\overline{\bm U}_i^{a*}\bm W_i^a\bm W_i^a\overline{\bm U}_i^{a*\top}$ and $\bm S_{U^*,W}^{a,a'} = (I-1)^{-1}\sum_{i=1}^I\overline{\bm U}_i^{a*}\bm W_i^a\bm W_i^{a'}\overline{\bm U}_i^{a'*\top}$.

\subsubsection*{Variance estimation}

The estimation of $\mathrm{Var}(\widehat\tau_{\lambda_0})$ hinges upon that of $\bm\Sigma_\theta$, the covariance matrix of $(\widehat\theta_1,\ldots,\widehat\theta_J)^\top$, since the normalized weight vector, $\bm\varpi$, is known. The second term on the right-hand side (RHS) of \eqref{eq:estimator-cov-mat} is the variance of the average treatment effect estimators across clusters in period $j$, which is generally inestimable because each cluster can only be randomized to a specific treatment adoption time in practice. The first term, the summation of separate covariance matrices of cluster-level average model residuals for different treatment adoption time groups, can be estimated via a consistent design-based (DB) plug-in estimator. Specifically, the covariance matrix component for the group with treatment adoption time $a$, $\bm S_{U^*,W}^a$, can be estimated by
\begin{align} \label{eq:db-var}
    \widehat{\bm S}_{U,W}^a=\frac{1}{I^a-1}\sum_{i:A_i=a}\widehat{\bm U}_i^a\bm W_i^a\bm W_i^a\widehat{\bm U}_i^{a\top},
\end{align}
where $\widehat{\bm U}_i^a = (\widehat U_{i1}^a,\ldots,\widehat U_{iJ}^a)^\top$, and $\widehat U_{ij}^a = \{\overline U_{ij}(1) - \overline u_j(1)\}\{\otimes_{j=1}^J\mathbbm 1(a\leq j)\} + \{\overline U_{ij}(0)-\overline u_j(0)\}\{\otimes_{j=1}^J\mathbbm 1(a> j)\}$. Here, $\overline{U}_{ij}(z) = \overline R_{ij}(z)-\overline X_{ij}^c\widehat\gamma$ and $\overline u_j(z) = \overline r_j(z) - \overline X_j^{c,z}\widehat\gamma$, with $\overline r_j(z) = \sum_{i=1}^I\mathbbm 1(Z_{ij}=z)N_{ij}\overline R_{ij}(z)/\sum_{i=1}^I\mathbbm 1(Z_{ij}=z)N_{ij}$ and $\overline X_j^{c,z} = \sum_{i=1}^I\mathbbm 1(Z_{ij}=z)N_{ij}\overline X_{ij}^c/\sum_{i=1}^I\mathbbm 1(Z_{ij}=z)N_{ij}$, for ANCOVA-I; and $\overline U_{ij}(1) = \overline R_{ij}(1)-\overline X_{ij}^c\widehat\eta^*$, $\overline U_{ij}(0) = \overline R_{ij}(0)-\overline X_{ij}^c \widehat\gamma$, $\overline u_j(1) = \overline r_j(1)-\overline X_j^{c,1}\widehat\eta^*$, and $\overline u_j(0) = \overline r_j(0) - \overline X_j^{c,0}\widehat\gamma$ for ANCOVA-III.

The DB estimator for $\mathrm{Var}(\widehat\tau_{\lambda_0})$ thus is
\begin{align} \label{eq:var-est-db}
    \widehat{\mathrm{Var}}_\mathrm{DB}(\widehat\tau_{\lambda_0}) = \bm\varpi^\top\left(\sum_{a\in\mathcal A}\frac{1}{I^a}\widehat{\bm S}_{U,W}^a\right)\bm\varpi.
\end{align}
Note that since
\begin{align*}
    \bm\varpi^\top\left(\sum_{a,a'\in\mathcal A}\frac{1}{I}\bm S_{U^*,W}^{a,a'}\right)\bm\varpi = \frac{1}{I(I-1)}\sum_{i=1}^I\left(\bm\varpi^\top\sum_{a\in\mathcal A}\overline{\bm U}_i^{a*}\bm W_i^a\right)^2\geq 0,
\end{align*}
and is generally unestimable, the DB estimator in \eqref{eq:var-est-db} is expected to be conservative. Due to the construction of the design-based variance estimator, \eqref{eq:db-var} is undefined when a treatment sequence only contains a single cluster or $I_a=1$ for some $a\in\mathcal A$. To ensure a feasible design-based variance estimator in this special case, \eqref{eq:db-var} can be modified to $\widehat{\bm S}_{U,W}^a=(I^a)^{-1}\sum_{i:A_i=a}\widehat{\bm U}_i^a\bm W_i^a\bm W_i^a\widehat{\bm U}_i^{a^\top}$. The resulting modified variance estimator remains conservative and asymptotically unbiased as $I\to\infty$.

\subsection{ANCOVA estimators using scaled cluster-period totals}

An alternative approach to \eqref{eq:ave-ANCOVA-I} and \eqref{eq:ave-ANCOVA-III-reexp} is to use scaled cluster-period totals in ANCOVA models. Specifically, for ANCOVA-I, we have
\begin{align} \label{eq:total-ANCOVA-I}
    \widetilde R_{ij} \sim \beta_j + \theta_j Z_{ij} + \widetilde X_{ij}^c \eta,
\end{align}
where $\widetilde R_{ij} = IN_{ij}\overline R_{ij}/N_j$ and  $\widetilde X_{ij}^c = IN_{ij}\overline X_{ij}^c/N_j$. Similarly, for ANCOVA-III, we have
\begin{equation} \label{eq:total-ANCOVA-III-reexp}
    \widetilde R_{ij} \sim (1-Z_{ij})\beta_j +(1-Z_{ij})\widetilde X_{ij}^c\gamma+ Z_{ij}\theta_j^* + Z_{ij}\widetilde X_{ij}^c\eta^*,
\end{equation}
where $\theta_j^* = \beta_j+\theta_j$ and $\eta^*=\gamma+\eta$. We denote the OLS estimators of $\theta_j$ in \eqref{eq:total-ANCOVA-I} and $\theta_j = \theta_j^*-\beta_j$ in \eqref{eq:total-ANCOVA-III-reexp} as $\widehat\theta_j$, and the estimator of $\tau_{\lambda_0}$ is $\widehat\tau_{\lambda_0} = N^{-1}\sum_{j=1}^JN_j\widehat\theta_j$. We now present the finite-population Central Limit Theorem (CLT) for $\widehat\tau_{\lambda_0}$ using scaled cluster-period totals. 

\begin{proposition} \label{prop:total-CLT}
    Under Assumptions \ref{assumption: cluster sutva}-\ref{assumption: sw randomization} and analogous regularity conditions in Theorem 1 of \citet{Chen2025}, as $I\to\infty$, $\widehat\tau_{\lambda_0}$ from the OLS fit of \eqref{eq:total-ANCOVA-I} or \eqref{eq:total-ANCOVA-III-reexp} is a consistent estimator for $\tau_{\lambda_0}$ and 
    \begin{align*}
        \frac{\widehat\tau_{\lambda_0}-\tau_{\lambda_0}}{\sqrt{\mathrm{Var}(\widehat\tau_{\lambda_0})}} \overset{d}{\to} \mathcal N\left(0,1\right),
    \end{align*}
    where $\mathrm{Var}(\widehat\tau_{\lambda_0}) = \bm\varpi^\top\bm\Sigma_\theta\bm\varpi$ with
    \begin{align} \label{eq:total-estimator-cov-mat}
        \bm\Sigma_\theta = \sum_{a\in\mathcal A}\frac{1}{I^a}\bm S_{U^*,W}^a-\sum_{a,a'\in\mathcal A}\frac{1}{I}\bm S_{U^*,W}^{a,a'},
    \end{align}
    and $\bm\varpi = (N_1/N,\ldots,N_J/N)^\top$.
\end{proposition}

Similar to Proposition \ref{prop:ave-CLT}, Proposition \ref{prop:total-CLT} is also a special example of Theorem 1 in \cite{Chen2025}, with the cluster-period size equal to one and their outcomes $(Y_{ijk})$ replaced by the scaled cluster-period totals of residualized counterparts $\widetilde R_{ij}$ and the cluster-period-level weight $w_{ij}=1$ for OLS regressions. We omit the formal proof for its length and similarity, and refer the reader to the Web Appendix of \citet{Chen2025} for details.

We now give formal definitions of the covariance matrices in \eqref{eq:total-estimator-cov-mat}. Let $\widetilde R_{ij}^a = IN_{ij}\overline R_{ij}^a/N_j$ and $\widetilde R_j^a = I^{-1}\sum_{i=1}^I \widetilde R_{ij}^a = \overline R_j^a$. We write $\widetilde R_j(1)$ and $\widetilde R_j(0)$ as a function of $\widetilde R_j^a$ to highlight the implicit role of the adoption date in the target estimand, where, specifically, we have
\begin{align*}
    \widetilde R_j(1) = \frac{\sum_{a\in\mathcal A}\mathbbm 1(a\leq j)\widetilde R_j^a}{\sum_{a\in\mathcal A}\mathbbm 1(a\leq j)} ~~\text{and}~~ \widetilde R_j(0) = \frac{\sum_{a\in\mathcal A}\mathbbm 1(a>j)\widetilde R_j^a}{\sum_{a\in\mathcal A}\mathbbm 1(a>j)}.
\end{align*}
Define $\widetilde U_j^a = I^{-1}\sum_{i=1}^I \widetilde U_{ij}^a$, where for ANCOVA-I, $\widetilde U_{ij}^a = \widetilde R_{ij}^a - \widetilde X_{ij}^c\widehat\gamma$; and for ANCOVA-III, $\widetilde U_{ij}^a = \widetilde R_{ij}^a - \widetilde X_{ij}^c\{\mathbbm 1(a\leq j)\widehat\eta^* + \mathbbm 1(a> j)\widehat\gamma\}$. We further write $\widetilde u_j^a = \sum_{i=1}^I G_i^a\widetilde U_{ij}^a/\sum_{i=1}^I G_i^a= \sum_{i=1}^I G_i^a\widetilde U_{ij}^a/I^a$, then, we have,
\begin{align*}
    &\widetilde u_j(1) = \frac{\sum_{a\in\mathcal A}\mathbbm 1(a\leq j)(\sum_{i=1}^I G_i^a)\widetilde u_j^a}{\sum_{a\in\mathcal A}\mathbbm 1(a\leq j)(\sum_{i=1}^IG_i^a)} = \frac{\sum_{a\in\mathcal A}\mathbbm 1(a\leq j)I^a\widetilde u_j^a}{\sum_{a\in\mathcal A}\mathbbm 1(a\leq j)I^a} = \frac{\sum_{a\in\mathcal A}\mathbbm 1(a\leq j)I^a\widetilde u_j^a}{I_j},\displaybreak[0]\\
    &\widetilde u_j(0) = \frac{\sum_{a\in\mathcal A}\mathbbm 1(a>j)(\sum_{i=1}^I G_i^a)\widetilde u_j^a}{\sum_{a\in\mathcal A}\mathbbm 1(a>j)(\sum_{i=1}^I G_i^a)} = \frac{\sum_{a\in\mathcal A}\mathbbm 1(a>j)I^a\widetilde u_j^a}{\sum_{a\in\mathcal A}\mathbbm 1(a>j)I^a} = \frac{\sum_{a\in\mathcal A}\mathbbm 1(a>j)I^a\widetilde u_j^a}{I-I_j}.
\end{align*}
We proceed to define the intermediate quantities, $\widetilde U_{ij}^{a*}$, by assuming that estimates of associated parameter vectors, $\widehat \gamma$ and $\widehat\eta^*$, are substituted by their corresponding known values, $\gamma$ and $\eta^*$, which are OLS coefficient vectors that would be obtained if the full set of potential outcomes is available. That is, for ANCOVA-I, $\widetilde U_{ij}^{a*} = \widetilde R_{ij}^a - \widetilde X_{ij}^c\gamma$; and for ANCOVA-III, $\widetilde U_{ij}^{a*} = \widetilde R_{ij}^a - \widetilde X_{ij}^c\{\mathbbm 1(a\leq j)\eta^* + \mathbbm 1(a> j)\gamma\}$. Similar to $\widetilde U_{ij}^{a*}$, we define the intermediate quantities $\widetilde U_{ij}^*(z)$ by replacing estimates of associated parameter vectors with corresponding known values. Specifically, $\widetilde U_{ij}^*(z) = \widetilde R_{ij}(z) - \widetilde X_{ij}^c\gamma$ for ANCOVA-I; and $\widetilde U_{ij}^*(1) = \widetilde R_{ij}(1) - \widetilde X_{ij}^c\eta^*$ and $\widetilde U_{ij}^*(0) = \widetilde R_{ij}(0) - \widetilde X_{ij}^c\gamma$ for ANCOVA-III. Then, we can obtain cluster-level intermediate vectors
\begin{align*}
    \widetilde{\bm U}_i^{a*} = \left\{\widetilde{\bm U}_i^*(1)-\widetilde{\bm U}^*(1)\right\}\left\{\otimes_{j=1}^J\mathbbm 1(a\leq j)\right\} + \left\{\widetilde{\bm U}_i^*(0)-\widetilde{\bm U}^*(0)\right\}\left\{\otimes_{j=1}^J\mathbbm 1(a> j)\right\},
\end{align*}
where $\widetilde{\bm U}_i^{a*} = (\widetilde U_{i1}^{a*}, \ldots, \widetilde U_{iJ}^{a*})^\top$, $\widetilde{\bm U}_i^*(z) = (\widetilde U_{i1}^*(z), \ldots, \widetilde U_{iJ}^*(z))^\top$, and $\widetilde{\bm U}^*(z) = (\widetilde U_1^*(z), \ldots, \widetilde U_J^*(z))^\top$ with $\widetilde U_j^*(z) = I^{-1}\sum_{i=1}^I \widetilde U_{ij}^*(z)$. In addition, we have $J\times J$ adjusted diagonal weight matrices
\begin{align*}
    \bm W_i^a = \otimes_{j=1}^J\left\{\frac{I^a }{I_j}\mathbbm 1(a\leq j) - \frac{I^a}{I-I_j}\mathbbm 1(a>j)\right\}.
\end{align*}
Finally, we have covariance matrices in \eqref{eq:estimator-cov-mat}, i.e., $\bm S_{U^*,W}^a = (I-1)^{-1}\sum_{i=1}^I\widetilde{\bm U}_i^{a*}\bm W_i^a\bm W_i^a\widetilde{\bm U}_i^{a*\top}$ and $\bm S_{U^*,W}^{a,a'} = (I-1)^{-1}\sum_{i=1}^I\widetilde{\bm U}_i^{a*}\bm W_i^a\bm W_i^{a'}\widetilde{\bm U}_i^{a'*\top}$.

\subsubsection*{Variance estimation}

For variance estimation, the covariance matrix component for the group with treatment adoption time $a$, $\bm S_{U^*,W}^a$ in \eqref{eq:total-estimator-cov-mat}, can be estimated by
\begin{align} \label{eq:total-db-var}
    \widehat{\bm S}_{U,W}^a=\frac{1}{I^a-1}\sum_{i:A_i=a}\widehat{\bm U}_i^a\bm W_i^a\bm W_i^a\widehat{\bm U}_i^{a\top},
\end{align}
where $\widehat{\bm U}_i^a = (\widehat U_{i1}^a,\ldots,\widehat U_{iJ}^a)^\top$, and $\widehat U_{ij}^a = \{\widetilde U_{ij}(1) - \widetilde u_j(1)\}\{\otimes_{j=1}^J\mathbbm 1(a\leq j)\} + \{\widetilde U_{ij}(0)-\widetilde u_j(0)\}\{\otimes_{j=1}^J\mathbbm 1(a> j)\}$. Here, $\widetilde{U}_{ij}(z) = \widetilde R_{ij}(z)-\widetilde X_{ij}^c\widehat\gamma$ and $\widetilde u_j(z) = \widetilde r_j(z) - \widetilde X_j^{c,z}\widehat\gamma$, with $\widetilde r_j(z) = \sum_{i=1}^I\mathbbm 1(Z_{ij}=z)\widetilde R_{ij}(z)/\sum_{i=1}^I\mathbbm 1(Z_{ij}=z)$ and $\widetilde X_j^{c,z} = \sum_{i=1}^I\mathbbm 1(Z_{ij}=z)\widetilde X_{ij}^c/\sum_{i=1}^I\mathbbm 1(Z_{ij}=z)$, for ANCOVA-I; and $\widetilde U_{ij}(1) = \widetilde R_{ij}(1)-\widetilde X_{ij}^c\widehat\eta^*$, $\widetilde U_{ij}(0) = \widetilde R_{ij}(0)-\widetilde X_{ij}^c \widehat\gamma$, $\widetilde u_j(1) = \widetilde r_j(1)-\widetilde X_j^{c,1}\widehat\eta^*$, and $\widetilde u_j(0) = \widetilde r_j(0) - \widetilde X_j^{c,0}\widehat\gamma$ for ANCOVA-III.

The DB estimator for $\mathrm{Var}(\widehat\tau_{\lambda_0})$ thus is
\begin{align} \label{eq:total-var-est-db}
    \widehat{\mathrm{Var}}_\mathrm{DB}(\widehat\tau_{\lambda_0}) = \bm\varpi^\top\left(\sum_{a\in\mathcal A}\frac{1}{I^a}\widehat{\bm S}_{U,W}^a\right)\bm\varpi.
\end{align}
Similarly, the DB estimator in \eqref{eq:total-var-est-db} is expected to be conservative. To ensure a feasible design-based variance estimator in this special case, \eqref{eq:total-db-var} can be modified to $\widehat{\bm S}_{U,W}^a=(I^a)^{-1}\sum_{i:A_i=a}\widehat{\bm U}_i^a\bm W_i^a\bm W_i^a\widehat{\bm U}_i^{a^\top}$. The resulting modified variance estimator remains conservative and asymptotically unbiased as $I\to\infty$.

\section{Cluster-period-level simulation results}
\label{appendix: aggregate sim results}
This section collects the simulation results from using an identical simulation setup as in Section \ref{section: sims} but instead using the average and scaled versions of the ANCOVA estimators. Figures \ref{fig: avg ancova results} and \ref{fig: scaled ancova results} collect the results for the average and scaled ANCOVA estimators, respectively. Figure \ref{fig: all ancova summary results} compares the performance of the best individual and aggregate ANCOVA estimators. The CR0 and CR3 variance estimates (the latter with $t$ correction) are denoted accordingly, and the $S(\lambda)$ variance estimator is denoted by DB. Tables \ref{tab: avg ancova results} and \ref{tab: scaled ancova results} include the comprehensive results for average and scaled ANCOVA. Notably, Figure \ref{fig: all ancova summary results} shows that individual level ANCOVA estimators perform very similarly to the average ANCOVA estimators, which have a somewhat sizable advantage over the scaled ANCOVA estimators in these simulation settings.

\begin{figure}
    \centering
    \includegraphics[width=\linewidth]{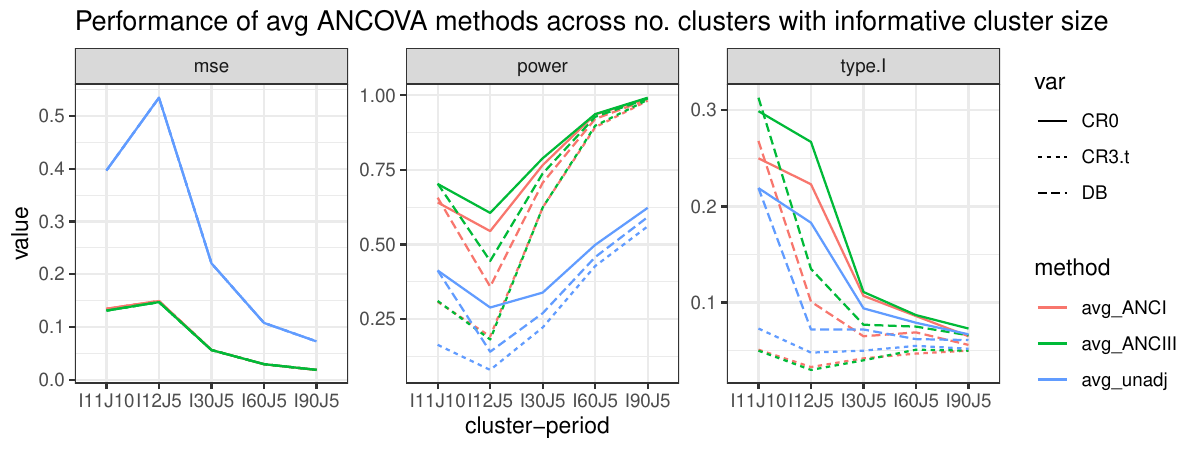}
    \caption{Simulation results for average ANCOVA estimators under different cluster-period configurations and informative cluster size. `mse' is the mean squared error over 1000 iterations. `type.I' counts the proportion of iterations where testing at the true $\lambda_0$ resulted in a rejection. `var' refers to variance/reference distribution - `CR0', `CR3' comparing to a $t$-distribution, and design-based from \cite{Chen2025}. `power' counts the proportion of iterations where the false null $\lambda = 0$ was rejected.}
    \label{fig: avg ancova results}
\end{figure}

\begin{figure}
    \centering
    \includegraphics[width=\linewidth]{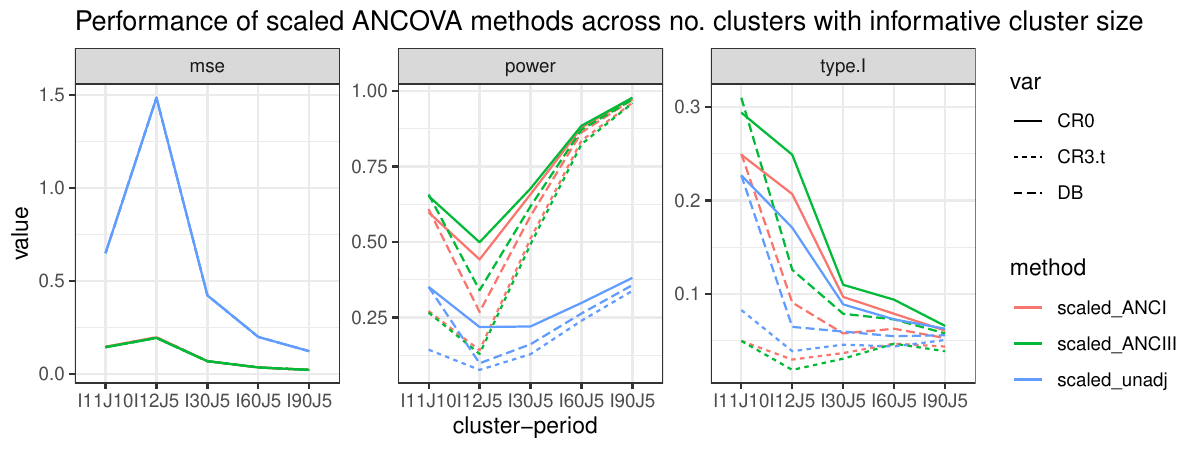}
    \caption{Simulation results for scaled ANCOVA estimators under different cluster-period configurations and informative cluster size. `mse' is the mean squared error over 1000 iterations. `type.I' counts the proportion of iterations where testing at the true $\lambda_0$ resulted in a rejection. `var' refers to variance/reference distribution - `CR0', `CR3' comparing to a $t$-distribution, and design-based from \cite{Chen2025}. `power' counts the proportion of iterations where the false null $\lambda = 0$ was rejected.}
    \label{fig: scaled ancova results}
\end{figure}

\begin{figure}
    \centering
    \includegraphics[width=\linewidth]{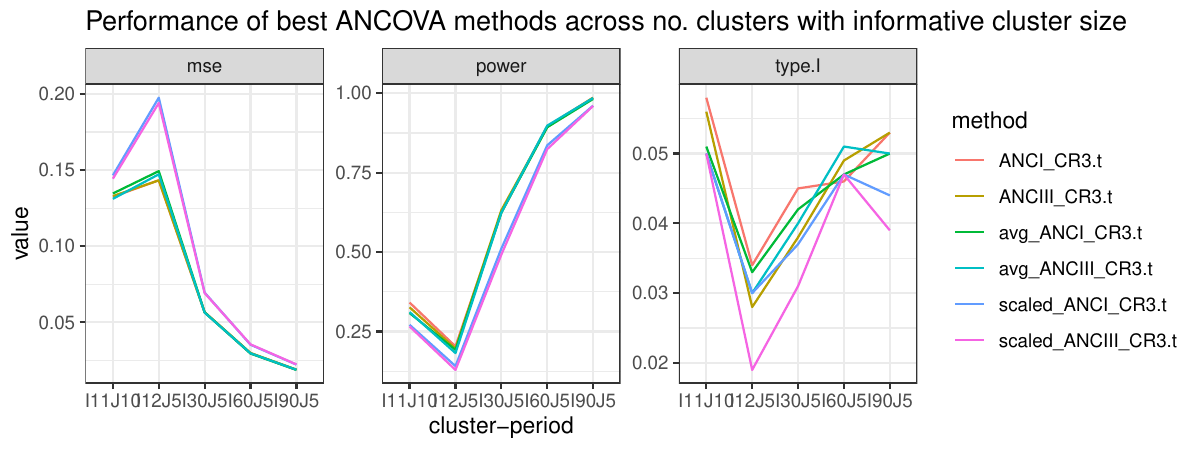}
    \caption{Simulation results for the best ANCOVA estimators under different cluster-period configurations and informative cluster size. `mse' is the mean squared error over 1000 iterations. `type.I' counts the proportion of iterations where testing at the true $\lambda_0$ resulted in a rejection. `var' refers to variance/reference distribution - `CR0', `CR3' comparing to a $t$-distribution, and design-based from \cite{Chen2025}. `power' counts the proportion of iterations where the false null $\lambda = 0$ was rejected.}
    \label{fig: all ancova summary results}
\end{figure}

\begin{sidewaystable}[ht]
    \centering
    \footnotesize
    \begin{tabular}{lllrrrrrrrrrr}
  \hline
IJ & method & inform & bias & mse & CR0.type.I & CR3.type.I & CR3.type.I.t & DB.type.I & CR0.power & CR3.power & CR3.power.t & DB.power \\ 
  \hline
I11J10 & avg\_ANCI & TRUE & -0.014 & 0.135 & 0.250 & 0.078 & 0.051 & 0.268 & 0.641 & 0.413 & 0.308 & 0.657 \\ 
  I11J10 & avg\_ANCIII & TRUE & 0.005 & 0.131 & 0.299 & 0.073 & 0.050 & 0.313 & 0.703 & 0.415 & 0.311 & 0.704 \\ 
  I11J10 & avg\_unadj & TRUE & -0.014 & 0.397 & 0.219 & 0.111 & 0.073 & 0.219 & 0.413 & 0.236 & 0.164 & 0.413 \\ 
  I12J5 & avg\_ANCI & TRUE & -0.044 & 0.149 & 0.223 & 0.060 & 0.033 & 0.101 & 0.545 & 0.249 & 0.192 & 0.359 \\ 
  I12J5 & avg\_ANCIII & TRUE & -0.029 & 0.147 & 0.267 & 0.047 & 0.030 & 0.135 & 0.606 & 0.247 & 0.182 & 0.445 \\ 
  I12J5 & avg\_unadj & TRUE & -0.048 & 0.535 & 0.183 & 0.065 & 0.048 & 0.072 & 0.289 & 0.109 & 0.080 & 0.141 \\ 
  I30J5 & avg\_ANCI & TRUE & -0.004 & 0.056 & 0.107 & 0.048 & 0.042 & 0.065 & 0.766 & 0.650 & 0.624 & 0.707 \\ 
  I30J5 & avg\_ANCIII & TRUE & 0.001 & 0.056 & 0.111 & 0.045 & 0.040 & 0.077 & 0.789 & 0.653 & 0.624 & 0.738 \\ 
  I30J5 & avg\_unadj & TRUE & 0.005 & 0.221 & 0.094 & 0.062 & 0.050 & 0.072 & 0.339 & 0.245 & 0.222 & 0.270 \\ 
  I60J5 & avg\_ANCI & TRUE & -0.002 & 0.029 & 0.086 & 0.053 & 0.047 & 0.069 & 0.936 & 0.899 & 0.893 & 0.921 \\ 
  I60J5 & avg\_ANCIII & TRUE & 0.001 & 0.030 & 0.087 & 0.053 & 0.051 & 0.075 & 0.937 & 0.902 & 0.898 & 0.928 \\ 
  I60J5 & avg\_unadj & TRUE & 0.001 & 0.108 & 0.079 & 0.057 & 0.055 & 0.062 & 0.499 & 0.441 & 0.428 & 0.458 \\ 
  I90J5 & avg\_ANCI & TRUE & -0.001 & 0.019 & 0.065 & 0.052 & 0.050 & 0.056 & 0.990 & 0.984 & 0.982 & 0.986 \\ 
  I90J5 & avg\_ANCIII & TRUE & 0.001 & 0.019 & 0.073 & 0.053 & 0.050 & 0.066 & 0.992 & 0.986 & 0.985 & 0.989 \\ 
  I90J5 & avg\_unadj & TRUE & -0.005 & 0.073 & 0.067 & 0.057 & 0.052 & 0.061 & 0.623 & 0.576 & 0.561 & 0.592 \\ 
  I11J10 & avg\_ANCI & FALSE & -0.012 & 0.125 & 0.212 & 0.062 & 0.034 & 0.225 & 0.330 & 0.130 & 0.080 & 0.342 \\ 
  I11J10 & avg\_ANCIII & FALSE & -0.003 & 0.121 & 0.299 & 0.081 & 0.049 & 0.296 & 0.438 & 0.165 & 0.116 & 0.445 \\ 
  I11J10 & avg\_unadj & FALSE & -0.037 & 0.538 & 0.214 & 0.101 & 0.067 & 0.214 & 0.237 & 0.118 & 0.089 & 0.237 \\ 
  I12J5 & avg\_ANCI & FALSE & -0.037 & 0.137 & 0.191 & 0.039 & 0.022 & 0.076 & 0.262 & 0.085 & 0.056 & 0.139 \\ 
  I12J5 & avg\_ANCIII & FALSE & -0.032 & 0.133 & 0.259 & 0.055 & 0.034 & 0.117 & 0.329 & 0.098 & 0.063 & 0.194 \\ 
  I12J5 & avg\_unadj & FALSE & -0.056 & 0.685 & 0.162 & 0.059 & 0.037 & 0.066 & 0.183 & 0.064 & 0.039 & 0.064 \\ 
  I30J5 & avg\_ANCI & FALSE & -0.003 & 0.053 & 0.080 & 0.034 & 0.028 & 0.052 & 0.281 & 0.185 & 0.165 & 0.228 \\ 
  I30J5 & avg\_ANCIII & FALSE & -0.002 & 0.051 & 0.103 & 0.043 & 0.038 & 0.069 & 0.331 & 0.202 & 0.174 & 0.267 \\ 
  I30J5 & avg\_unadj & FALSE & -0.001 & 0.279 & 0.089 & 0.058 & 0.050 & 0.066 & 0.136 & 0.078 & 0.067 & 0.094 \\ 
  I60J5 & avg\_ANCI & FALSE & -0.001 & 0.027 & 0.067 & 0.039 & 0.032 & 0.048 & 0.399 & 0.326 & 0.314 & 0.357 \\ 
  I60J5 & avg\_ANCIII & FALSE & -0.000 & 0.027 & 0.089 & 0.051 & 0.046 & 0.074 & 0.461 & 0.363 & 0.345 & 0.412 \\ 
  I60J5 & avg\_unadj & FALSE & -0.003 & 0.138 & 0.079 & 0.058 & 0.056 & 0.067 & 0.131 & 0.100 & 0.093 & 0.109 \\ 
  I90J5 & avg\_ANCI & FALSE & -0.001 & 0.017 & 0.051 & 0.040 & 0.036 & 0.047 & 0.497 & 0.458 & 0.445 & 0.473 \\ 
  I90J5 & avg\_ANCIII & FALSE & 0.000 & 0.017 & 0.069 & 0.051 & 0.046 & 0.059 & 0.544 & 0.485 & 0.478 & 0.520 \\ 
  I90J5 & avg\_unadj & FALSE & -0.007 & 0.093 & 0.070 & 0.055 & 0.053 & 0.063 & 0.166 & 0.148 & 0.139 & 0.152 \\ 
   \hline
\end{tabular}
    \caption{Simulation results for averaged cluster-period ANCOVA estimators under different cluster-period configurations. `inform' indicates whether cluster size is informative or not. `bias' and `mse' are the Monte-Carlo bias and mean squared error over 1000 iterations. The `type.I' columns count the proportion of iterations where testing at the true $\lambda_0$ resulted in a rejection using specific variance estimators - `CR0', `CR3', `CR3' comparing to a $t$-distribution, and design-based from \cite{Chen2025}. The `power' columns count the proportion of iterations where the false null $\lambda = 0$ was rejected.}
    \label{tab: avg ancova results}
\end{sidewaystable}

\begin{sidewaystable}[ht]
    \centering
    \footnotesize
    \begin{tabular}{lllrrrrrrrrrr}
  \hline
IJ & method & inform & bias & mse & CR0.type.I & CR3.type.I& CR3.type.I.t & DB.type.I & CR0.power & CR3.power& CR3.power.t & DB.power \\ 
  \hline
I11J10 & scaled\_ANCI & TRUE & -0.014 & 0.147 & 0.249 & 0.084 & 0.050 & 0.250 & 0.598 & 0.352 & 0.271 & 0.609 \\ 
  I11J10 & scaled\_ANCIII & TRUE & -0.009 & 0.144 & 0.294 & 0.069 & 0.050 & 0.310 & 0.652 & 0.333 & 0.266 & 0.657 \\ 
  I11J10 & scaled\_unadj & TRUE & -0.075 & 0.650 & 0.227 & 0.115 & 0.083 & 0.227 & 0.351 & 0.196 & 0.143 & 0.351 \\ 
  I12J5 & scaled\_ANCI & TRUE & -0.055 & 0.198 & 0.207 & 0.051 & 0.030 & 0.091 & 0.442 & 0.199 & 0.141 & 0.268 \\ 
  I12J5 & scaled\_ANCIII & TRUE & -0.045 & 0.194 & 0.249 & 0.032 & 0.019 & 0.126 & 0.499 & 0.174 & 0.129 & 0.340 \\ 
  I12J5 & scaled\_unadj & TRUE & -0.244 & 1.487 & 0.171 & 0.069 & 0.039 & 0.065 & 0.218 & 0.102 & 0.076 & 0.098 \\ 
  I30J5 & scaled\_ANCI & TRUE & -0.008 & 0.069 & 0.097 & 0.049 & 0.037 & 0.058 & 0.658 & 0.535 & 0.511 & 0.586 \\ 
  I30J5 & scaled\_ANCIII & TRUE & -0.004 & 0.069 & 0.110 & 0.046 & 0.031 & 0.079 & 0.677 & 0.523 & 0.492 & 0.618 \\ 
  I30J5 & scaled\_unadj & TRUE & -0.050 & 0.423 & 0.089 & 0.056 & 0.046 & 0.060 & 0.220 & 0.145 & 0.128 & 0.161 \\ 
  I60J5 & scaled\_ANCI & TRUE & -0.008 & 0.035 & 0.079 & 0.055 & 0.047 & 0.063 & 0.879 & 0.844 & 0.836 & 0.863 \\ 
  I60J5 & scaled\_ANCIII & TRUE & -0.006 & 0.035 & 0.094 & 0.054 & 0.047 & 0.073 & 0.885 & 0.838 & 0.824 & 0.872 \\ 
  I60J5 & scaled\_unadj & TRUE & -0.029 & 0.199 & 0.073 & 0.050 & 0.044 & 0.055 & 0.298 & 0.253 & 0.239 & 0.263 \\ 
  I90J5 & scaled\_ANCI & TRUE & -0.002 & 0.022 & 0.061 & 0.046 & 0.044 & 0.053 & 0.977 & 0.964 & 0.961 & 0.972 \\ 
  I90J5 & scaled\_ANCIII & TRUE & -0.000 & 0.022 & 0.066 & 0.041 & 0.039 & 0.058 & 0.978 & 0.963 & 0.960 & 0.972 \\ 
  I90J5 & scaled\_unadj & TRUE & -0.019 & 0.123 & 0.063 & 0.053 & 0.051 & 0.056 & 0.381 & 0.349 & 0.338 & 0.357 \\ 
  I11J10 & scaled\_ANCI & FALSE & -0.052 & 0.245 & 0.201 & 0.065 & 0.046 & 0.222 & 0.256 & 0.102 & 0.056 & 0.273 \\ 
  I11J10 & scaled\_ANCIII & FALSE & -0.046 & 0.242 & 0.240 & 0.064 & 0.040 & 0.274 & 0.284 & 0.094 & 0.068 & 0.341 \\ 
  I11J10 & scaled\_unadj & FALSE & -0.078 & 0.730 & 0.211 & 0.112 & 0.084 & 0.211 & 0.235 & 0.132 & 0.086 & 0.235 \\ 
  I12J5 & scaled\_ANCI & FALSE & -0.153 & 0.600 & 0.180 & 0.049 & 0.034 & 0.084 & 0.207 & 0.061 & 0.039 & 0.100 \\ 
  I12J5 & scaled\_ANCIII & FALSE & -0.147 & 0.586 & 0.206 & 0.042 & 0.025 & 0.115 & 0.219 & 0.045 & 0.035 & 0.121 \\ 
  I12J5 & scaled\_unadj & FALSE & -0.170 & 1.203 & 0.167 & 0.070 & 0.036 & 0.056 & 0.168 & 0.067 & 0.049 & 0.065 \\ 
  I30J5 & scaled\_ANCI & FALSE & -0.035 & 0.167 & 0.097 & 0.049 & 0.042 & 0.058 & 0.146 & 0.091 & 0.078 & 0.110 \\ 
  I30J5 & scaled\_ANCIII & FALSE & -0.035 & 0.168 & 0.103 & 0.042 & 0.036 & 0.078 & 0.155 & 0.083 & 0.070 & 0.128 \\ 
  I30J5 & scaled\_unadj & FALSE & -0.033 & 0.400 & 0.096 & 0.057 & 0.048 & 0.065 & 0.104 & 0.072 & 0.064 & 0.077 \\ 
  I60J5 & scaled\_ANCI & FALSE & -0.019 & 0.085 & 0.086 & 0.063 & 0.057 & 0.076 & 0.189 & 0.150 & 0.142 & 0.168 \\ 
  I60J5 & scaled\_ANCIII & FALSE & -0.018 & 0.085 & 0.091 & 0.066 & 0.063 & 0.083 & 0.203 & 0.152 & 0.140 & 0.180 \\ 
  I60J5 & scaled\_unadj & FALSE & -0.021 & 0.197 & 0.072 & 0.052 & 0.048 & 0.054 & 0.115 & 0.096 & 0.087 & 0.098 \\ 
  I90J5 & scaled\_ANCI & FALSE & -0.008 & 0.049 & 0.058 & 0.045 & 0.044 & 0.048 & 0.218 & 0.192 & 0.187 & 0.208 \\ 
  I90J5 & scaled\_ANCIII & FALSE & -0.008 & 0.049 & 0.064 & 0.048 & 0.045 & 0.057 & 0.227 & 0.190 & 0.183 & 0.218 \\ 
  I90J5 & scaled\_unadj & FALSE & -0.015 & 0.125 & 0.062 & 0.053 & 0.051 & 0.055 & 0.127 & 0.109 & 0.104 & 0.113 \\ 
   \hline
\end{tabular}
    \caption{Simulation results for scaled cluster-period ANCOVA estimators under different cluster-period configurations. `inform' indicates whether cluster size is informative or not. `bias' and `mse' are the Monte-Carlo bias and mean squared error over 1000 iterations. The `type.I' columns count the proportion of iterations where testing at the true $\lambda_0$ resulted in a rejection using specific variance estimators - `CR0', `CR3', `CR3' comparing to a $t$-distribution, and design-based from \cite{Chen2025}. The `power' columns count the proportion of iterations where the false null $\lambda = 0$ was rejected.}
    \label{tab: scaled ancova results}
\end{sidewaystable}

\section{Deferred results regarding practical recommendations and the data analysis}
\subsection{Practical recommendations flowchart}
Figure \ref{fig: recommendation flowchart} illustrates our practical recommendations for the analysis of stepped-wedge designs with individual-level noncompliance. 
\begin{figure}
    \centering    \includegraphics[width=0.8\linewidth]{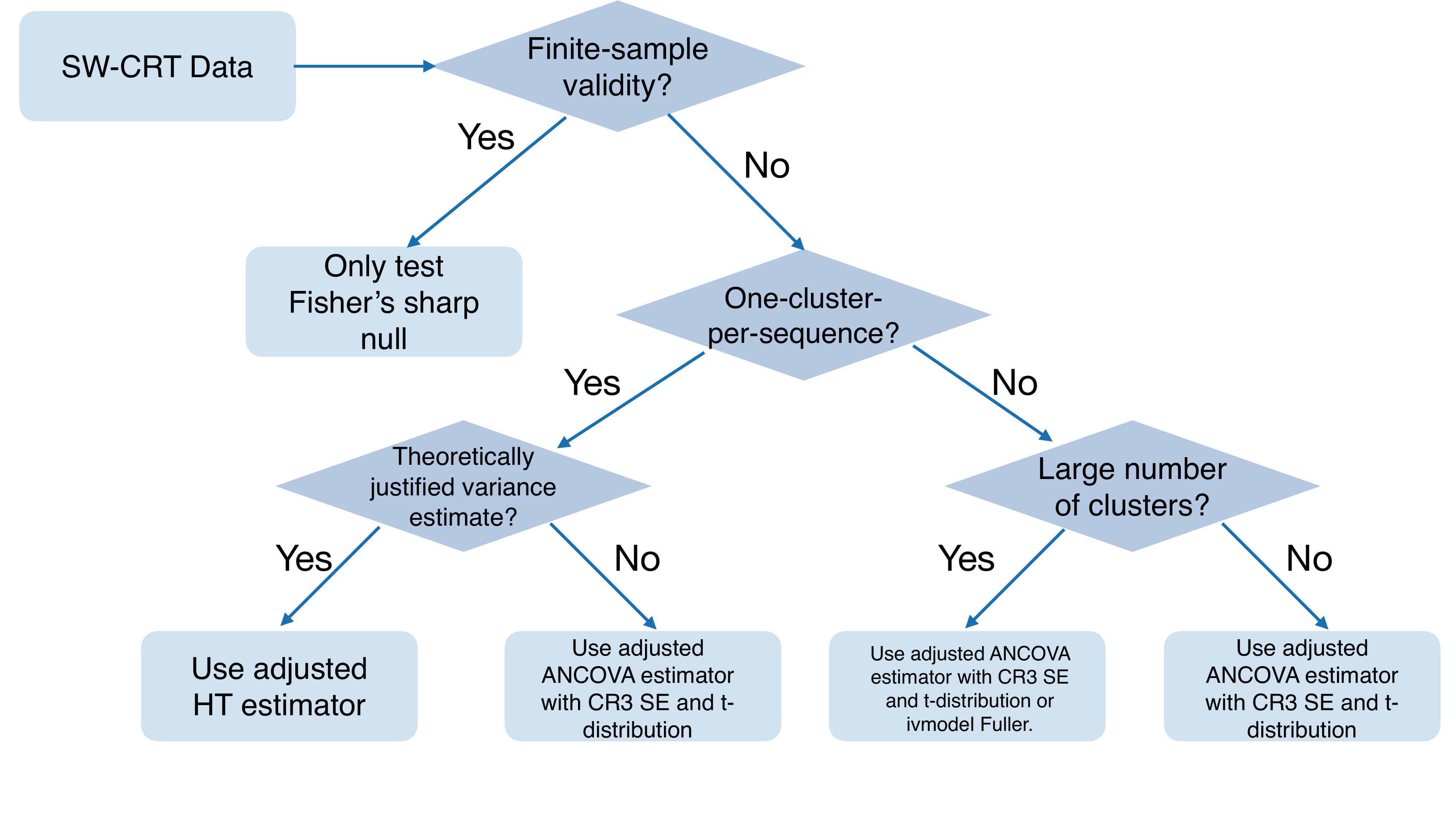}
    \caption{A flowchart detailing estimator and variance estimate recommendations for stepped-wedge designs with individual-level noncompliance.}
    \label{fig: recommendation flowchart}
\end{figure}
\subsection{Heuristic check for treatment duration irrelevance}
Table \ref{tab: test duration irrelevance} reports of the heuristic check coding time on intervention/control as continuous.
\begin{table}[ht]
    \centering
    \caption{Results of tests of the association between time on intervention/control and $D$/$Y$ using logistic/linear regression adjusted for baseline covariates with CR3 cluster-robust standard errors, treating time on intervention/control as a continuous variable.}
    \label{tab: test duration irrelevance}
    \begin{tabular}{ccccccc}
        \toprule
        $D/Y$ & intervention/control & $\beta$ & SE & $t$-statistic & $p$-value \\ 
        \midrule
        $D$ & intervention &  0.085 & 0.087 & 0.983 &  0.505  \\ 
        $Y$ & intervention &  0.001 & 0.004 &  0.175 & 0.900 \\ 
        $D$ & control  & 0.006 & 0.033 &  0.171 & 0.873  \\ 
        $Y$ & control & 0.001 & 0.007 & 0.092 &  0.932 \\ 
        \bottomrule
    \end{tabular}
\end{table}
\subsection{Testing sharp nulls using cluster-period ANCOVA deviates}
\begin{table}[htbp]
    \centering
    \small
    \begin{tabular}{lcc|cc}
        \toprule
        & \multicolumn{2}{c}{Ranked length of stay} & \multicolumn{2}{c}{Readmission within 30 days count} \\
        \cmidrule(lr){2-3} \cmidrule(lr){4-5}
        & 1-sided $p$-value & 2-sided $p$-value & 1-sided $p$-value & 2-sided $p$-value  \\
        \midrule
        avg unadj. ANCOVA dev. & 0.124 & 0.228 &  0.107 & 0.159 \\ 
        avg ANCOVA-I dev. & 0.064 & 0.135 & 0.118 &  0.251  \\ avg ANCOVA-III dev. & 0.160 & 0.317 & 0.011 & 0.024 \\ \midrule
        scaled unadj. ANCOVA dev. & 0.189 & 0.372 &  0.160 & 0.317 \\ 
        scaled ANCOVA-I dev. & 0.103 & 0.213 & 0.162 &  0.332  \\ scaled ANCOVA-III dev. & 0.150 & 0.300 & 0.210 & 0.420\\
        \bottomrule
    \end{tabular}
    \caption{Testing the sharp null in the REDAPS trial using average and scaled ANCOVA deviates with designed-based variance estimators: primary ranked length of stay and secondary readmission count within 30 days outcomes.}
    \label{tab: redaps results aggregate sharp null}
\end{table}

\section{Heavy-tailed simulation results}
\label{appendix: heavy tail sims}
This section collects simulation results under a heavy-tailed setting. The settings considered are strict subset of the simulation settings considered in Section \ref{section: sims}, with the independent $N(0,1)$ noise variables replaced with independent noise drawn from a $t$-distribution with 4 degrees of freedom. Tables \ref{tab: ancova results heavy tail}-\ref{tab: scaled ancova results heavy tail} collect the comprehensive results for the ANCOVA, Horvitz-Thompson, average ANCOVA, and scaled ANCOVA approaches.
\begin{sidewaystable}[ht]
    \centering
    \footnotesize
    \begin{tabular}{lllrrrrrrrrrr}
        \toprule
        IJ & method & inform & bias & mse & CR0.type.I & CR3.type.I & CR3.type.I.t & DB.type.I & CR0.power & CR3.power & CR3.power.t & DB.power \\ 
        \hline
I11J10 & ANCI & TRUE & 0.034 & 0.169 & 0.273 & 0.122 & 0.095 & 0.290 & 0.630 & 0.425 & 0.335 & 0.645 \\ 
  I11J10 & ANCIII & TRUE & 0.032 & 0.133 & 0.223 & 0.075 & 0.040 & 0.258 & 0.660 & 0.407 & 0.292 & 0.682 \\ 
  I11J10 & unadj & TRUE & -0.050 & 0.498 & 0.268 & 0.133 & 0.102 & 0.268 & 0.417 & 0.263 & 0.177 & 0.417 \\ 
  I12J5 & ANCI & TRUE & -0.018 & 0.152 & 0.160 & 0.048 & 0.020 & 0.072 & 0.490 & 0.260 & 0.205 & 0.338 \\ 
  I12J5 & ANCIII & TRUE & -0.002 & 0.164 & 0.180 & 0.033 & 0.022 & 0.107 & 0.547 & 0.235 & 0.182 & 0.380 \\ 
  I12J5 & unadj & TRUE & -0.044 & 0.567 & 0.200 & 0.068 & 0.043 & 0.072 & 0.292 & 0.125 & 0.077 & 0.138 \\ 
  I30J5 & ANCI & TRUE & -0.018 & 0.064 & 0.100 & 0.043 & 0.037 & 0.068 & 0.685 & 0.557 & 0.520 & 0.603 \\ 
  I30J5 & ANCIII & TRUE & 0.011 & 0.064 & 0.100 & 0.055 & 0.052 & 0.072 & 0.770 & 0.623 & 0.590 & 0.700 \\ 
  I30J5 & unadj & TRUE & 0.004 & 0.231 & 0.102 & 0.043 & 0.037 & 0.052 & 0.345 & 0.245 & 0.220 & 0.275 \\ 
  I60J5 & ANCI & TRUE & -0.003 & 0.033 & 0.077 & 0.048 & 0.040 & 0.062 & 0.910 & 0.897 & 0.887 & 0.902 \\ 
  I60J5 & ANCIII & TRUE & 0.002 & 0.031 & 0.062 & 0.037 & 0.030 & 0.050 & 0.915 & 0.895 & 0.885 & 0.907 \\ 
  I60J5 & unadj & TRUE & -0.018 & 0.104 & 0.048 & 0.040 & 0.035 & 0.040 & 0.505 & 0.422 & 0.400 & 0.445 \\ 
   \hline
\end{tabular}
    \caption{Simulation results for ANCOVA estimators under different cluster-period configurations with heavy-tailed outcomes and informative cluster size. `bias' and `mse' are the Monte-Carlo bias and mean squared error over 400 iterations. The `type.I' columns count the proportion of iterations where testing at the true $\lambda_0$ resulted in a rejection using specific variance estimators - `CR0', `CR3', `CR3' comparing to a $t$-distribution, and design-based from \cite{Chen2025}. The `power' columns count the proportion of iterations where the false null $\lambda = 0$ was rejected.}
    \label{tab: ancova results heavy tail}
\end{sidewaystable}

\begin{sidewaystable}[ht]
    \centering
    \begin{tabular}{lllrrrrrr}
  \hline
IJ & method & inform & bias & mse & type.I & cons.type.I & power & cons.power\\ 
  \hline
I11J10 & reg\_adj\_HT & TRUE & -0.015 & 0.181 & 0.022 & 0.000 & 0.128 & 0.025 \\ 
  I11J10 & reg\_adj\_full\_HT & TRUE & 0.002 & 0.136 & 0.297 & 0.160 & 0.700 & 0.530 \\ 
  I11J10 & unadj\_HT & TRUE & -0.045 & 0.553 & 0.167 & 0.000 & 0.429 & 0.003 \\ 
  I12J5 & reg\_adj\_HT & TRUE & -0.023 & 0.186 & 0.015 & 0.005 & 0.133 & 0.040 \\ 
  I12J5 & reg\_adj\_full\_HT & TRUE & -0.006 & 0.147 & 0.215 & 0.102 & 0.655 & 0.482 \\ 
  I12J5 & unadj\_HT & TRUE & -0.202 & 1.508 & 0.500 & 0.013 &  & 0.043 \\ 
  I30J5 & reg\_adj\_HT & TRUE & -0.012 & 0.074 & 0.013 & 0.003 & 0.278 & 0.077 \\ 
  I30J5 & reg\_adj\_full\_HT & TRUE & -0.007 & 0.064 & 0.193 & 0.060 & 0.787 & 0.652 \\ 
  I30J5 & unadj\_HT & TRUE & -0.084 & 0.426 &  & 0.020 &  & 0.080 \\ 
  I60J5 & reg\_adj\_HT & TRUE & -0.002 & 0.035 & 0.007 & 0.000 & 0.588 & 0.163 \\ 
  I60J5 & reg\_adj\_full\_HT & TRUE & -0.011 & 0.031 & 0.135 & 0.050 & 0.955 & 0.890 \\ 
  I60J5 & unadj\_HT & TRUE & -0.009 & 0.173 &  & 0.005 &  & 0.130 \\ 
   \hline
\end{tabular}
    \caption{Simulation results for Horvitz-Thompson estimators under different cluster-period configurations with heavy-tailed outcomes and informative cluster size. Blanks indicate settings where the unadjusted HT estimator produced ill-defined variance estimates over most of the simulation runs. reg\_adj\_full is the adjusted HT estimator using all data for adjustment rather than only the pre and post rollout (reg\_adj). `bias' and `mse' are the Monte-Carlo bias and mean squared error over 400 iterations. The `type.I' columns count the proportion of iterations where testing at the true $\lambda_0$ resulted in a rejection using the provable conservative and not necessarily conservative variance estimators. The `power' columns count the proportion of iterations where the false null $\lambda = 0$ was rejected.}
    \label{tab: ht results heavy tail}
\end{sidewaystable}

\begin{sidewaystable}[ht]
    \centering
    \footnotesize
    \begin{tabular}{lllrrrrrrrrrr}
        \toprule
        IJ & method & inform & bias & mse & CR0.type.I & CR3.type.I & CR3.type.I.t & DB.type.I & CR0.power & CR3.power & CR3.power.t & DB.power \\ 
        \hline
I11J10 & avg\_ANCI & TRUE & -0.020 & 0.178 & 0.285 & 0.117 & 0.077 & 0.325 & 0.615 & 0.410 & 0.335 & 0.632 \\ 
  I11J10 & avg\_ANCIII & TRUE & -0.008 & 0.161 & 0.320 & 0.098 & 0.052 & 0.350 & 0.645 & 0.380 & 0.278 & 0.660 \\ 
  I11J10 & avg\_unadj & TRUE & -0.042 & 0.355 & 0.212 & 0.085 & 0.052 & 0.212 & 0.407 & 0.198 & 0.140 & 0.407 \\ 
  I12J5 & avg\_ANCI & TRUE & -0.006 & 0.172 & 0.200 & 0.055 & 0.040 & 0.090 & 0.525 & 0.223 & 0.177 & 0.328 \\ 
  I12J5 & avg\_ANCIII & TRUE & 0.000 & 0.161 & 0.233 & 0.043 & 0.025 & 0.125 & 0.562 & 0.207 & 0.152 & 0.412 \\ 
  I12J5 & avg\_unadj & TRUE & -0.072 & 0.487 & 0.147 & 0.077 & 0.050 & 0.087 & 0.260 & 0.125 & 0.080 & 0.135 \\ 
  I30J5 & avg\_ANCI & TRUE & 0.014 & 0.066 & 0.110 & 0.058 & 0.050 & 0.083 & 0.730 & 0.562 & 0.535 & 0.662 \\ 
  I30J5 & avg\_ANCIII & TRUE & -0.006 & 0.068 & 0.117 & 0.048 & 0.037 & 0.090 & 0.752 & 0.600 & 0.550 & 0.680 \\ 
  I30J5 & avg\_unadj & TRUE & 0.007 & 0.199 & 0.083 & 0.043 & 0.037 & 0.048 & 0.333 & 0.235 & 0.212 & 0.270 \\ 
  I60J5 & avg\_ANCI & TRUE & 0.018 & 0.030 & 0.080 & 0.058 & 0.052 & 0.068 & 0.938 & 0.920 & 0.915 & 0.930 \\ 
  I60J5 & avg\_ANCIII & TRUE & -0.003 & 0.033 & 0.080 & 0.052 & 0.052 & 0.062 & 0.925 & 0.880 & 0.863 & 0.912 \\ 
  I60J5 & avg\_unadj & TRUE & 0.003 & 0.110 & 0.080 & 0.055 & 0.050 & 0.060 & 0.475 & 0.422 & 0.415 & 0.438 \\ 
   \hline
\end{tabular}
    \caption{Simulation results for average ANCOVA estimators under different cluster-period configurations with heavy-tailed outcomes and informative cluster size. `bias' and `mse' are the Monte-Carlo bias and mean squared error over 400 iterations. The `type.I' columns count the proportion of iterations where testing at the true $\lambda_0$ resulted in a rejection using specific variance estimators - `CR0', `CR3', `CR3' comparing to a $t$-distribution, and design-based from \cite{Chen2025}. The `power' columns count the proportion of iterations where the false null $\lambda = 0$ was rejected.}
    \label{tab: avg ancova results heavy tail}
\end{sidewaystable}

\begin{sidewaystable}[ht]
    \centering
    \footnotesize
    \begin{tabular}{lllrrrrrrrrrr}
        \toprule
        IJ & method & inform & bias & mse & CR0.type.I & CR3.type.I & CR3.type.I.t & DB.type.I & CR0.power & CR3.power & CR3.power.t & DB.power \\ 
        \hline
I11J10 & scaled\_ANCI & TRUE & -0.015 & 0.160 & 0.230 & 0.090 & 0.052 & 0.242 & 0.627 & 0.343 & 0.270 & 0.625 \\ 
  I11J10 & scaled\_ANCIII & TRUE & -0.014 & 0.161 & 0.300 & 0.072 & 0.040 & 0.297 & 0.603 & 0.292 & 0.225 & 0.620 \\ 
  I11J10 & scaled\_unadj & TRUE & -0.013 & 0.549 & 0.205 & 0.102 & 0.068 & 0.205 & 0.330 & 0.182 & 0.130 & 0.330 \\ 
  I12J5 & scaled\_ANCI & TRUE & -0.029 & 0.195 & 0.140 & 0.045 & 0.033 & 0.072 & 0.427 & 0.185 & 0.140 & 0.255 \\ 
  I12J5 & scaled\_ANCIII & TRUE & -0.004 & 0.224 & 0.242 & 0.060 & 0.030 & 0.105 & 0.492 & 0.185 & 0.130 & 0.365 \\ 
  I12J5 & scaled\_unadj& TRUE & -0.186 & 1.671 & 0.152 & 0.060 & 0.050 & 0.068 & 0.240 & 0.105 & 0.065 & 0.098 \\ 
  I30J5 & scaled\_ANCI & TRUE & 0.021 & 0.078 & 0.095 & 0.048 & 0.037 & 0.068 & 0.645 & 0.502 & 0.477 & 0.580 \\ 
  I30J5 & scaled\_ANCIII & TRUE & -0.036 & 0.075 & 0.087 & 0.028 & 0.022 & 0.052 & 0.588 & 0.453 & 0.420 & 0.545 \\ 
  I30J5 & scaled\_unadj & TRUE & -0.098 & 0.457 & 0.090 & 0.045 & 0.043 & 0.050 & 0.198 & 0.158 & 0.133 & 0.163 \\ 
  I60J5 & scaled\_ANCI & TRUE & -0.003 & 0.035 & 0.058 & 0.033 & 0.030 & 0.040 & 0.868 & 0.830 & 0.810 & 0.848 \\ 
  I60J5 & scaled\_ANCIII & TRUE & 0.013 & 0.041 & 0.100 & 0.075 & 0.068 & 0.085 & 0.880 & 0.835 & 0.825 & 0.863 \\ 
  I60J5 & scaled\_unadj & TRUE & -0.024 & 0.189 & 0.068 & 0.037 & 0.030 & 0.040 & 0.292 & 0.245 & 0.230 & 0.250 \\ 
   \hline
\end{tabular}
    \caption{Simulation results for scaled ANCOVA estimators under different cluster-period configurations with heavy-tailed outcomes and informative cluster size. `bias' and `mse' are the Monte-Carlo bias and mean squared error over 400 iterations. The `type.I' columns count the proportion of iterations where testing at the true $\lambda_0$ resulted in a rejection using specific variance estimators - `CR0', `CR3', `CR3' comparing to a $t$-distribution, and design-based from \cite{Chen2025}. The `power' columns count the proportion of iterations where the false null $\lambda = 0$ was rejected.}
    \label{tab: scaled ancova results heavy tail}
\end{sidewaystable}

\section{Details on the sensitivity analysis}
\label{appendix: sensitivity}
We first introduce some additional notation. Let $\pi_C$ and $\pi_D$ denote the proportion of compliers and defiers in the study population, respectively. That is,
\begin{equation*}
\begin{aligned}
    \pi_C &= \frac{1}{N} \sum_{i=1}^I \sum_{j=1}^J \sum_{k=1}^K \mathbbm{1}(D_{ijk}(1) = 1,  D_{ijk}(0) = 0), \\ \pi_D &= \frac{1}{N} \sum_{i=1}^I \sum_{j=1}^J \sum_{k=1}^K \mathbbm{1}(D_{ijk}(0) = 1,  D_{ijk}(1) = 0).
\end{aligned}
\end{equation*}
It is clear to see that $\tau_D = \pi_C - \pi_D$. Next, we define sample average treatment effects among the complier and defier populations:
\begin{equation*}
\begin{aligned}
    \text{SATE}_C &= \frac{1}{N \pi_C} \sum_{i=1}^I \sum_{j=1}^J \sum_{k=1}^K \mathbbm{1}(D_{ijk}(1) = 1,  D_{ijk}(0) = 0)[Y_{ijk}(1)-Y_{ijk}(0)], \\ \text{SATE}_D &= \frac{1}{N \pi_D} \sum_{i=1}^I \sum_{j=1}^J \sum_{k=1}^K \mathbbm{1}(D_{ijk}(0) = 1,  D_{ijk}(1) = 0)[Y_{ijk}(1)-Y_{ijk}(0)].
\end{aligned}
\end{equation*}
Then by the exclusion restriction from Assumption \ref{assumption: iv}, we see that $\tau_Y = \pi_C \text{SATE}_C  + \pi_D \text{SATE}_D $. Moreover, recall we define the sensitivity parameter to the be the proportion of defiers to compliers, i.e. $\gamma = \pi_D / \pi_C$. We wish to bound $\text{SATE}_C$ with the sensitivity parameter and estimable quantities. Simple rearrangement and utilizing the identities $\pi_D/\pi_C = \gamma$ and $\pi_C-\pi_D = \tau_D$ yields
\begin{equation*}
\begin{aligned}
\text{SATE}_C &= \frac{\tau_Y - \pi_D \text{SATE}_D}{\pi_C} \\ &= \frac{\tau_Y - \gamma / (1-\gamma) \tau_D \text{SATE}_D}{\tau_D/(1-\gamma)}  \\ &= \frac{(1-\gamma)\tau_Y - \gamma \tau_D \text{SATE}_D}{\tau_D} \\ &= \frac{(1-\gamma)\tau_Y}{\tau_D} - \gamma \text{SATE}_D \\ &= (1-\gamma)\lambda - \gamma \text{SATE}_D.
\end{aligned}
\end{equation*}

\section{Miscellaneous remarks}
\label{appendix: misc}
\subsection{Stepped-wedge design graphic}
\begin{figure}[htbp]
    \centering
    \includegraphics[width=\linewidth]{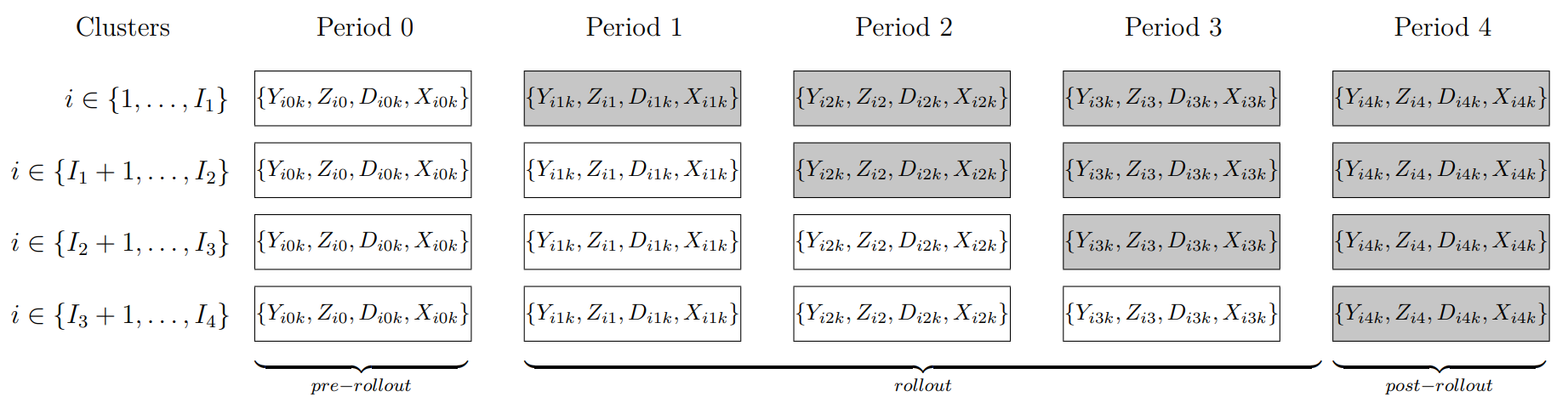}
    \caption{An example schematic and data structure of a standard stepped-wedge design with $I=I_4$ clusters and $J=3$ rollout periods (plus one pre-rollout and one post-rollout). There are a total of $4$ possible treatment adoption time points, and a subset of all clusters, represented by each row, has a unique treatment adoption time. We include in each cell an example observation consisting of outcome, assignment, treatment receipt, and baseline covariates for an individual $k\in \{1,\ldots,N_{ij}\}$. A white cell indicates the control condition, and a shaded cell indicates the treatment condition. }
    \label{fig:stepped}
\end{figure}
\subsection{Simpson's paradox}
\cite{Zhang2022BridgingRatio} demonstrate that the effect ratio (their pooled effect ratio) can exhibit a form of Simpson's paradox in the sense that it is not guaranteed to preserve trends in cluster-level effect ratios. A similar phenomenon cannot be ruled out in the stepped-wedge setting, but under Assumption \ref{assumption: iv}, the effect ratio can be safely interpreted as the sample average treatment effect among the compliers.

\subsection{Adapting the approach of \citet{ren2024model}}

\citet{ren2024model} proposed a computationally efficient way to estimate the variance of Wald-style estimators in the designed-based framework. The method proposed by \citet{ren2024model} is designed for parallel, individually randomized trials and for specific estimators. It appears nontrivial to extend the approach to our setting with clusters, multiple time periods, and vastly different estimators, though such an endeavor is a promising direction for future research.

\end{document}